\newtheorem{thm}{Theorem}
\newtheorem{lem}{Lemma}
\newtheorem{cor}{Corollary}
\newtheorem{definition}{Definition}
\definecolor{darkgreen}{rgb}{0,0.5,0}
\definecolor{purple}{rgb}{1,0,1}
\newcommand{\draftnote}[2]{\ifnum\DraftStatus=1
	\marginpar{
		\tiny\raggedright
		\hbadness=10000
        \def\baselinestretch{0.8}
        \textcolor{#1}{\textsf{\hspace{0pt}#2}}}
     \fi}
\begin{document}

\title{A Differentially Private Game Theoretic Approach for Deceiving Cyber Adversaries}

\author{Dayong Ye, Tianqing Zhu*, Sheng Shen and Wanlei Zhou
\thanks{*Tianqing Zhu is the corresponding author. D. Ye, T. Zhu, S. Shen and W. Zhou are with the Centre for Cyber Security and Privacy and the School of Computer Science, University of Technology, Sydney, Australia. Email: \{Dayong.Ye, Tianqing.Zhu, 12086892, Wanlei.Zhou\}@uts.edu.au}}

\maketitle
\pagestyle{empty}
\thispagestyle{empty}

\begin{abstract}
Cyber deception is one of the key approaches used to mislead attackers
by hiding or providing inaccurate system information.
There are two main factors limiting the real-world application of existing cyber deception approaches.
The first limitation is that the number of systems in a network is assumed to be fixed.
However, in the real world, the number of systems may be dynamically changed.
The second limitation is that attackers' strategies are simplified in the literature.
However, in the real world, attackers may be more powerful than theory suggests.
To overcome these two limitations, 
we propose a novel differentially private game theoretic approach to cyber deception.
In this proposed approach, a defender adopts differential privacy mechanisms to
strategically change the number of systems and obfuscate the configurations of systems,
while an attacker adopts a Bayesian inference approach to infer the real configurations of systems.
By using the differential privacy technique,
the proposed approach can 1) reduce the impacts on network security resulting from changes in the number of systems
and 2) resist attacks regardless of attackers' reasoning power.
The experimental results demonstrate the effectiveness of the proposed approach.
\end{abstract}


\section{Introduction}\label{sec:introduction}
Network security is one of the most important problems faced by enterprises and countries today \cite{Goel16}. 
Before launching a network attack,
malicious attackers often scan systems in a network to identify vulnerabilities
that can be exploited to intrude into the network \cite{Carroll11}.
The aim of this scanning is to understand the configurations of these systems,
including the operating systems they are running,
and their IP/MAC addresses on the network.
Once these questions are answered,
attackers can efficiently formulate plans to attack the network.
In order to prevent attackers from receiving true answers to these questions 
and thus reduce the likelihood of successful attacks,
cyber deception techniques are employed \cite{Alme16}.

Instead of stopping an attack or identifying an attacker,
cyber deception techniques aim to mislead an attacker and induce him to attack non-critical systems
by hiding or lying about the configurations of the systems in a network \cite{Albanese16}.
For example, an important system $s$ with configuration $k$ is obfuscated by the defender to appear as a less important system with configuration $k'$.
Thus, when an attacker scans the network,
he observes system $s$ with configuration $k'$ rather than $k$, as shown in Fig. \ref{fig:simpleExample}.
Since configuration $k'$ is less important than $k$,
the attacker may skip over system $s$.
\begin{figure}[ht]
\vspace{-2mm}
\centering
	\includegraphics[scale=0.7]{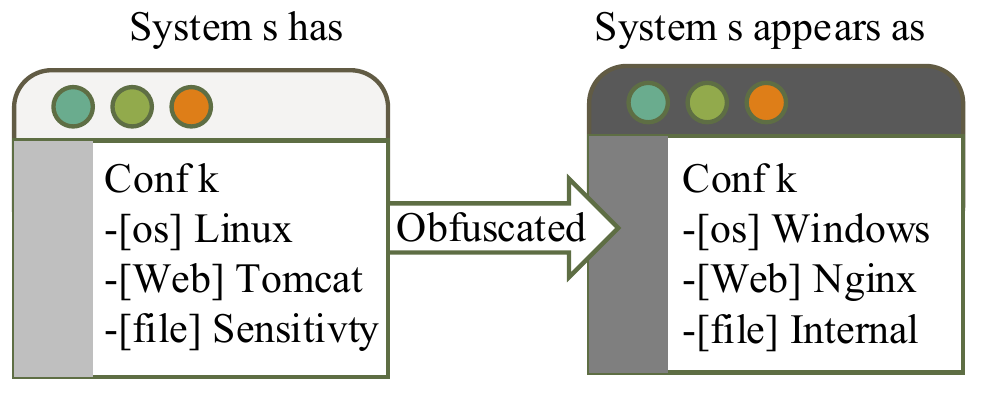}
	\vspace{-2mm}
	\caption{System $s$ with configuration $k$ is obfuscated to and appears as configuration $k'$}
	\label{fig:simpleExample}
	\vspace{-1mm}
\end{figure}

To model such an interaction between an attacker and a defender,
game theory has been adopted as a means of studying cyber deception \cite{Kiennert18,Shaer19}.
Game theory is a theoretical framework to study the decision-making strategies of competing players,
where each player aims to maximize her or his own utility.
In cyber deception, defenders and attackers can be modelled as players.
Game theory, thus, can be used to investigate how a defender reacts to an attacker and vice versa.
Game theoretic formulation overcomes traditional solutions to cyber deception in many aspects,
such as proven mathematics, reliable defense mechanisms and timely actions \cite{Do17}.

Existing game theoretic approaches, however, have two common limitations.
The first limitation is that
the number of systems in a network is assumed to be fixed.
This assumption hinders the applicability of existing approaches to real-world applications,
because in the real world, the number of systems in a network may be dynamically changed.
Although some existing approaches, like \cite{Jajodia17}, are computationally feasible to recompute the strategy
when new systems are added, this change in the number of systems may affect the security of a network \cite{Modi13,Zarp17}.
For example, a network has three systems: $s_1$, $s_2$ and $s_3$.
The three systems are associated with two configurations,
where $s_1$ and $s_2$ are associated with configuration $k_1$,
and $s_3$ is associated with configuration $k_2$.
The attacker knows that the network has three systems and two configurations.
He also knows that configuration $k_1$ has two systems and configuration $k_2$ has one system.
He, however, is unaware of which system is associated with which configuration.
Now when system $s_3$ is removed by the defender,
the attacker knows that configuration $k_1$ has two systems and configuration $k_2$ has zero system.
By comparing the two results,
the attacker can deduce that system $s_3$ is associated with configuration $k_2$.
After deducing system $s_3$'s configuration,
the attacker can also deduce that systems $s_1$ and $s_2$ are associated with configuration $k_1$.
Similarly, when a new system $s_4$ is added and associated with configuration $k_2$,
the attacker can immediately deduce the configuration of system $s_4$
because he realizes that the number of systems in configuration $k_2$ increases by $1$.
Therefore, the problem of how to strategically change the number of systems in a network without sacrificing its security is challenging.

The second limitation is that
attackers' approaches are simplified in the literature through the use of only greedy-like approaches.
Greedy-like approaches create deterministic strategies which are highly predictable to defenders.
Hence, using these approaches may underestimate the ability of real-world attackers
who can use more complex approaches that are much less predictable to defenders.
If a defender's approach is developed based on simplified attackers' approaches,
the defender may be easily compromised by real-world attackers.
Thus, the problem of how to develop an efficient defender's approach against powerful attackers is also challenging \cite{Ding2018}.

Accordingly, to overcome the above two limitations,
in this paper, we propose a novel differentially private game theoretic approach to cyber deception.
By using the differential privacy technique \cite{Dwork06},
the change in the number of systems does not affect the security of a network,
as an attacker cannot determine whether a given system is inside or outside of the network.
Thus, the attacker cannot deduce each system's true configuration. 
Moreover, by using the differential privacy-based approach,
a defender's strategy is unpredictable to an attacker, irrespective of his reasoning power.

In summary, the contribution of this paper is two-fold.

1) To the best of our knowledge, we are the first to apply differential privacy to the cyber deception game in order to overcome the two common limitations mentioned above.
	

2) We theoretically illustrate the properties of the proposed approach,
  and experimentally demonstrate its effectiveness.


\section{Related Work}\label{sec:related work}
In this section, we first review related works about game theory for cyber deception
and next discuss their limitations.
A thorough survey on game theory for general network and cyber security can be found in \cite{Man13,Liang13,Do17,Kiennert18}.


\subsection{Review of related works}
Albanese et al. \cite{Alba15} propose a deceptive approach to defeat an attacker's effort to fingerprint operating systems and services.
For operating system fingerprinting, they manipulate the outgoing traffic to make it resemble traffic generated by a different system.
For service fingerprinting, they modify the service banner by
intercepting and manipulating certain packets before they leave the network.

Albanese et al. \cite{Albanese16} present a graph-based approach to manipulate the attacker's view of a system's attack surface.
They formalize system views and define the distance between these different views.
Based on this formalization and definition, they develop an approach to confuse the attacker by manipulating responses to his probes
in order to induce an external view of the system, which can minimize the costs incurred by the defender.

Jajodia et al. \cite{Jajodia17} develop a probabilistic logic of deception
and demonstrate that these associated computations are NP-hard.
Specifically, they propose two algorithms that allow the defender to generate faked scan results
in different states to minimize the damage caused by the attacker.

Wang and Zeng \cite{Wang18c} propose a two-stage deception game for network protection.
In the first stage, the attacker scans the whole network
and the defender decides how to respond these scan queries
to distract the attacker from high value hosts.
In the second stage, the attacker selects some hosts for further probing
and the defender decides how to answer these probes.
The defender's decisions are formulated as optimization problems solved by heuristic algorithms.

Schlenker et al. \cite{Schlenker18} propose a game theoretic approach to deceive cyber adversaries.
Their approach involves two types of attackers: a rational attacker and a naive attacker.
The rational attacker knows the defender's deception strategy,
while the naive attacker does not.
Two optimal deception algorithms are developed to counter these two types of attackers.

Bilinski et al. \cite{Bilinski19} propose a simplified cyber deception game model.
In their model, there are only two machines,
where one is real and the other is fake.
At each round, the attacker is allowed to ask one machine about its type,
and the machine can either lie or tell the truth about its type.
After $N$ rounds, the attacker chooses one machine to attack.
The attacker will receive a positive payoff if he attacks the real machine.
Otherwise, he will receive a negative payoff.

Other game theoretic approaches have also been developed \cite{Huang19,Cho19}.
These, however, are not closely related to our work,
as they do not particularly focus on obfuscating configurations of systems.
Some game theoretic approaches \cite{Nguyen19,Pawlick19} mainly focus on dealing with the interactions between defenders and attackers,
while other approaches \cite{Thakoor19,Gan19} aim at investigating the behaivors of defenders and attackers.
Another branch of related work deals with the repeated interactions of defender and attacker
over time following the ``FlipIt'' model \cite{Dijk13,Bowers12,Laszka14}.
Their research focuses on stealthy takeovers,
where both the defender and the attacker want to take control of a set of resources by flipping them.
By comparison, our research focuses on how a defender modifies the configuration of systems to deceive an attacker.

\subsection{Discussion of related works}
The above-reviewed works have two common limitations:
1) the number of systems in a network is assumed to be fixed,
and 2) the attacker's approaches are often simplified.

\subsubsection{Fixed number of systems}
In the real-world networks, the number of systems is often changed in a dynamic manner.
Most of existing works, however, are conducted on the basis of fixed number of systems,
though some works are computationally feasible to accommodate the increase of the number of systems.
An intuitive solution is to extend existing works by enabling systems to be dynamically introduced or removed.
However, 
arbitrarily introducing or removing systems
may compromise the security of a network.

\subsubsection{Simplified attackers}
In existing works, attackers are considered to be either naive \cite{Jajodia17}
or to use only greedy-like approaches \cite{Schlenker18}.
A naive attacker always attacks the observed configuration which has the highest utility.
A greedy attacker knows the defender's deception scheme
and attacks the configuration which has the expected highest utility.
The drawback of naive and greedy-like approaches is that
the selection of a configuration is deterministic
and highly predictable to a defender.
As real-world attackers are more powerful than the simplified naive and greedy attackers,
defenders' approaches developed based on simplified attackers' approaches
may not be applicable to the real world.

In this paper, we develop a novel differentially private game theoretic approach
which can strategically change the number of systems in a network without sacrificing its security.
Moreover, to model a powerful attacker, we adopt a Bayesian inference approach.
By using Bayesian inference, the attacker can infer the probability
with which an observed configuration $k'$ could be a real configuration $k$,
and selects a configuration to attack based on the probability distribution over the observed configurations.
Therefore, the selection of a configuration using the Bayesian inference approach is non-deterministic
and hardly predictable to the defender.

\section{Preliminaries}\label{sec:preliminaries}
\subsection{Cyber deception game}\label{sub:game}
The cyber deception game (CDG) is an imperfect-information Stacklberg game
between a defender and an attacker \cite{Schlenker18},
where a player cannot accurately observe the actions of the other player \cite{Do17}.
The defender takes the first actions.
She decides how the systems should respond to scan queries from an attacker by obfuscating the configurations of these systems.
The attacker subsequently follows by choosing which systems to attack based on his scan query results.
The game continues in this alternating manner, 
until a pre-defined number of rounds is reached.
We use the imperfect information game because in this cyber deception game,
the aim of the defender is to hide the configurations of systems from the attacker
rather than stopping an attack or identifying the attacker.
If we use a perfect information game,
the defender’s previous strategies of obfuscating configurations can be accurately observed by the attacker.
Then, the attacker can immediately deduce the real configurations of all the systems.

In this game, we use $N$ to denote the set of systems in a network protected by the defender.
The number of systems is denoted by $|N|$.
Each system has a set of attributes: an operating system, a file system, services hosted, and so on.
These attributes constitute a system configuration.
We use $K$ to denote the set of configurations,
and the number of configurations is denoted by $|K|$.
According to the configurations, the system set $N$ can be partitioned into a number of subsets: $N_1,...,N_{|K|}$,
where $N_i\cap N_j=\emptyset$ for any $i\not= j$, and $N_1\cup...\cup N_{|K|}=N$.
Each of the systems in subset $N_k$ has configuration $k$ and an associated utility $u_k$.
If a system with configuration $k$ is attacked by the attacker,
the attacker receives utility $u_k$ and the defender loses utility $u_k$.
The utility of a system depends on not only the importance of the system
but also on its security level.
A well-chronicled finding of information security is that
attackers may go for the weakest link, i.e., the system with the weakest security level, to have a foothold
and then try to progress from there via privilege escalation \cite{Arce03}.
Detailed theoretical discussion on the weakest link can be found in \cite{Varian03,Bohme10}.

To mislead the attacker, the defender may obfuscate the configurations of these systems.
Obfuscating a system with configuration $k$ to appear as $k'$ incurs the defender a cost $c(k',k)$.
When scanning, the attacker observes the obfuscated configurations of these systems.
If a system with configuration $k$ is obfuscated to appear as $k'$,
the system's real configuration is still $k$ rather than $k'$.
Thus, when this system is attacked by the attacker,
the attacker receives utility $u_k$ instead of $u_{k'}$.

After obfuscation, the system set $N$ becomes $N'$,
which can be partitioned into: $N'_1,...,N'_{|K|}$.
Since $|N|$ may not be equal to $|N'|$,
there may be void systems ($|N'|>|N|$), or some systems may be taken offline ($|N'|<|N|$).
Void systems can be interpreted as honeypots\footnote{Although our work also uses honeypots, these do not permanently exist in the network, but are rather randomly introduced by our differentially private approach. Our main aim is still to focus on the cyber deception.}.
Deploying a honeypot with configuration $k$ incurs the defender a cost $h_k$.
If the attacker attacks the honeypot,
he receives a negative utility $-u_k$.
Usually, $u_k>h_k$, as the defender would not otherwise have the motivation to deploy a honeypot.
By taking a system offline, the defender loses utility $l_k$,
but this system will not be attacked by the attacker.
Again, generally, $u_k>l_k$, as the defender would not otherwise have the motivation to take a system offline.
While the defender aims to minimize her utility loss,
the attacker aims to maximize his utility gain.

The attacker is aware of the defender's obfuscation approach,
and knows the statistical information pertaining to the network:
the number of systems $|N|$, the number of configurations $|K|$, and the number of the systems associated with each configuration $|N_1|,...,|N_{|K|}|$.
The attacker, however, does not know which system is associated with which configuration.
Moreover, as the aim of the defender is to hide the configurations of systems,
the defender's strategies, i.e., when and how to make $k$ appear as $k'$,
are exactly what the defender wants to hide
and cannot be accurately observed by the attacker.
However, the attacker can partially observe the defender’s previous actions.
As in the above example, during each round, the attacker attacks a system which appears as configuration $k'$,
but after the attack, the attacker receives utility $u_k$.
The attacker, thus, knows that the attacked system was obfuscated from $k$ to $k'$ in this round.
The attacker accumulates this information and
uses Bayesian inference to make decisions in future rounds.
Specifically, in Bayesian inference, the posterior probability $q(k'|k)$ takes this partial observation into account by computing how many times $k$ appears as $k'$ in previous rounds.
The details will be given in Section \ref{sec:method}.

\subsection{Differential privacy}\label{sub:DP}
Differential privacy is a prevalent privacy model capable of guaranteeing that
any individual record being stored in or removed from a dataset makes little difference to the analytical output of the dataset \cite{Dwork06}.
Differential privacy has been successfully applied to cyber physical systems \cite{Zhu20}, machine learning \cite{Ye19} and artificial intelligence \cite{Zhu19,Zhu20TKDE}.

In differential privacy, two datasets $D$ and $D'$ are neighboring datasets if they differ by only one record.
A query $f$ is a function that maps dataset $D$ to an abstract range $\mathbb{R}$, $f: D\rightarrow\mathbb{R}$.
The maximal difference in the results of query $f$ is defined as sensitivity $\Delta S$,
which determines how much perturbation is required for the privacy-preserving answer.
The formal definition of differential privacy is presented as follows.



\begin{definition}[$\epsilon$-Differential Privacy \cite{Dwork14}]\label{Def-DP}
A mechanism $\mathcal{M}$ provides $\epsilon$-differential privacy for any pair of neighboring datasets $D$ and $D'$, and for every set of outcomes $\Omega$, if $\mathcal{M}$ satisfies:
\begin{equation}
Pr[\mathcal{M}(D) \in \Omega] \leq \exp(\epsilon) \cdot Pr[\mathcal{M}(D') \in \Omega]
\end{equation}
\end{definition}


\begin{definition}[Sensitivity \cite{Dwork14}]\label{Def-GS} For a query $f:D\rightarrow\mathbb{R}$, the sensitivity of $f$ is defined as
\begin{equation}
\Delta S=\max_{D,D'} ||f(D)-f(D')||_{1}
\end{equation}
\end{definition}

Two of the most widely used differential privacy mechanisms are the Laplace mechanism and the exponential mechanism \cite{Zhu17}.
The Laplace mechanism adds Laplace noise to the true answer.
We use $Lap(b)$ to represent the noise
sampled from the Laplace distribution with scaling $b$.
\begin{definition}[The Laplace Mechanism \cite{Dwork14}]\label{Def-LA}
Given a function $f: D \rightarrow \mathbb{R}$ over a dataset $D$, Equation~\ref{eq-lap} is the Laplace mechanism.
\begin{equation}
\widehat{f}(D)=f(D)+Lap(\frac{\Delta S}{\epsilon})
\end{equation}\label{eq-lap}
\end{definition}
\begin{definition}[The Exponential Mechanism \cite{Dwork14}]\label{Def-Ex}
The exponential mechanism $\mathcal{M}_E$ selects and outputs an element $r\in\mathcal{R}$
with probability proportional to $exp(\frac{\epsilon u(D,r)}{2\Delta u})$,
where $u(D,r)$ is the utility of a pair of dataset and output,
and $\Delta u=\max\limits_{r\in\mathcal{R}} \max\limits_{D,D':||D-D'||_1\leq 1} |u(D,r)-u(D',r)|$
is the sensitivity of utility.
\end{definition}

Table \ref{tab:notation} presents the notations and terms used in this paper.
\begin{table}[!ht]\scriptsize
\vspace{-3mm}
\newcommand{\tabincell}[2]{\begin{tabular}{@{}#1@{}}#2\end{tabular}}
	\centering
	\caption{The meaning of notations used in this paper}
\begin{tabular} {|p{1cm}|p{5cm}|} \hline
\textbf{Notations}&\textbf{Meaning}\\ \hline
$N$ & a set of systems\\\hline
$K$ & a set of configurations\\\hline
$N'$ & a set of obfuscated systems\\\hline
$u_k$ & the utility of a system with configuration $k$\\\hline
$c(k',k)$ & cost to the defender of obfuscating a system from configuration $k$ to appear as $k'$\\\hline
$h_k$ & cost to the defender's of deploying a honeypot with configuration $k$\\\hline
$l_k$ & utility loss incurred by the defender to take a system with configuration $k$ offline\\\hline
$U_d$ & the defender's total expected utility loss\\\hline
$C_d$ & the defender's total cost\\\hline
$B_d$ & the defender's deploy budget\\\hline
$U_a$ & the attacker's total expected utility gain\\\hline
$\Delta S$ & the sensitivity of a query \\\hline
$\Delta u$ & the sensitivity of utility \\\hline
$\epsilon$ & privacy budget \\\hline
\end{tabular}
	\label{tab:notation}
\vspace{-2mm}
\end{table}
\vspace{-2mm}
\section{The differentially private approach}\label{sec:method}
\subsection{Overview of the approach}
We provide an example here to describe the workflow of our approach.
In a network, there are three systems $N=\{s_1,s_2,s_3\}$ and three configurations $K=\{k_1,k_2,k_3\}$.
System $1$, $2$ and $3$ are associated with configuration $1$, $2$ and $3$, respectively:
$N_1=\{s_1\}$, $N_2=\{s_2\}$ and $N_3=\{s_3\}$.
Each configuration is associated with a utility,
such that systems with the same configuration have the same utility.
Once a system is attacked, the defender loses a corresponding utility,
while the attacker gains this utility.
The aim of our approach is to minimize the defender's utility loss
by using the differential privacy technique to hide the real configurations of systems.

At each round of the game, our approach consists of four steps:
Steps 1 and 2 are carried out by the defender,
while Steps 3 and 4 are conducted by the attacker.

\emph{Step 1}: The defender obfuscates the configurations. 
The obfuscation is conducted on the statistical information of the network according to Algorithm \ref{alg:DP}.
In this example, the number of systems in each configuration is $1$, i.e., $|N_1|=|N_2|=|N_3|=1$,
while the total number of systems in the network is $|N|=3$.
After obfuscation, the number of systems in each configuration could be: $|N'_1|=2$, $|N'_2|=1$, $|N'_3|=1$,
while the total number of systems in the network becomes $|N'|=|N'_1|+|N'_2|+|N'_3|=4$.

\emph{Step 2}: As $|N'|-|N|=1$, an extra system is introduced into the network.
This extra system is interpreted as a honeypot, denoted as $hp_4$.
The defender uses Algorithm \ref{alg:deployment2} to deploy the systems and the honeypot to the configurations.
The deployment result could be: $N'_1=\{s_3,hp_4\}$, $N'_2=\{s_1\}$ and $N'_3=\{s_2\}$,
which is shown to the attacker as demonstrated in Fig. \ref{fig:obfuscation}.
\begin{figure}[ht]
\vspace{-3mm}
	\includegraphics[width=0.49\textwidth, height=2.5cm]{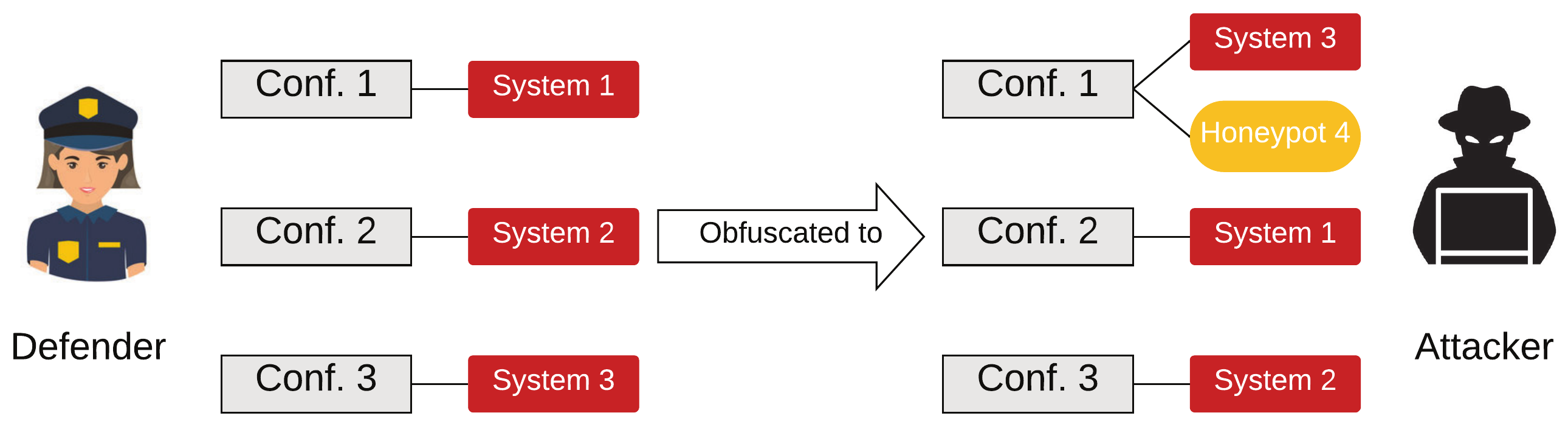}
	\caption{The defender obfuscates configurations of systems}
	\label{fig:obfuscation}
\vspace{-2mm}
\end{figure}


\emph{Step 3}: The attacker estimates the probability with which an observed configuration $k'$ could be a real configuration $k$ using Bayesian inference (Equation \ref{eq:Bayes}),
i.e., estimating:

$q(k_1|k_1)$, $q(k_2|k_1)$ and $q(k_3|k_1)$;

$q(k_1|k_2)$, $q(k_2|k_2)$ and $q(k_3|k_2)$;

$q(k_1|k_3)$, $q(k_2|k_3)$ and $q(k_3|k_3)$.

\emph{Step 4}: Based on the estimation, the attacker uses Equation \ref{eq:utilitygain} to calculate the expected utility gain of selecting each configuration.
Finally, the attacker selects a specific configuration as the target
based on a probability distribution over configurations (Equation \ref{eq:selection}).

\subsection{The defender's strategy}
At each round of the game, the defender first obfuscates the configurations using the differentially private Laplace mechanism,
and deploys systems according to these configurations.

\subsubsection{Step 1: obfuscating configurations}
The obfuscation is described in Algorithm \ref{alg:DP}.
In Line 5, Laplace noise is added to each $|N_k|$ to obfuscate the number of systems with configuration $k$.
In Line 5, $\Delta S$ denotes the sensitivity of the maximum number of systems with a specific configuration.
As this maximum number has a direct impact on the configuration arrangement of these systems,
the sensitivity $\Delta S$ is determined by the maximum number.
Based on the definition of sensitivity, $\Delta S=1$ in this algorithm.

The rationale underpinning Algorithm \ref{alg:DP} is as follows.
Differential privacy is originally designed to preserve data privacy in datasets.
It can guarantee that an individual data record in or out of a dataset has little impact on the analytical output of the dataset.
In other words, an attacker cannot infer whether a data record is in the dataset by making queries to the dataset.
Here we use differential privacy to preserve the privacy of systems in networks.
The privacy of systems in this paper means the configurations of systems.
To map from differential privacy to its meaning in the network configuration,
we treat a network as a dataset and treat a system in the network as a data record in the dataset.
Since differential privacy can guarantee that
an attacker cannot infer whether a given data record is in a dataset,
it can also guarantee that an attacker cannot infer whether a given system is in a network.
Specifically, as we add differentially private Laplace noise on the number of systems in each configuration,
the attacker cannot infer whether a given system is really associated with the shown configuration.
Thus, the system's real configuration is preserved.

\vspace{-2mm}
\begin{algorithm}
\caption{The obfuscation of configurations}
\label{alg:DP}
\textbf{Input}: $|N_1|,...,|N_{|K|}|$;\\
Partition the system set $N$ into $N_1,...,N_{|K|}$ based on the configurations;\\
Calculate the size of each subset: $|N_1|,...,|N_{|K|}|$;\\
\For{$k=1$ to $|K|$}{
    $|N'_k|\leftarrow |N_k|+\lceil Lap(\frac{\Delta S\cdot |K|}{\epsilon})\rceil$;\\
}
\textbf{Output}: $|N'_1|,...,|N'_{|K|}|$;\\
\end{algorithm}
\vspace{-2mm}
After Algorithm \ref{alg:DP} is executed,
the defender receives an output: $|N'_1|,...,|N'_{|K|}|$.
Let $|N'|=\sum_{1\leq k\leq |K|}|N'_k|$,
so that there are three possible situations: $|N'|=|N|$, $|N'|>|N|$ or $|N'|<|N|$.
When $|N'|>|N|$, there are $|N'|-|N|$ void systems,
which can be interpreted as honeypots.
When $|N'|<|N|$, there are $|N|-|N'|$ systems taken offline.

\subsubsection{Step 2: deploying systems}
System deployment will be conducted according to three possible situations: $|N'|=|N|$, $|N'|>|N|$ or $|N'|<|N|$.
This deployment is based not only on the utility of these systems,
but also on the attacker's attack strategy in the previous rounds.
The aim of this deployment is to minimize the defender's expected utility loss
while satisfying the defender's budget constraint $B_d$.

\emph{Situation 1}: When $|N'|=|N|$, the defender's expected utility loss is shown in Equation \ref{eq:utilityloss1},
\begin{equation}\label{eq:utilityloss1}
  U_d=\sum_{1\leq k\leq |K|}(p(k)\cdot\sum_{1\leq i\leq |N_k|} u_i),
\end{equation}
where $p(k)$ is the estimated probability that the attacker will attack the systems with configuration $k$ in the next round.
The estimated probability is the ratio between the number of times that $k$ was attacked to the total number of game rounds.
For example, if the game has been played for $10$ rounds and systems with configuration $k$ have been attacked $3$ times,
the estimated probability is $p(k)=\frac{3}{10}=0.3$.
Initially, before the first round, the probability distribution over the configurations is uniform.
The probability distribution is then updated every $10$ rounds using the method described in the example.
Since the game is an imperfect information game,
the defender is unaware of the real attack probability of the attacker.
The defender can use only her past experience to make a strategy, i.e., past observation of the attacker’s strategies.

During the deployment, the defender's cost is shown in Equation \ref{eq:dcost1},
\begin{equation}\label{eq:dcost1}
  C_d=\sum_{1\leq i\leq |N|}I_i\cdot(c(k',k)),
\end{equation}
where $I_i=1$ if system $i$ is obfuscated from configuration $k$ to be perceived as $k'$, and $I_i=0$ otherwise.

The problem faced by the defender is to identify the deployment configuration
that can minimize her utility loss $U_d$ while satisfying her budget constraint: $C_d\leq B_d$.
Algorithm \ref{alg:deployment1} is developed to solve this problem.

\begin{algorithm}
\caption{Deployment of configurations, $|N'|=|N|$}
\label{alg:deployment1}
\textbf{Input}: $N$ and $|N'_1|,...,|N'_{|K|}|$;\\
Let $|N'_1|=n'_1$, ..., $|N'_{|K|}|=n'_{|K|}$;\\
Initialize $N'_1=...=N'_{|K|}=\emptyset$;\\
Initialize $C_d=0$;\\
Rank $u_1$, ..., $u_{|N|}$ in an increasing order, and supposing that the result is $u_1\leq...\leq u_{|N|}$;\\
\For{$i=1$ to $|N|$}{
    Select a configuration $k'$ with probability proportional to $exp(\frac{\frac{\epsilon}{|N|} U_{k'}}{2\Delta U_d})$;\\
    \If{$C_d+c(k',k)>B_d$}{
        \textbf{continue};\\
    }
    \If{$|N'_{k'}|<n'_{k'}$}{
        $N'_{k'}\leftarrow N'_{k'}\cup \{i\}$;\\
        $C_d\leftarrow C_d+c(k',k)$;\\
    }\Else{
        \textbf{goto} Line 7;\\
    }
}
\textbf{Output}: $N'_1,...,N'_{|K|}$;\\
\end{algorithm}

The idea behind Algorithm \ref{alg:deployment1} involves preferentially obfuscating low-utility systems, with specific probabilities,
to be perceived as those configurations that have a high probability of being attacked.
To a large extent, this idea can prevent high-utility systems from being attacked.
It seems that the high-utility systems may not be protected by the exponential mechanism,
because 1) the exponential mechanism preferentially obfuscates low-utility systems
and 2) the budget constraint $B_d$ may limit the number of obfuscated systems.
However, the exponential mechanism still gives high-utility systems the probability to be obfuscated,
since the exponential mechanism is probabilistic rather than deterministic.
In addition, the budget constraint is a parameter which is set by users.
A larger budget constraint means a larger number of obfuscated systems.

The algorithm starts by ranking the utility of the systems in increasing order (Line 5).
Next, each of the ranked systems $i$ is obfuscated to appear as a configuration, $k'$,
selected using the exponential mechanism,
until all systems are obfuscated or the defender's budget is used up (Lines 6-14).

In the exponential mechanism in Line 7,
$U_{k'}=p(k')\cdot\sum_{1\leq i\leq |N_{k'}|} u_i$ is the defender's expected utility loss on the systems with configuration $k'$.
Moreover, $\Delta U_d=max_{1\leq i\leq |N|}u_i$ is the sensitivity of the defender's expected utility loss,
which is used to to obfuscate system configurations. 
A system configuration with a higher expected utility loss has a larger probability to be obfuscated.
The expected utility loss is used only as a parameter in the exponential mechanism,
and it does not change any properties of the exponential mechanism.
Therefore, the use of the expected utility loss does not violate differential privacy.

The rationale of using the exponential mechanism is described as follows.
According to the definitions of differential privacy,
an obfuscated network $N'$ is interpreted as a dataset $D$,
while a configuration $k'$ in network $N'$ is interpreted as an output $r$ of dataset $D$.
Thus, the utility of a pair of dataset and output, $u(D,r)$,
is equivalent to the defender's expected utility loss of selecting configuration $k'$ from network $N'$, i.e., $U(N',k')$ denoted as $U_{k'}$.
By comparing Definition \ref{Def-Ex} to Line 7 in Algorithm \ref{alg:deployment1},
we can conclude that as the exponential mechanism can guarantee the privacy of the output of a dataset,
it can also guarantee the privacy of the configurations of the systems in a network.

\emph{Situation 2}. When $|N'|>|N|$, there are $|N'|-|N|$ honeypots in the network.
The defender's expected utility loss is shown in Equation \ref{eq:utilityloss2},
\begin{equation}\label{eq:utilityloss2}
  U_d=\sum_{1\leq k\leq |K|}(p(k)\cdot\sum_{1\leq i\leq |N_k|} u_i)-\sum_{1\leq j\leq |N'|-|N|} p(k)\cdot u_j,
\end{equation}
where the second part, $\sum_{1\leq j\leq |N'|-|N|} p(k)\cdot u_j$, indicates the expected utility gain obtained by using honeypots.

The defender's cost is shown in Equation \ref{eq:dcost2},
\begin{equation}\label{eq:dcost2}
  C_d=\sum_{1\leq i\leq |N|}I_i\cdot(c(k',k))+\sum_{1\leq j\leq |N'|-|N|}I_j\cdot h_{k'},
\end{equation}
where the second part, $\sum_{1\leq j\leq |N'|-|N|}I_j\cdot h_{k'}$, indicates the extra cost incurred by deploying honeypots.
Moreover, the privacy budget $\epsilon$ is proportionally increased to $\frac{|N'|}{|N|}\cdot\epsilon$ to cover the extra honeypots.
We use this proportional change of privacy budget, because in Algorithm \ref{alg:deployment2},
the amount of privacy budget is consumed identically in each iteration.

\begin{algorithm}
\caption{Deployment of configurations, $|N'|>|N|$}
\label{alg:deployment2}
\textbf{Input}: $N$ and $|N'_1|,...,|N'_{|K|}|$;\\
Let $|N'_1|=n'_1$, ..., $|N'_{|K|}|=n'_{|K|}$;\\
Let $x_k=p(k)\cdot\sum_{1\leq i\leq |N_k|} u_i$;\\
Initialize $N'_1=...=N'_{|K|}=\emptyset$;\\
Initialize $C_d=0$;\\
Rank $u_1$, ..., $u_{|N|}$ in increasing order, and supposing that the result is $u_1\leq...\leq u_{|N|}$;\\
\For{$i=1$ to $|N'|-|N|$}{
    Select a configuration $k'$ with probability proportional to $exp(\frac{\frac{\epsilon}{|N'|-|N|} U_{k'}}{2\Delta U_d})$;\\
    $C_d\leftarrow C_d+h_{k'}$;\\
    \If{$C_d\leq B_d$}{
        Create a honeypot $i$ with configuration $k'$;\\
        $N'_{k'}\leftarrow N'_{k'}\cup\{i\}$;\\
    }\Else{
        \textbf{goto} Line 8;\\
    }
}
\For{$i=1$ to $|N|$}{
    Select a configuration $k'$ with probability proportional to $exp(\frac{\frac{\epsilon}{|N|} U_{k'}}{2\Delta U_d})$;\\
    \If{$C_d+c(k',k)>B_d$}{
        \textbf{continue};\\
    }
    \If{$|N'_{k'}|<n'_{k'}$}{
        $N'_{k'}\leftarrow N'_{k'}\cup \{i\}$;\\
        $C_d\leftarrow C_d+c(k',k)$;\\
    }\Else{
        \textbf{goto} Line 16;\\
    }
}
\textbf{Output}: $N'_1,...,N'_{|K|}$;\\
\end{algorithm}

The problem faced by the defender is the same as in \emph{Situation 1}.
Algorithm \ref{alg:deployment2} is developed to solve this problem.
In Algorithm \ref{alg:deployment2}, the defender first creates honeypots (Lines 7-14)
and then obfuscates systems (Lines 15-23).
In Line 8, $U_{k'}=p(k')u_{k'}$ while in Line 16, $U_{k'}=p(k')\cdot\sum_{1\leq i\leq |N_{k'}|} u_i$.
In both lines, $\Delta U_d=max_{1\leq i\leq |N|}u_i$.
The defender favors honeypots over obfuscation,
as the use of the former may result in her incurring a utility gain.
Honeypots are created based on the estimated probability with which a configuration will be attacked.
Configurations with the highest estimated probability of being attacked will take priority for use to configure honeypots.

\emph{Situation 3}. When $|N'|<|N|$, there are $|N|-|N'|$ systems taken offline.
The defender's expected utility loss is shown in Equation \ref{eq:utilityloss3},
\begin{equation}\label{eq:utilityloss3}
  U_d=\sum_{1\leq k\leq |K|}\sum_{1\leq i\leq |N_k|}[(p(k)\cdot J_i\cdot u_i)+(J_i-1)^2\cdot l_k],
\end{equation}
where $J_i=0$ if system $i$ is taken offline, and $J_i=1$ otherwise.
The second part, $\sum_{1\leq i\leq |N|-|N'|} (J_i-1)^2\cdot l_k$, indicates the extra utility loss incurred by taking systems offline.
The defender's cost is shown in Equation \ref{eq:dcost3},
\begin{equation}\label{eq:dcost3}
  C_d=\sum_{1\leq i\leq |N|}I_i\cdot(c(k',k)).
\end{equation}
Moreover, akin to \emph{Situation 2}, the privacy budget $\epsilon$ is proportionally decreased to $\frac{|N'|}{|N|}\cdot\epsilon$,
as the systems taken offline need not be protected.
Thus, this part of the privacy budget can be conserved.

The problem for the defender is the same as in \emph{Situation 1}.
Algorithm \ref{alg:deployment3} is developed to solve this problem.
In Algorithm \ref{alg:deployment3}, the defender first takes $|N|-|N'|$ systems offline (Lines 7-9)
and next obfuscates the remaining systems (Lines 10-18).
In Line 11, $U_{k'}=\sum_{1\leq i\leq |N_k|}[(p(k)\cdot J_i\cdot u_i)+(J_i-1)^2\cdot l_k]$
and $\Delta U_d=max_{1\leq i\leq |N|}u_i$.
The defender will be inclined to take those systems offline
that will incur the highest utility loss if attacked.

\begin{algorithm}
\caption{Deployment of configurations, $|N'|<|N|$}
\label{alg:deployment3}
\textbf{Input}: $N$ and $|N'_1|,...,|N'_{|K|}|$;\\
Let $|N'_1|=n'_1$, ..., $|N'_{|K|}|=n'_{|K|}$;\\
Let $x_k=p(k)\cdot\sum_{1\leq i\leq |N_k|} u_i$;\\
Initialize $N'_1=...=N'_{|K|}=\emptyset$;\\
Initialize $C_d=0$;\\
Rank $u_1$, ..., $u_{|N|}$ in increasing order, and supposing that the result is $u_1\leq...\leq u_{|N|}$;\\
\For{$j=1$ to $|N|-|N'|$}{
    $s\leftarrow argMax_{i} [p(k)\cdot u_i-l_k]$;\\
    Take system $s$ offline;\\
}
\For{$i=1$ to $|N'|$}{
    Select a configuration $k'$ with probability proportional to $exp(\frac{\frac{\epsilon}{|N'|} U_{k'}}{2\Delta U_d})$;\\
    \If{$C_d+c(k',k)>B_d$}{
        \textbf{continue};\\
    }
    \If{$|N'_{k'}|<n'_{k'}$}{
        $N'_{k'}\leftarrow N'_{k'}\cup \{i\}$;\\
        $C_d\leftarrow C_d+c(k',k)$;\\
    }\Else{
        \textbf{goto} Line 11;\\
    }
}
\textbf{Output}: $N'_1,...,N'_{|K|}$;\\
\end{algorithm}

\subsection{The attacker's strategy}\label{sub:Bayesian}
In each round of the game, the attacker first estimates the probability
with which an observed configuration $k'$ is a real configuration $k$ using Bayesian inference.
Next, he decides which observed configuration $k'$ should be attacked,
i.e., he will attack those systems associated with configuration $k'$.

\subsubsection{Step 3: probability estimation}
To estimate the probability that an observed configuration $k'$ is a real configuration $k$: $q(k|k')$,
the attacker uses Bayesian inference:

\begin{equation}\label{eq:Bayes}
  q(k|k')=\frac{q(k)\cdot q(k'|k)}{q(k')}.
\end{equation}

In Equation \ref{eq:Bayes}, $q(k)=\frac{|N_k|}{|N|}$, and $q(k')=\frac{|N'_{k'}|}{|N'|}$,
where $q(k)$ is calculated using the attacker's prior knowledge
and $q(k')$ is computed with reference to the attacker's current observations.
In addition, $q(k'|k)$ can be obtained from the attacker's experience, i.e., posterior knowledge.
For example, in previous game rounds, the attacker attacked $10$ systems with configuration $k$.
Among these $10$ systems, four systems are obfuscated to appear as configuration $k'$.
Thus, $q(k'|k)=\frac{4}{10}=0.4$.

\subsubsection{Step 4: target selection}
Based on Equation \ref{eq:Bayes}, the attacker can calculate the expected utility gain of attacking each configuration $k'$:
\begin{equation}\label{eq:utilitygain}
  U^{(k')}_a=\sum_{1\leq k\leq |K|}[q(k|k')\cdot u_k].
\end{equation}
According to Equation \ref{eq:utilitygain}, the attacker uses an $epsilon$-greedy strategy to distribute selection probabilities over the observed configurations:
\begin{equation}\label{eq:selection}
  \mathcal{P}_{k'}=\begin{cases}(1-e)+\frac{e}{|K|}, \mbox{if $U^{(k')}_a$ is the largest}\\
		  \frac{e}{|K|}, \mbox{otherwise}\end{cases}.
\end{equation}
where $e$ is a small positive number in $[0,1]$.
The reason for using the $epsilon$-greedy strategy will be described in \textbf{Remark 2} in Section \ref{sub:attacker analysis}.
Finally, the attacker selects a target $k'$ based on the probability distribution $\boldsymbol{\mathcal{P}}=\{\mathcal{P}_{1},...,\mathcal{P}_{|K|}\}$.

\subsection{Potential application of our approach}
Our approach can be applied against powerful attackers in the real world.
An attacker may execute a suite of scan queries on a network using tools such as NMAP \cite{Lyon09}.
Our DP-based approach returns a mix of true and false results to the attacker’s scan to confuse him.
The false results include not only obfuscated configurations of systems but also additional fake systems, i.e., honey pots.
Moreover, our approach can also take vulnerable systems offline to avoid the attacker’s scan.
For example, if an attacker makes a scan query on one system many times,
he may combine the query results to reason the true configuration of this system.
Thus, this system should be taken offline temporarily.
In summary, our approach can significantly increase the complexity on attackers
to formulate their attack, irrespective of their reasoning power,
so that administrators can have additional time to build defense policies to interdict attackers.

A specific real-world example is an advanced persistent threat (APT) attacker \cite{Tankard11}.
Our approach can be adopted to model interactions between a system manager and an APT attacker.
Typically, an APT attacker continuously scans the vulnerability of the target systems
and studies the defense policy of these systems
so as to steal sensitive information from these systems \cite{Yang19}.
Accordingly, the system manager misleads the APT attacker by obfuscating the true information of these systems,
and formulates a complex defense policy for these systems
which is hardly predictable to the APT attacker.

\section{Theoretical analysis}\label{sec:analysis}
Our theoretical analysis consists of two parts:
analysis of the defender's strategy and analysis of the attacker's strategy.
In the analysis of the defender's strategy, we first analyze the privacy preservation of the defender's strategy.
Then, we analyze the defender's expected utility loss in different situations.
After that, the complexity of the defender's strategy is also analyzed.
In the analysis of the attacker's strategy, we analyze the attacker's optimal strategy and the lower bound of his expected utility gain.
Finally, we give a discussion on the equalibria between the defender and the attacker.

\subsection{Analysis of the defender's strategy}
\subsubsection{Analysis of privacy preservation}
We first prove that the defender's strategy satisfies differential privacy.
Then, we give an upper bound on the number of rounds of the game.
Exceeding this bound, the privacy level of the defender cannot be guaranteed.

\begin{lem}[Sequential Composition Theorem \cite{McSherry200794}]\label{comp1}
Suppose a set of privacy steps
$\mathcal{M}=\{\mathcal{M}_1,...,\mathcal{M}_m\}$
are sequentially performed on a dataset,
and each $\mathcal{M}_i$ provides $\epsilon_i$ privacy guarantee,
then $\mathcal{M}$ will provide
$\sum_{1\leq i\leq m}\epsilon_i$-\emph{differential privacy}.
\end{lem}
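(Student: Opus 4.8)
The plan is to reduce the statement to a pointwise bound on the likelihood ratio of the composed mechanism and then lift that bound to arbitrary outcome sets by integration. First I would model the sequential composition as the single mechanism returning the tuple of all intermediate outputs, $\mathcal{M}(D)=(\mathcal{M}_1(D),\ldots,\mathcal{M}_m(D))$, taking values in the product range $\Omega=\Omega_1\times\cdots\times\Omega_m$. Writing $r=(r_1,\ldots,r_m)$ for a generic element of this range, the entire argument rests on controlling the ratio $Pr[\mathcal{M}(D)=r]/Pr[\mathcal{M}(D')=r]$ for every fixed $r$ and every pair of neighboring datasets $D,D'$.

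The key step is to factor this joint probability via the chain rule. Because the composition is adaptive, the distribution of the $i$-th step may depend on the earlier outputs $r_1,\ldots,r_{i-1}$, so I would write
\begin{equation}
Pr[\mathcal{M}(D)=r]=\prod_{1\leq i\leq m} Pr[\mathcal{M}_i(D)=r_i \mid r_1,\ldots,r_{i-1}].
\end{equation}
For each fixed history $r_1,\ldots,r_{i-1}$, the map $D\mapsto\mathcal{M}_i(D)$ is an $\epsilon_i$-differentially private mechanism, so Definition \ref{Def-DP} applied to the singleton outcome $\{r_i\}$ gives the pointwise bound $Pr[\mathcal{M}_i(D)=r_i\mid\cdot]\leq\exp(\epsilon_i)\cdot Pr[\mathcal{M}_i(D')=r_i\mid\cdot]$. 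Multiplying these $m$ inequalities collapses the product of exponentials into $\exp(\sum_{1\leq i\leq m}\epsilon_i)$, yielding the ratio bound for the joint output.

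Finally I would lift the pointwise bound to arbitrary outcome sets: integrating (or summing, in the discrete case) the inequality $Pr[\mathcal{M}(D)=r]\leq\exp(\sum_{1\leq i\leq m}\epsilon_i)\cdot Pr[\mathcal{M}(D')=r]$ over any measurable subset of $\Omega$ gives $Pr[\mathcal{M}(D)\in\Omega]\leq\exp(\sum_{1\leq i\leq m}\epsilon_i)\cdot Pr[\mathcal{M}(D')\in\Omega]$, which is precisely the definition of $(\sum_{1\leq i\leq m}\epsilon_i)$-differential privacy. The main obstacle I anticipate is the bookkeeping around adaptivity: one must verify that each $\mathcal{M}_i$ retains its $\epsilon_i$ guarantee \emph{after} conditioning on an arbitrary realization of the preceding outputs, not merely in isolation, since in our setting the deployment step genuinely reuses earlier obfuscation outcomes. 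A secondary technical point is the measure-theoretic passage from the likelihood-ratio bound to the event-probability bound when the ranges $\Omega_i$ are continuous, where densities rather than probabilities must be used; in the discrete setting produced by Algorithms \ref{alg:DP}--\ref{alg:deployment3} this collapses to a finite sum and is immediate.
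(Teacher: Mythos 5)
Your proof is correct: modelling the composition as a tuple-valued mechanism, factoring the joint output distribution by the chain rule (so that adaptivity is handled by conditioning on the realized history), applying the per-step $\epsilon_i$ bound pointwise, and then summing or integrating over outcome sets is exactly the canonical argument for sequential composition, and your closing caveats (that each $\mathcal{M}_i$ must remain $\epsilon_i$-differentially private conditional on any prefix of outputs, and that densities replace probabilities in continuous ranges) are the right ones. Note, however, that the paper itself gives no proof of this lemma at all: it is imported as a known result from the cited reference \cite{McSherry200794}, so there is no in-paper argument to compare against; your derivation coincides with the standard proof of that theorem in the differential-privacy literature.
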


\begin{lem}[Parallel Composition Theorem \cite{McSherry200794}]\label{comp2}
Suppose we have a set of privacy step
$\mathcal{M}=\{\mathcal{M}_1,...,\mathcal{M}_m\}$,
if each $\mathcal{M}_i$ provides $\epsilon_{i}$ privacy guarantee on
a disjoint subset of the entire dataset,
then the parallel composition of $\mathcal{M}$ will provide
$\max_{1\leq i\leq m}\epsilon_{i}$-\emph{differential privacy}.
\end{lem}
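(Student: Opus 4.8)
The plan is to reduce the parallel composition to a single-block argument by exploiting the disjointness of the partition. First I would fix two neighboring datasets $D$ and $D'$ that differ in exactly one record, and write the partition as $D=D_1\cup\cdots\cup D_m$ with each $\mathcal{M}_i$ acting on $D_i$. The crucial observation is that, because the blocks are disjoint, the single record distinguishing $D$ from $D'$ lies in exactly one block, say $D_j$; hence $D_i=D'_i$ for every $i\neq j$, while $D_j$ and $D'_j$ are themselves neighboring. This is precisely why the guarantee turns out to be a maximum rather than a sum: a neighboring change perturbs only one subset.

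Next I would model the parallel composition as $\mathcal{M}(D)=(\mathcal{M}_1(D_1),\ldots,\mathcal{M}_m(D_m))$, with the component mechanisms drawing independent randomness so that the joint output probability factorizes. Working pointwise at a single outcome $\omega=(\omega_1,\ldots,\omega_m)$, the ratio $Pr[\mathcal{M}(D)=\omega]/Pr[\mathcal{M}(D')=\omega]$ becomes a product of per-block ratios. Every factor with $i\neq j$ equals $1$ because $D_i=D'_i$, leaving only the $j$-th factor, which is bounded by $\exp(\epsilon_j)$ by the $\epsilon_j$-privacy of $\mathcal{M}_j$ (Definition \ref{Def-DP}). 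Since $\epsilon_j\leq\max_{1\leq i\leq m}\epsilon_i$, the pointwise ratio is at most $\exp(\max_i\epsilon_i)$.

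Finally, I would lift this pointwise bound to an arbitrary outcome set $\Omega$ by summing the inequality over $\omega\in\Omega$, which yields $Pr[\mathcal{M}(D)\in\Omega]\leq\exp(\max_i\epsilon_i)\cdot Pr[\mathcal{M}(D')\in\Omega]$, exactly the $\max_i\epsilon_i$-differential privacy claimed. The main obstacle is the step that isolates the single affected block: one must carefully justify that a neighboring change touches only one subset and that the untouched blocks contribute identical factors on both sides, so that they cancel. I expect the only remaining technical care to concern continuous output spaces, where the product-of-ratios argument should be phrased in terms of densities and the final sum replaced by an integral over $\Omega$.
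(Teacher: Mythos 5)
The paper does not actually prove this lemma: it is imported verbatim as a known result from McSherry's PINQ work \cite{McSherry200794}, so there is no in-paper proof to compare against. Your argument is the standard (and correct) proof of parallel composition: disjointness localizes a neighboring change to a single block $D_j$, independence of the component mechanisms factorizes the output probability, the unaffected factors cancel, the remaining factor is bounded by $\exp(\epsilon_j)\leq\exp(\max_i\epsilon_i)$, and summing (or integrating) over $\Omega$ lifts the pointwise bound to sets. This is exactly why the guarantee is a maximum rather than the sum appearing in sequential composition (Lemma \ref{comp1}), where all $m$ mechanisms see the changed record.

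One caveat worth making explicit: your key step---``the single record distinguishing $D$ from $D'$ lies in exactly one block''---requires that the partition be a deterministic, record-wise function of the data (e.g., bucketing by an attribute, or here, systems grouped by configuration) and that neighboring means addition or removal of one record. Under the replacement (bounded) notion of neighboring, a modified record can migrate from block $j$ to block $k$, so two factors in your product differ from $1$ and the bound degrades to $\exp(\epsilon_j+\epsilon_k)\leq\exp(2\max_i\epsilon_i)$. Since the paper's Definition \ref{Def-DP} and its use of the lemma (Theorem \ref{thm:DP}, where the three deployment algorithms run on mutually exclusive situations) are consistent with the add/remove convention, your proof stands, but the assumption should be stated rather than left implicit.
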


\begin{thm}\label{thm:DP1}
Algorithm \ref{alg:DP} satisfies $\epsilon$-differential privacy.
\end{thm}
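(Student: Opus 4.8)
The plan is to reduce the claim to the standard Laplace mechanism (Definition \ref{Def-LA}) together with a composition theorem. First I would fix the query whose privacy is at stake: the vector of subset sizes $f(N) = (|N_1|, \ldots, |N_{|K|}|)$, which is exactly what Lines 3--5 release in noisy form. I would establish that its sensitivity is $\Delta S = 1$ via Definition \ref{Def-GS}, comparing $f$ on neighboring networks $N$ and $N'$ that differ in a single system. Because each system is assigned to exactly one configuration (the subsets $N_1,\ldots,N_{|K|}$ are disjoint and cover $N$), adding or removing one system changes exactly one coordinate of $f$ by one, so $\|f(N)-f(N')\|_1 = 1$. This justifies the value $\Delta S = 1$ asserted in the text.

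Second, I would analyze a single iteration of the loop. In Line 5 the defender releases $|N_k| + \lceil Lap(\tfrac{\Delta S \cdot |K|}{\epsilon})\rceil$. Matching the noise scale against Definition \ref{Def-LA}, where a scale of $\tfrac{\Delta S}{\epsilon'}$ yields $\epsilon'$-differential privacy, the scale $\tfrac{\Delta S \cdot |K|}{\epsilon}$ corresponds to $\epsilon' = \tfrac{\epsilon}{|K|}$. Hence the $k$-th release satisfies $\tfrac{\epsilon}{|K|}$-differential privacy. The outer ceiling is the only nonstandard element: I would argue it is a deterministic post-processing step applied to the noisy count (since $|N_k|$ is an integer, $|N_k| + \lceil Lap(\cdot)\rceil = \lceil |N_k| + Lap(\cdot)\rceil$), and post-processing does not degrade the guarantee.

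Third, I would combine the $|K|$ iterations. Viewing the loop as the sequential release of $|K|$ mechanisms $\mathcal{M}_1,\ldots,\mathcal{M}_{|K|}$, each providing $\tfrac{\epsilon}{|K|}$-differential privacy, Lemma \ref{comp1} (Sequential Composition) gives a total guarantee of $\sum_{k=1}^{|K|} \tfrac{\epsilon}{|K|} = \epsilon$, which is exactly the claim. I would remark that, because the mechanisms actually operate on the disjoint subsets $N_1,\ldots,N_{|K|}$, Lemma \ref{comp2} (Parallel Composition) could instead be invoked to obtain the stronger bound $\max_k \tfrac{\epsilon}{|K|} = \tfrac{\epsilon}{|K|}$; since this only strengthens the conclusion, either route establishes $\epsilon$-differential privacy.

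The main obstacle I anticipate is not any single calculation but pinning down which composition theorem the noise scale $\tfrac{\Delta S\cdot |K|}{\epsilon}$ is designed for, and stating the sensitivity argument at the right granularity (per-coordinate count versus the full vector). The sequential-composition reading matches the stated budget $\epsilon$ exactly and is the safest to write up, while the post-processing justification of the ceiling is the only other place where care is needed.
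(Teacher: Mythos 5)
Your proof is correct and follows essentially the same route as the paper's: each of the $|K|$ Laplace releases at scale $\Delta S\cdot|K|/\epsilon$ is $\epsilon/|K|$-differentially private, and Lemma \ref{comp1} (sequential composition) sums these guarantees to $\epsilon$. Your additional care --- the explicit sensitivity argument, treating the ceiling as deterministic post-processing, and the remark that parallel composition (Lemma \ref{comp2}) over the disjoint subsets would in fact yield the stronger bound $\epsilon/|K|$ --- goes beyond the paper's proof, which states only the budget-splitting and composition steps, but it does not change the underlying approach.
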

\begin{proof}
Algorithm \ref{alg:DP} consumes a privacy budget $\epsilon$.
The Laplace noise sampled from $Lap(\frac{\Delta S\cdot |K|}{\epsilon})$ is added to $|K|$ steps.
At each step, the average allocated privacy budget is $\frac{\epsilon}{|K|}$.
Thus, for each step, Algorithm \ref{alg:DP} satisfies $\frac{\epsilon}{|K|}$-differential privacy.
As there are $|K|$ steps, based on Lemma \ref{comp1},
Algorithm \ref{alg:DP} satisfies $\epsilon$-differential privacy.
\end{proof}

\begin{thm}\label{thm:DP2}
Algorithms \ref{alg:deployment1}, \ref{alg:deployment2} and \ref{alg:deployment3} satisfy $\epsilon$-differential privacy.
\end{thm}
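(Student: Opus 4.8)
The plan is to analyze each of the three deployment algorithms and show that each consumes a total privacy budget of $\epsilon$ across its invocations of the exponential mechanism, then invoke the composition theorems (Lemmas \ref{comp1} and \ref{comp2}) to conclude $\epsilon$-differential privacy. The key observation is that every configuration-selection step in these algorithms uses the exponential mechanism with a \emph{scaled} privacy parameter: in Algorithm \ref{alg:deployment1} the selection in Line 7 uses $\frac{\epsilon}{|N|}$, and this selection is performed in a loop running over $|N|$ systems. First I would establish that a single exponential-mechanism step with parameter $\frac{\epsilon}{|N|}$ satisfies $\frac{\epsilon}{|N|}$-differential privacy; this follows directly from Definition \ref{Def-Ex} once we confirm that $\Delta U_d=\max_{1\leq i\leq|N|}u_i$ is a valid upper bound on the sensitivity of the utility function $U_{k'}$.

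For Algorithm \ref{alg:deployment1} the argument is then a clean application of sequential composition: there are at most $|N|$ exponential-mechanism steps, each $\frac{\epsilon}{|N|}$-differentially private, so by Lemma \ref{comp1} the total is $|N|\cdot\frac{\epsilon}{|N|}=\epsilon$, giving $\epsilon$-differential privacy. For Algorithm \ref{alg:deployment2}, which splits into a honeypot-creation loop (Lines 7--14, parameter $\frac{\epsilon}{|N'|-|N|}$ over $|N'|-|N|$ iterations) and an obfuscation loop (Lines 15--23, parameter $\frac{\epsilon}{|N|}$ over $|N|$ iterations), I would argue that the privacy budget accounting was set up precisely so each loop individually composes to $\epsilon$; here I would lean on the earlier remark that in \emph{Situation 2} the budget is scaled to $\frac{|N'|}{|N|}\cdot\epsilon$ to absorb the honeypots. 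The same pattern handles Algorithm \ref{alg:deployment3}, where only the obfuscation loop (Lines 10--18) invokes the exponential mechanism with parameter $\frac{\epsilon}{|N'|}$ over $|N'|$ iterations; the offline-removal step (Lines 7--9) is deterministic ($\arg\max$) and I would note it operates on systems that are removed and hence need no privacy protection, so it does not consume budget.

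The main obstacle I anticipate is justifying the interaction between sequential and parallel composition, and reconciling the loops' actual iteration counts with the nominal budget split. The \textbf{goto} statements mean the loops may not run exactly $|N|$ (or $|N'|$) times, so I would need to argue that fewer iterations only consume \emph{less} budget, keeping the bound at $\epsilon$. More delicately, the systems are partitioned into disjoint configuration subsets $N'_1,\dots,N'_{|K|}$, which suggests parallel composition (Lemma \ref{comp2}) should apply across configurations while sequential composition applies within the selection sequence; I would want to be careful about whether the correct accounting treats the per-system selections as sequential queries on the whole dataset (giving the $|N|$ factor) or as parallel queries on disjoint subsets (giving a $\max$). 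The cleanest and most defensible route is to treat each iteration as a sequential query over the entire network and rely solely on Lemma \ref{comp1}, yielding the stated $\epsilon$ bound; I would verify that this matches the $\frac{\epsilon}{|N|}$-style scaling baked into every exponential-mechanism invocation, which is exactly why the authors chose those denominators.
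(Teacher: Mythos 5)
Your proposal is correct and takes essentially the same route as the paper: the paper's proof likewise treats each configuration selection in Line 7 of Algorithm \ref{alg:deployment1} as an exponential-mechanism step consuming $\frac{\epsilon}{|N|}$, applies sequential composition (Lemma \ref{comp1}) over the $|N|$ steps to conclude $\epsilon$-differential privacy, and then simply states that the proofs for Algorithms \ref{alg:deployment2} and \ref{alg:deployment3} are similar. Your additional attention to the \textbf{goto} early exits, the two-loop budget accounting in Algorithm \ref{alg:deployment2}, and the sequential-versus-parallel composition question is more care than the paper itself takes, since its proof does not address any of those points.
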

\begin{proof}
We prove that Algorithm \ref{alg:deployment1} satisfies $\epsilon$-differential privacy.
The proof of Algorithms \ref{alg:deployment2} and \ref{alg:deployment3} is similar.

In Line 7 of Algorithm \ref{alg:deployment1}, the selection of a configuration takes place $|N|$ times.
The privacy budget $\epsilon$ is, thus, consumed in $|N|$ steps.
At each step, the average allocated privacy budget is $\frac{\epsilon}{|N|}$.
Hence, for each step, Algorithm \ref{alg:deployment1} satisfies $\frac{\epsilon}{|N|}$-differential privacy.
As there are $|N|$ steps, based on the sequential composition theorem (Lemma \ref{comp1}),
Algorithm \ref{alg:deployment1} satisfies $\epsilon$-differential privacy.
\end{proof}

The processes of Algorithms \ref{alg:deployment1}, \ref{alg:deployment2} and \ref{alg:deployment3} are similar to the NoisyAverage sampling developed in McSherry’s PINQ framework \cite{McSherry09}.
They, however, are different.
The output value of the NoisyAverage sampling is: $Pr[NoisyAvg(A)=x]\propto\max\limits_{avg(B)=x}exp(-\epsilon\times\frac{|A\oplus B|}{2})$,
where $A$ and $B$ are two datasets and $\oplus$ is for symmetric difference.
In our algorithms, the output value is: $Pr[output=k']\propto exp(\frac{\frac{\epsilon}{|N|}U_{k'}}{2\Delta U_d})$.
The major difference between the NoisyAverage sampling and ours is that
in the NoisyAverage sampling, the output of a value $x$ is based on an operation of $\max\limits_{avg(B)=x}$,
while in our algorithm, the output of a value $k'$ is based on its utility $U_{k'}$.
The operation of $\max\limits_{avg(B)=x}$ may violate the definition of the exponential mechanism and
thus lead the NoisyAverage sampling to fail to satisfy differential privacy.
By comparison, our algorithms strictly follow the definition of the exponential mechanism (ref. page 39, Definition 3.4 in \cite{Dwork14}).
Thus, our algorithms satisfy differential privacy.

\begin{thm}\label{thm:DP}
The defender is guaranteed $\epsilon$-differential privacy.
\end{thm}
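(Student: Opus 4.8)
The plan is to view the defender's complete per-round behavior as the composition of the two sub-mechanisms already analyzed: Step~1, the count obfuscation of Algorithm~\ref{alg:DP}, and Step~2, the system deployment carried out by exactly one of Algorithms~\ref{alg:deployment1}, \ref{alg:deployment2}, \ref{alg:deployment3}. First I would dispose of Step~2's case split: in any given round the three situations $|N'|=|N|$, $|N'|>|N|$, and $|N'|<|N|$ are mutually exclusive, so precisely one deployment algorithm runs, and by Theorem~\ref{thm:DP2} each of them is $\epsilon$-differentially private. Hence Step~2, as a whole, provides $\epsilon$-differential privacy regardless of which situation obtains. Combined with Theorem~\ref{thm:DP1}, which guarantees that Step~1 is $\epsilon$-differentially private, the defender's strategy reduces to a composition of two $\epsilon$-private mechanisms, and the remaining task is to bound the privacy of that composition.

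Here I would choose the composition rule carefully, because this choice is exactly what determines whether the final bound is $\epsilon$ or $2\epsilon$. Step~1 perturbs only the aggregate count vector $(|N_1|,\dots,|N_{|K|}|)$, whereas Step~2 assigns individual systems to the obfuscated configurations; I would argue that these two mechanisms touch disjoint pieces of the network's information, so that the parallel composition theorem (Lemma~\ref{comp2}) applies and the overall guarantee is $\max\{\epsilon,\epsilon\}=\epsilon$, yielding the claim. This is the natural reading given that both component theorems are stated with the full budget $\epsilon$ rather than $\epsilon/2$.

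The main obstacle is justifying the disjointness that parallel composition demands. The deployment algorithms do not read merely the noisy counts output by Step~1; they also read the raw per-system utilities $u_i$ and the true configuration of each system when selecting $k'$ via the exponential mechanism. To make Lemma~\ref{comp2} legitimate I would have to formalize the network as a record of separable coordinates — the count statistics on one side and the per-system assignment on the other — and verify that the neighboring-dataset relation relevant to each step perturbs only its own coordinate, so that changing one system's configuration influences at most one of the two mechanisms in the parallel sense. If that separation cannot be established cleanly, the honest fallback is the sequential composition theorem (Lemma~\ref{comp1}), which would give only $2\epsilon$, in which case the statement would have to be read with each step allocated half the budget. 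I would therefore spend most of the effort pinning down the precise dataset model that makes the two steps genuinely parallel rather than sequential.
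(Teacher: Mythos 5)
Your first paragraph is, in essence, the paper's entire proof. The paper observes that Algorithms \ref{alg:deployment1}, \ref{alg:deployment2} and \ref{alg:deployment3} are applied in three mutually exclusive situations ($|N'|=|N|$, $|N'|>|N|$, $|N'|<|N|$), invokes the parallel composition theorem (Lemma \ref{comp2}), and concludes $\epsilon$-differential privacy directly from Theorem \ref{thm:DP2}. It stops there: Algorithm \ref{alg:DP} is not composed into the guarantee at all, so the theorem as the paper proves it accounts only for the deployment step. (Incidentally, your phrasing --- exactly one deployment algorithm executes per round, so the guarantee is $\epsilon$ whichever case obtains --- is the cleaner justification; parallel composition in McSherry's sense is about mechanisms acting on disjoint partitions of the records, not about mutually exclusive execution branches, so the paper's appeal to Lemma \ref{comp2} is itself somewhat loose.)

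Everything after your first paragraph is an attempt to prove a stronger statement than the paper does, and your own diagnosis of why it stalls is correct. The count vector $(|N_1|,\dots,|N_{|K|}|)$ read by Algorithm \ref{alg:DP} and the per-system assignments made by the deployment algorithms are both functions of every system's configuration: changing a single system's configuration perturbs both the counts and the exponential-mechanism utilities, so the two steps do not act on disjoint subsets of records and Lemma \ref{comp2} cannot be applied between them. The honest bound for the full two-step pipeline, via Lemma \ref{comp1}, is $2\epsilon$ per round, not $\epsilon$. This is a genuine accounting issue, but it is one the paper sidesteps by omission rather than resolves. So: if the statement is read as the paper implicitly reads it (the guarantee attaching to the deployment step), your first paragraph already completes the proof and matches the paper; if it is read as covering the composition of Steps 1 and 2, then neither your proposal nor the paper establishes the stated $\epsilon$ bound, and $2\epsilon$ is what the composition theorems actually give.
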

\begin{proof}
The defender uses Algorithms \ref{alg:deployment1}, \ref{alg:deployment2} and \ref{alg:deployment3} for deployment of configurations.
The three algorithms are disjoint
because they are applied to three mutually exclusive situations:
$|N'|=|N|$, $|N'|>|N|$ and $|N'|<|N|$.
Therefore, the parallel composition theorem (Lemma \ref{comp2}) can be used here.
Since all of these algorithms satisfy $\epsilon$-differential privacy,
the defender is guaranteed $\epsilon$-differential privacy.
\end{proof}

\begin{cor}\label{cor:strong}
The attacker cannot deduce the exact configurations of the systems.
\end{cor}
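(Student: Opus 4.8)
The plan is to derive the corollary directly from the $\epsilon$-differential privacy guarantee established in Theorem~\ref{thm:DP}, translating the indistinguishability of neighboring datasets into a statement about the attacker's inability to single out any system's true configuration. First I would fix the correspondence already set up in Section~\ref{sec:method}: a network plays the role of a dataset $D$ and each system plays the role of a record, so that two networks differing only in the true configuration assigned to one system — a system genuinely in $N_k$ versus genuinely in $N_{k''}$ — are \emph{neighboring} datasets $D$ and $D'$. The obfuscation-and-deployment pipeline $\mathcal{M}$ is exactly what the attacker observes (the perturbed counts $|N'_1|,\dots,|N'_{|K|}|$ together with the shown arrangement), so $\mathcal{M}$ is the mechanism to which Definition~\ref{Def-DP} applies.

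Next I would apply Definition~\ref{Def-DP} to this neighboring pair. For every observable outcome set $\Omega$ we have $\Pr[\mathcal{M}(D)\in\Omega]\le \exp(\epsilon)\cdot\Pr[\mathcal{M}(D')\in\Omega]$, and by symmetry the reverse inequality holds as well, so the likelihood ratio is sandwiched in $[\exp(-\epsilon),\exp(\epsilon)]$. Whatever the attacker observes is therefore produced with almost the same probability whether the target system is really in configuration $k$ or $k''$. The key step is to feed this bounded ratio into the attacker's own Bayesian update (Equation~\ref{eq:Bayes}): because the factor $q(k'|k)$ relating a true configuration to an observed one can vary by at most $\exp(\epsilon)$ across candidate true configurations, the posterior odds $q(k|k')/q(k''|k')$ exceed the prior odds by at most $\exp(\epsilon)$. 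Hence no posterior probability can be driven to $1$, at least two configurations always retain strictly positive posterior mass, and this is precisely the assertion that the exact configuration of any system cannot be pinned down.

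Finally I would address the multi-round aspect, since the attacker accumulates observations and re-runs Bayesian inference each round. Here I would invoke sequential composition (Lemma~\ref{comp1}): over $m$ rounds the joint observation satisfies $(m\epsilon)$-differential privacy, so the cumulative likelihood ratio is bounded by $\exp(m\epsilon)$ rather than collapsing to certainty. This is where I expect the main obstacle to lie, and it is twofold. First, I must make rigorous the jump from ``bounded likelihood ratio'' to ``the exact configuration cannot be deduced,'' which requires fixing a precise reading of \emph{deduce} — I would take it to mean assigning posterior probability $1$ to a single configuration, so that any strictly positive lower bound on a competing posterior suffices to refute deduction. Second, I must control the accumulation of privacy loss so that $\exp(m\epsilon)$ stays finite; this is exactly why an upper bound on the number of rounds must be imposed, as foreshadowed just before Theorem~\ref{thm:DP1}. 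Within that round budget the bounded-ratio argument goes through and the corollary follows.
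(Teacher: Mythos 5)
Your proposal is correct and takes essentially the same route as the paper: both derive the corollary directly from Theorem~\ref{thm:DP}'s $\epsilon$-differential-privacy guarantee, interpreting it as indistinguishability between neighboring networks that differ only in one system's true configuration, and concluding that no individual system's configuration (hence no exact global assignment) can be pinned down. The paper's own proof stops at this informal appeal, whereas you additionally formalize \emph{deduce} as posterior concentration, push the likelihood-ratio bound through the Bayesian update, and handle the multi-round case via sequential composition (Lemma~\ref{comp1}) --- refinements that are consistent with, and essentially anticipated by, the paper's Remark 1 and Theorem~\ref{thm:bound}.
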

\begin{proof}
According to Theorem \ref{thm:DP}, the defender is guaranteed $\epsilon$-differential privacy.
Based on the description of differential privacy in Section \ref{sub:DP},
differential privacy guarantees that the attacker cannot tell whether a particular system is associated with configuration $k$,
because whether or not the system is associated with configuration $k$ makes little difference to the output observations.
Since the attacker cannot deduce the configuration of each individual system,
he cannot deduce the exact configurations of the systems in the network.
\end{proof}

\textbf{Remark 1.} Corollary \ref{cor:strong} guarantees the privacy of the defender in a single game round.
However, if no bound on the number of rounds is set,
the defender's privacy may be leaked \cite{Dwork14},
i.e., that the attacker figures out the real configurations of the systems.
This is because in differential privacy,
a privacy budget is used to control the privacy level.
Every time the system configuration is obfuscated and released,
the privacy budget is partially consumed.
Once the privacy budget is used up,
differential privacy cannot guarantee the privacy of the defender anymore.
To guarantee the privacy level of the defender,
we need to set a bound on the number of rounds.

\begin{definition}[KL-Divergence \cite{Dwork14}]\label{Def:KL}
The KL-Divergence between two random variables $Y$ and $Z$ taking values from the same domain is defined to be:
\begin{equation}
D(Y||Z)=\mathbb{E}_{y\sim Y}\left[ln\frac{Pr(Y=y)}{Pr(Z=y)}\right].
\end{equation}
\end{definition}

\begin{definition}[Max Divergence \cite{Dwork14}]\label{Def:Max}
The Max Divergence between two random variables $Y$ and $Z$ taking values from the same domain is defined to be:
\begin{equation}
D_{\infty}(Y||Z)=\max\limits_{S\subseteq Supp(Y)}\left[ln\frac{Pr(Y\in S)}{Pr(Z\in S)}\right].
\end{equation}
\end{definition}

\begin{lem}[\cite{Dwork14}]\label{lem:DPDivergence}
A mechanism $\mathcal{M}$ is $\epsilon$-differentially private
if and only if on every two neighboring datasets $x$ and $x'$,
$D_{\infty}(\mathcal{M}(x)||\mathcal{M}(x'))\leq\epsilon$
and $D_{\infty}(\mathcal{M}(x')||\mathcal{M}(x))\leq\epsilon$.
\end{lem}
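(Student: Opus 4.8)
The plan is to prove this biconditional by directly unwinding the two definitions, since both $\epsilon$-differential privacy (Definition~\ref{Def-DP}) and max divergence (Definition~\ref{Def:Max}) are phrased in terms of ratios of probabilities assigned to outcome sets. No probabilistic machinery beyond the definitions is needed; the entire content is a careful accounting of which sets the quantifiers range over, together with one reduction to the support of $\mathcal{M}(x)$. I would handle the two implications separately.

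For the forward direction, I would assume $\mathcal{M}$ is $\epsilon$-differentially private, fix an arbitrary neighboring pair $x,x'$, and fix any measurable $S\subseteq Supp(\mathcal{M}(x))$. Applying Definition~\ref{Def-DP} with $\Omega=S$ gives $Pr[\mathcal{M}(x)\in S]\leq\exp(\epsilon)\cdot Pr[\mathcal{M}(x')\in S]$, so taking logarithms yields $\ln\frac{Pr[\mathcal{M}(x)\in S]}{Pr[\mathcal{M}(x')\in S]}\leq\epsilon$. Since this holds for every such $S$, taking the supremum over $S\subseteq Supp(\mathcal{M}(x))$ gives $D_{\infty}(\mathcal{M}(x)\|\mathcal{M}(x'))\leq\epsilon$. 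Because neighboring datasets are interchangeable, swapping the roles of $x$ and $x'$ gives $D_{\infty}(\mathcal{M}(x')\|\mathcal{M}(x))\leq\epsilon$, completing this direction.

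For the reverse direction, I would assume both max-divergence bounds hold for every neighboring pair, fix $x,x'$, and fix an arbitrary outcome set $\Omega$. The key step is the reduction to the support: since outcomes outside $Supp(\mathcal{M}(x))$ carry zero mass under $\mathcal{M}(x)$, I have $Pr[\mathcal{M}(x)\in\Omega]=Pr[\mathcal{M}(x)\in S]$ where $S=\Omega\cap Supp(\mathcal{M}(x))$. The hypothesis $D_{\infty}(\mathcal{M}(x)\|\mathcal{M}(x'))\leq\epsilon$ then bounds $\ln\frac{Pr[\mathcal{M}(x)\in S]}{Pr[\mathcal{M}(x')\in S]}$ by $\epsilon$, hence $Pr[\mathcal{M}(x)\in S]\leq\exp(\epsilon)\cdot Pr[\mathcal{M}(x')\in S]$, and monotonicity $Pr[\mathcal{M}(x')\in S]\leq Pr[\mathcal{M}(x')\in\Omega]$ recovers exactly the inequality of Definition~\ref{Def-DP} for $\Omega$.

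The main obstacle is precisely this support mismatch between the two definitions: the differential privacy condition quantifies over all outcome sets $\Omega$, whereas the max divergence only maximizes over subsets of $Supp(\mathcal{M}(x))$. Going forward this gap is harmless, as we simply invoke the stronger hypothesis on fewer sets; going backward it requires the observation that truncating $\Omega$ to the support leaves the numerator $Pr[\mathcal{M}(x)\in\Omega]$ unchanged while only shrinking the denominator, which is the safe direction for the upper bound. A secondary point worth noting is that I never need the supremum in the max divergence to be attained: each fixed $S$ yields a pointwise ratio bound, and taking a supremum afterward preserves the inequality, so no compactness or attainment argument is required.
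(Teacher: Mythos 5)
Your proof is correct. There is nothing in the paper to compare it against: the paper states this lemma with a citation to \cite{Dwork14} and gives no proof, treating it as a known equivalence between Definition~\ref{Def-DP} and the max-divergence condition of Definition~\ref{Def:Max}. Your two-directional unwinding is the standard argument, and you correctly isolate the only substantive point, namely the support mismatch: in the reverse direction, replacing $\Omega$ by $S=\Omega\cap Supp(\mathcal{M}(x))$ leaves $Pr[\mathcal{M}(x)\in\Omega]$ unchanged while $Pr[\mathcal{M}(x')\in S]\leq Pr[\mathcal{M}(x')\in\Omega]$, which is exactly the safe direction for the upper bound you need. The only detail you gloss over is the degenerate case where $Pr[\mathcal{M}(x')\in S]=0$ (so the log-ratio is infinite or undefined): in the forward direction the DP inequality forces $Pr[\mathcal{M}(x)\in S]=0$ as well, and in the reverse direction an empty or null $S$ makes the target inequality trivial; spelling this out is routine and is likewise suppressed in \cite{Dwork14}.
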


\begin{lem}[\cite{Dwork14}]\label{lem:KLMax}
Suppose that random variables $Y$ and $Z$ satisfy $D_{\infty}(Y||Z)\leq\epsilon$
and $D_{\infty}(Z||Y)\leq\epsilon$.
Then, $D(Y||Z)\leq\epsilon\cdot (e^{\epsilon}-1)$.
\end{lem}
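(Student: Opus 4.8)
The plan is to reduce the statement to a pointwise estimate on the log-likelihood ratio and then exploit the normalization of the two distributions to sharpen the constant to exactly $\epsilon(e^{\epsilon}-1)$. Throughout I write $p(x)=Pr(Y=x)$, $q(x)=Pr(Z=x)$ and $t(x)=\ln\frac{p(x)}{q(x)}$.

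First I would convert the two Max-Divergence hypotheses (Definition \ref{Def:Max}) into a pointwise bound. Taking the set $S$ in $D_{\infty}(Y||Z)\le\epsilon$ to be the singleton $\{x\}$ gives $p(x)\le e^{\epsilon}q(x)$, and symmetrically $D_{\infty}(Z||Y)\le\epsilon$ gives $q(x)\le e^{\epsilon}p(x)$; together these yield $|t(x)|\le\epsilon$ for every $x$. Using Definition \ref{Def:KL} together with $p(x)=q(x)e^{t(x)}$, I can rewrite $D(Y||Z)=\sum_x p(x)t(x)=\sum_x q(x)\,t(x)e^{t(x)}$.

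The core step is the decomposition $t e^{t}=t+t(e^{t}-1)$, which gives
\begin{equation}
D(Y||Z)=\sum_x q(x)\,t(x)+\sum_x q(x)\,t(x)\bigl(e^{t(x)}-1\bigr).
\end{equation}
The first sum equals $-D(Z||Y)$ and is therefore nonpositive by the nonnegativity of the KL-Divergence, so it may be discarded when proving an upper bound. For the second sum I would analyze $h(t)=t(e^{t}-1)$ on $[-\epsilon,\epsilon]$: it is nonnegative, since its two factors share the sign of $t$, and an elementary analysis shows its maximum over $[-\epsilon,\epsilon]$ is attained at $t=\epsilon$ with value $h(\epsilon)=\epsilon(e^{\epsilon}-1)$. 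Because $\sum_x q(x)=1$, the second sum is then bounded by $\epsilon(e^{\epsilon}-1)$, and hence so is $D(Y||Z)$.

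The main obstacle is obtaining precisely this constant rather than a looser one. Symmetrizing via $D(Y||Z)\le D(Y||Z)+D(Z||Y)=\sum_x(p(x)-q(x))t(x)$ and bounding with $|t(x)|\le\epsilon$ only gives $2\epsilon(e^{\epsilon}-1)$, while bounding $t e^{t}$ directly gives $\epsilon e^{\epsilon}$; both are too weak. The decomposition $te^{t}=t+t(e^{t}-1)$ is what removes the slack, since it isolates the nonpositive contribution $-D(Z||Y)$ that accounts for the gap between $\epsilon e^{\epsilon}$ and $\epsilon(e^{\epsilon}-1)$. The one calculation needing care is the endpoint analysis of $h$, whose decisive comparison $h(\epsilon)\ge h(-\epsilon)$ reduces to the elementary inequality $e^{\epsilon}+e^{-\epsilon}\ge 2$.
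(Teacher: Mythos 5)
Your proof is correct, but it cannot be compared against ``the paper's own proof'' in the usual sense: the paper states this lemma with a citation to \cite{Dwork14} and never proves it, using it purely as a black box inside the proof of Theorem \ref{thm:bound}. The meaningful comparison is with the standard proof in \cite{Dwork14}, which symmetrizes: by nonnegativity of KL-divergence, $D(Y||Z)\leq D(Y||Z)+D(Z||Y)=\sum_x\bigl(p(x)-q(x)\bigr)t(x)$, then it bounds $|t(x)|\leq\epsilon$ together with the multiplicative estimate $|p(x)-q(x)|\leq(e^{\epsilon}-1)\min\bigl(p(x),q(x)\bigr)$ (which follows from the same singleton bounds you derive), and finally uses $\sum_x\min\bigl(p(x),q(x)\bigr)\leq 1$. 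Your route is genuinely different---you keep $D(Y||Z)$ unsymmetrized, split $te^{t}=t+t(e^{t}-1)$, discard $\sum_x q(x)t(x)=-D(Z||Y)\leq 0$, and bound $h(t)=t(e^{t}-1)$ on $[-\epsilon,\epsilon]$ by its endpoint value---and it reaches the same sharp constant, trading the multiplicative bound on $|p-q|$ for an elementary monotonicity analysis of $h$. Two small corrections: first, your claim that the symmetrization route ``only gives $2\epsilon(e^{\epsilon}-1)$'' is inaccurate; it yields exactly $\epsilon(e^{\epsilon}-1)$ once the multiplicative bound on $|p-q|$ is used rather than only $|t|\leq\epsilon$, so neither approach is intrinsically sharper. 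Second, your pointwise step tacitly uses that the two max-divergence hypotheses force $Supp(Y)=Supp(Z)$ (otherwise $t(x)$ is undefined where one of the probabilities vanishes); this does follow from your singleton argument, but it deserves an explicit sentence since all subsequent sums range over the common support.
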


\begin{thm}\label{thm:bound}
Given that the defender's privacy level is $\epsilon$ at each single round,
to guarantee the defender's overall privacy level to be $\epsilon'$,
the upper bound of the number of rounds is $\frac{\epsilon '\cdot(e^{\epsilon'}-1)}{\epsilon\cdot(e^{\epsilon}-1)}$.
\end{thm}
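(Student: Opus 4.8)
The plan is to track the privacy degradation across rounds through the KL-divergence, which composes additively, rather than directly through the max-divergence that $\epsilon$-differential privacy controls but which does not compose as cleanly. The two supplied lemmas, Lemma \ref{lem:DPDivergence} and Lemma \ref{lem:KLMax}, are precisely the bridge between these two notions, so the proof should amount to chaining them together and then summing over the rounds of the game.

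First I would fix two neighboring system configurations (datasets) $x$ and $x'$ and let $Y=\mathcal{M}(x)$, $Z=\mathcal{M}(x')$ denote the defender's released output in a single round. Since Theorem \ref{thm:DP} establishes that each round is $\epsilon$-differentially private, Lemma \ref{lem:DPDivergence} gives $D_{\infty}(Y\|Z)\le\epsilon$ and $D_{\infty}(Z\|Y)\le\epsilon$. Feeding these two bounds into Lemma \ref{lem:KLMax} yields the per-round KL-divergence bound $D(Y\|Z)\le\epsilon(e^{\epsilon}-1)$. This converts the single-round guarantee into the additive currency I want to accumulate.

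Next I would invoke the additivity (chain rule) of KL-divergence over the $T$ rounds of the game. Writing the joint output across all rounds as $(Y_1,\dots,Y_T)$ and $(Z_1,\dots,Z_T)$, the total divergence decomposes as $D(Y_1,\dots,Y_T\,\|\,Z_1,\dots,Z_T)=\sum_{t=1}^{T}\mathbb{E}\big[D(Y_t\,\|\,Z_t)\big]\le T\cdot\epsilon(e^{\epsilon}-1)$, using the per-round bound from the previous step. On the other side, the target overall guarantee $\epsilon'$ corresponds, via the same Lemma \ref{lem:KLMax} read at level $\epsilon'$, to an admissible KL-budget of $\epsilon'(e^{\epsilon'}-1)$. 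Requiring the accumulated divergence not to exceed this budget, $T\cdot\epsilon(e^{\epsilon}-1)\le\epsilon'(e^{\epsilon'}-1)$, and solving for $T$ gives $T\le\frac{\epsilon'(e^{\epsilon'}-1)}{\epsilon(e^{\epsilon}-1)}$, which is the claimed bound.

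The main obstacle is conceptual rather than computational. Lemma \ref{lem:KLMax} runs from a max-divergence (i.e.\ differential privacy) bound to a KL-divergence bound, so treating $\epsilon'(e^{\epsilon'}-1)$ as the permissible ``privacy budget in KL units'' for the composite mechanism requires reading that lemma in the reverse direction. I would need to justify carefully that interpreting the overall $\epsilon'$-DP target as a single-shot max-divergence bound at level $\epsilon'$—and hence a KL-budget of $\epsilon'(e^{\epsilon'}-1)$—is the intended accounting, since a KL-divergence bound alone does not by itself imply a max-divergence (DP) bound. The remaining ingredients, namely the chain-rule additivity of KL-divergence and the final algebraic rearrangement, are routine.
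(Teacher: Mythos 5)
Your proposal is correct and follows essentially the same route as the paper's own proof: convert the per-round $\epsilon$-DP guarantee into a per-round KL bound $\epsilon(e^{\epsilon}-1)$ via Lemma \ref{lem:DPDivergence} and Lemma \ref{lem:KLMax}, sum the KL divergence additively across rounds, and compare the accumulated total against the KL budget $\epsilon'(e^{\epsilon'}-1)$ associated with the target level $\epsilon'$. The conceptual caveat you flag---that reading $\epsilon'(e^{\epsilon'}-1)$ as an admissible budget uses Lemma \ref{lem:KLMax} in the reverse (non-implied) direction---is present, unacknowledged, in the paper's proof as well, so your accounting matches the intended argument exactly.
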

\begin{proof}
Let the upper bound of the number of rounds be $k$ and
the view of the attacker in the $k$ rounds be $v=(v_1,...,v_k)$.
We have $D(Y||Z)=ln\left[\frac{Pr(Y=v)}{Pr(Z=v)}\right]=ln\left[\prod^k_{i=1}\frac{Pr(Y_i=v_i)}{Pr(Z_i=v_i)}\right]=\sum^k_{i=1}ln\left[\frac{Pr(Y_i=v_i)}{Pr(Z_i=v_i)}\right]=\sum^k_{i=1}D(Y_i||Z_i)$.
As the defender is guaranteed $\epsilon$-differential privacy at each single round,
based on Lemma \ref{lem:DPDivergence},
we have $D_{\infty}(Y_i||Z_i)\leq\epsilon$ and $D_{\infty}(Z_i||Y_i)\leq\epsilon$.
Based on this result, according to Lemma \ref{lem:KLMax},
we have $D(Y_i||Z_i)\leq\epsilon\cdot(e^{\epsilon}-1)$.
Thus, we have $D(Y||Z)=\sum^k_{i=1}D(Y_i||Z_i)\leq k\cdot\epsilon\cdot(e^{\epsilon}-1)$.

To guarantee the defender's overall privacy level to be $\epsilon '$,
according to Lemma \ref{lem:DPDivergence},
we have $D_{\infty}(Y||Z)\leq\epsilon'$ and $D_{\infty}(Z||Y)\leq\epsilon'$.
By using Lemma \ref{lem:KLMax},
we have $D(Y||Z)\leq\epsilon'\cdot(e^{\epsilon'}-1)$.
Finally, since $D(Y||Z)\leq k\cdot\epsilon\cdot(e^{\epsilon}-1)$ and $D(Y||Z)\leq\epsilon'\cdot(e^{\epsilon'}-1)$,
the upper bound $k$ is limited by $\frac{\epsilon'\cdot(e^{\epsilon'}-1)}{\epsilon\cdot(e^{\epsilon}-1)}$.
\end{proof}

\subsubsection{Analysis of the deployment algorithms}
Here, we compute the expected utility loss of the defender in three situations:
$|N'|=|N|$, $|N'|>|N|$ and $|N'|<|N|$.

\begin{thm}\label{thm:DUtility1}
In Situation 1 ($|N'|=|N|$), the defender's expected utility loss is $\sum_{1\leq i\leq |N|}[u_i\cdot\sum_{1\leq k'\leq |K|}p_{k'}\mathcal{P}_{k'}]$,
where $p_{k'}=\frac{exp(\frac{\frac{\epsilon}{|N|} U_{k'}}{2\Delta U_d})}{\sum_{1\leq k\leq |K|}exp(\frac{\frac{\epsilon}{|N|} U_{k}}{2\Delta U_d})}$,
if the defender uses Algorithm \ref{alg:deployment1} to deploy configurations
and the attacker uses Equation \ref{eq:selection} to select the target.
\end{thm}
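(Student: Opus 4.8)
The plan is to compute the defender's expected utility loss directly by combining the per-system obfuscation probabilities induced by Algorithm~\ref{alg:deployment1} with the attacker's attack probabilities from Equation~\ref{eq:selection}, and then to collapse the result using linearity of expectation.

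First I would isolate the obfuscation distribution of a single system. In Line 7 of Algorithm~\ref{alg:deployment1}, the exponential mechanism assigns a system to configuration $k'$ with probability proportional to $\exp\left(\frac{\frac{\epsilon}{|N|}U_{k'}}{2\Delta U_d}\right)$. Normalizing over all $|K|$ configurations yields exactly $p_{k'}=\frac{\exp\left(\frac{\frac{\epsilon}{|N|}U_{k'}}{2\Delta U_d}\right)}{\sum_{1\leq k\leq|K|}\exp\left(\frac{\frac{\epsilon}{|N|}U_{k}}{2\Delta U_d}\right)}$, so each of the $|N|$ systems is obfuscated to appear as configuration $k'$ with marginal probability $p_{k'}$.

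Next I would identify the event that triggers a loss for a fixed system $i$. Recall from Section~\ref{sub:game} that when a system whose real utility is $u_i$ is attacked, the defender loses precisely $u_i$, irrespective of the obfuscated appearance $k'$. Hence system $i$ incurs loss $u_i$ exactly when it is obfuscated to some $k'$ \emph{and} the attacker subsequently attacks $k'$, which happens with probability $\mathcal{P}_{k'}$ by Equation~\ref{eq:selection}. Treating the obfuscation draw for system $i$ and the attacker's target selection as independent, the probability that system $i$ is mapped to $k'$ and then attacked is $p_{k'}\mathcal{P}_{k'}$, so the expected loss attributable to system $i$ is $u_i\sum_{1\leq k'\leq|K|}p_{k'}\mathcal{P}_{k'}$. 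Summing over all $|N|$ systems by linearity of expectation gives the total expected utility loss $\sum_{1\leq i\leq|N|}\left[u_i\sum_{1\leq k'\leq|K|}p_{k'}\mathcal{P}_{k'}\right]$, which is the claimed expression.

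The main obstacle I anticipate is justifying that each system's placement is an independent draw with clean marginal $p_{k'}$. Algorithm~\ref{alg:deployment1} contains a budget test ($C_d+c(k',k)>B_d$) and a capacity test ($|N'_{k'}|<n'_{k'}$) that, in principle, couple the per-system selections and could distort the marginals away from $p_{k'}$. I would argue that the theorem holds in the regime where neither constraint binds, so that every system is successfully placed on its first exponential-mechanism draw; under that assumption the $|N|$ selections are i.i.d.\ with marginal $p_{k'}$, which is precisely what makes the product form $p_{k'}\mathcal{P}_{k'}$—and hence the stated closed form—valid.
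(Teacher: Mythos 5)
Your proof matches the paper's own argument essentially step for step: the exponential mechanism in Line 7 yields the marginal $p_{k'}$, the attacker's selection rule yields $\mathcal{P}_{k'}$, their product gives the per-system expected loss $u_i\cdot\sum_{1\leq k'\leq |K|}p_{k'}\mathcal{P}_{k'}$, and summing over the $|N|$ systems gives the claim. You are in fact more careful than the paper, which silently ignores the budget test and the capacity test that you correctly flag as potentially distorting the marginals; the paper's proof implicitly assumes, as you make explicit, that neither constraint binds.
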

\begin{proof}
In Algorithm \ref{alg:deployment1}, the defender deploys the systems based on 1) the utility of each system
and 2) the estimated probability that the attacker will select each configuration to attack. 
By using the exponential mechanism,
each system $i$ is obfuscated to appear as configuration $k'$ with probability
$\frac{exp(\frac{\frac{\epsilon}{|N|} U_{k'}}{2\Delta U_d})}{\sum_{1\leq k\leq |K|}exp(\frac{\frac{\epsilon}{|N|} U_{k}}{2\Delta U_d})}$.
The attacker thus selects configuration $k'$ as a target with probability $\mathcal{P}_{k'}$.

System $i$ is attacked only when the defender obfuscates $i$ to appear as configuration $k'$ and
the attacker selects $k'$ as a target.
As there are $|K|$ configurations,
the defender's expected utility loss on system $i$ is $u_i\cdot\sum_{1\leq k'\leq |K|}p_{k'}\mathcal{P}_{k'}$.
Since there are $|N|$ systems,
the defender's total expected utility loss is $\sum_{1\leq i\leq |N|}[u_i\cdot\sum_{1\leq k'\leq |K|}p_{k'}\mathcal{P}_{k'}]$.
\end{proof}

\begin{cor}\label{cor:DUtility1}
In Situation 1, the upper bound of the defender's utility loss is $\sum_{1\leq i\leq |N'_l|}u_i$,
where $|N'_l|=max_{1\leq k'\leq |K|}|N'_{k'}|$. \\
$\sum_{1\leq i\leq |N'_l|}u_i$ represents the utility sum of $|N'_l|$ systems, which have the highest utility among all the systems.
\end{cor}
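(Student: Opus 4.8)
The plan is to read ``the upper bound of the defender's utility loss'' as the largest loss the defender can suffer in a single round of Situation~1, and to bound this worst case directly rather than bounding the expected-loss expression of Theorem~\ref{thm:DUtility1} term by term. The reason is that na\"ively bounding $\sum_{1\le k'\le|K|}p_{k'}\mathcal{P}_{k'}$ by $\max_{k'}\mathcal{P}_{k'}\le 1$ only yields $\sum_{1\le i\le|N|}u_i$, the utility of \emph{all} systems, which is strictly weaker than the claimed bound whenever $|N'_l|<|N|$. A sharper argument that exploits the structure of a single attack round is therefore needed.

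First I would recall that, by the target-selection rule of Equation~\ref{eq:selection}, in each round the attacker commits to exactly one observed configuration $k'$ and attacks every system deployed under it. Since Situation~1 has $|N'|=|N|$ and hence no honeypots, the realized loss of that round equals the sum of the true utilities of the systems placed in $N'_{k'}$, namely $\sum_{i\in N'_{k'}}u_i$.

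Next I would bound this realized loss uniformly over the random deployment and the random target. On one hand, every deployed subset satisfies $|N'_{k'}|\le|N'_l|$ by the definition $|N'_l|=\max_{1\le k'\le|K|}|N'_{k'}|$, so at most $|N'_l|$ systems are ever attacked in one round. On the other hand, among all choices of at most $|N'_l|$ systems the utility sum is maximized by the $|N'_l|$ systems of highest utility, which gives exactly $\sum_{1\le i\le|N'_l|}u_i$ under the re-indexing stated in the corollary. Chaining these two facts yields $\sum_{i\in N'_{k'}}u_i\le\sum_{1\le i\le|N'_l|}u_i$ for every configuration $k'$, and since the expected loss of Theorem~\ref{thm:DUtility1} is an average of such realized losses it is dominated by the same quantity.

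The only point requiring care is tightness: I would argue the bound is attained because the exponential-mechanism deployment of Algorithm~\ref{alg:deployment1} assigns positive probability to placing the $|N'_l|$ highest-utility systems into the configuration whose capacity is $|N'_l|$, while the $\epsilon$-greedy rule assigns positive probability to targeting that configuration. Hence the worst case is realizable and $\sum_{1\le i\le|N'_l|}u_i$ is the genuine upper bound rather than a loose estimate. I expect this tightness check, rather than the inequality itself, to be the main obstacle, since it relies on the capacity of the largest subset being exactly $|N'_l|$ and on the strict positivity of both probabilistic mechanisms.
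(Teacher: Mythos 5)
Your proof is correct and takes essentially the same route as the paper's: both identify the worst case as the one in which the $|N'_l|$ highest-utility systems are all deployed into the maximum-capacity configuration and the attacker targets exactly that configuration. The paper's own proof merely exhibits this case as realizable and asserts it is the worst one, whereas you additionally supply the uniform inequality (every realized one-round loss is at most $\sum_{1\leq i\leq |N'_l|}u_i$ because $|N'_{k'}|\leq |N'_l|$ for every $k'$ and utilities are positive) and an explicit tightness check, so your write-up is the same argument carried out more completely.
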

\begin{proof}
In Algorithm \ref{alg:deployment1}, the defender deploys each system in order of increasing utility.
Therefore, it is possible for the defender to deploy the $|N'_l|$ systems in configuration $l$,
where $|N'_l|=max_{1\leq k'\leq |K|}|N'_{k'}|$ and the $|N'_l|$ systems have the highest utility among all the systems.
When configuration $l$ is selected by the attacker as a target,
the utility of all systems in $N'_l$ is lost;
this amounts to $\sum_{1\leq i\leq |N'_l|}u_i$.
As this is the worst case for the defender in Situation 1,
the utility loss in this case is the upper bound.
\end{proof}

Similarly, we can also draw the following conclusions.

\begin{thm}\label{thm:DUtility2}
In Situation 2 ($|N'|>|N|$), the defender's expected utility loss is $\sum_{1\leq i\leq |N'|}[\frac{2|N|-|N'|}{|N'|}(u_i\cdot\sum_{1\leq k'\leq |K|}p_{k'}\mathcal{P}_{k'})]$,
where $p_{k'}=\frac{exp(\frac{\frac{\epsilon}{|N|} U_{k'}}{2\Delta U_d})}{\sum_{1\leq k\leq |K|}exp(\frac{\frac{\epsilon}{|N|} U_{k}}{2\Delta U_d})}$,
if the defender uses Algorithm \ref{alg:deployment2} to deploy configurations
and the attacker uses Equation \ref{eq:selection} to select the target.
\end{thm}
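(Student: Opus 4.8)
The plan is to mirror the per-system accounting used in the proof of Theorem \ref{thm:DUtility1}, but to extend the bookkeeping to the full set of $|N'|$ objects that the attacker now observes, of which $|N|$ are genuine systems and $|N'|-|N|$ are honeypots. First I would recall from that proof that the exponential mechanism in Algorithm \ref{alg:deployment2} causes each deployed object to be displayed as configuration $k'$ with probability $p_{k'}$, and that the attacker then commits to $k'$ with probability $\mathcal{P}_{k'}$ (Equation \ref{eq:selection}); hence an individual object of intrinsic value $u_i$ is actually attacked with probability $\sum_{1\leq k'\leq |K|} p_{k'}\mathcal{P}_{k'}$. Applied uniformly to all $|N'|$ displayed objects, this yields a ``raw'' expected transfer of $\sum_{1\leq i\leq |N'|} u_i\cdot\sum_{1\leq k'\leq |K|} p_{k'}\mathcal{P}_{k'}$.

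The substantive step is to resolve the sign of each transfer according to whether the attacked object is real or a honeypot, as dictated by Equation \ref{eq:utilityloss2}: attacking a genuine system costs the defender $u_i$, whereas attacking a honeypot of the same displayed value returns the defender $+u_i$. I would model the type of each displayed object as real with probability $\frac{|N|}{|N'|}$ and as a honeypot with probability $\frac{|N'|-|N|}{|N'|}$, matching the composition of the obfuscated network. The expected signed contribution of a single object is then $\left(\frac{|N|}{|N'|}-\frac{|N'|-|N|}{|N'|}\right)u_i\cdot\sum_{1\leq k'\leq |K|} p_{k'}\mathcal{P}_{k'}=\frac{2|N|-|N'|}{|N'|}\,u_i\cdot\sum_{1\leq k'\leq |K|} p_{k'}\mathcal{P}_{k'}$. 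Summing over the $|N'|$ objects reproduces the claimed expression, the factor $\frac{2|N|-|N'|}{|N'|}$ emerging precisely as the difference between the real-fraction and the honeypot-fraction. As a sanity check, when $|N'|=|N|$ this factor collapses to $1$ and the statement reduces to Theorem \ref{thm:DUtility1}.

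The hard part will be justifying that the single display probability $p_{k'}$ (defined with the per-step budget $\frac{\epsilon}{|N|}$) may legitimately be applied to the honeypots as well, since Algorithm \ref{alg:deployment2} in fact draws honeypot configurations through a \emph{separate} invocation of the exponential mechanism (Line 8) with its own utility $U_{k'}=p(k')u_{k'}$ and its own per-step budget $\frac{\epsilon}{|N'|-|N|}$, alongside the proportional rescaling of $\epsilon$ to $\frac{|N'|}{|N|}\epsilon$ noted after Equation \ref{eq:utilityloss2}. I would therefore argue that, once this budget rescaling is taken into account, the honeypot loop and the obfuscation loop allocate the same effective per-object budget, so that a common $p_{k'}$ is well defined across all $|N'|$ objects. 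This reconciliation of the two mechanism calls, rather than the signed-expectation averaging, is the delicate point; with it settled, the remaining calculation is routine and yields the stated formula.
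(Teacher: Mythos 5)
Your proposal matches the paper's own proof essentially step for step: the paper likewise treats each of the $|N'|$ displayed objects as genuine with probability $\frac{|N|}{|N'|}$ and as a honeypot with probability $\frac{|N'|-|N|}{|N'|}$, takes the signed expectation $\bigl(\frac{|N|}{|N'|}-\frac{|N'|-|N|}{|N'|}\bigr)u_i\cdot\sum_{1\leq k'\leq |K|}p_{k'}\mathcal{P}_{k'}$ per object, and sums over all $|N'|$ objects to obtain the coefficient $\frac{2|N|-|N'|}{|N'|}$. The subtlety you flag at the end --- that honeypot configurations are drawn by a separate exponential-mechanism invocation (with per-step budget $\frac{\epsilon}{|N'|-|N|}$ and utility $p(k')u_{k'}$), so a common $p_{k'}$ across all $|N'|$ objects needs justification --- is not addressed in the paper's proof at all, which applies $p_{k'}$ uniformly without comment, so your treatment is if anything more careful on that point.
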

\begin{proof}
By comparing the results of Theorems \ref{thm:DUtility1} and \ref{thm:DUtility2},
the difference is that in Theorem \ref{thm:DUtility2}, there is an extra coefficient, $\frac{2|N|-|N'|}{|N'|}$.
In Situation 2, there are $|N'|-|N|$ honeypots.
Hence, a system $i$ could be a honeypot, with probability $\frac{|N'|-|N|}{|N'|}$,
or a genuine system, with probability $\frac{|N|}{|N'|}$.
When system $i$ is attacked,
the expected utility loss on system $i$ is $\frac{|N|}{|N'|}[u_i\cdot\sum_{1\leq k'\leq |K|}p(k')\mathcal{P}(k')]-\frac{|N'|-|N|}{|N'|}(u_i\cdot\sum_{1\leq k'\leq |K|}p_{k'}\mathcal{P}_{k'})$,
where $\frac{|N'|-|N|}{|N'|}(u_i\cdot\sum_{1\leq k'\leq |K|}p_{k'}\mathcal{P}_{k'})$ means a utility gain if system $i$ is a honeypot.
As there are $|N'|$ systems, including honeypots,
the expected utility loss is\\ $\sum_{1\leq i\leq |N'|}[\frac{2|N|-|N'|}{|N'|}(u_i\cdot\sum_{1\leq k'\leq |K|}p_{k'}\mathcal{P}_{k'})]$.
\end{proof}

\begin{cor}\label{cor:DUtility2}
In Situation 2, the upper bound of the defender's utility loss is $\sum_{1\leq i\leq |N'_l|}u_i$,
where $|N'_l|=max_{1\leq k'\leq |K|}|N'_{k'}|$. \\
$\sum_{1\leq i\leq |N'_l|}u_i$ represents the utility sum of the $|N'_l|$ systems that have the highest utility among all the systems.
\end{cor}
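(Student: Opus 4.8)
The plan is to reuse the worst-case reasoning of Corollary \ref{cor:DUtility1}, augmented with an argument that the honeypots introduced in Situation 2 can only decrease the defender's loss. First I would recall the structure of Algorithm \ref{alg:deployment2}: it creates the $|N'|-|N|$ honeypots (Lines 7--14) and then deploys the $|N|$ genuine systems in order of increasing utility through the exponential mechanism (Lines 15--23). As in Corollary \ref{cor:DUtility1}, the most damaging event for the defender is the attacker targeting the configuration $l$ of maximum size, $|N'_l|=\max_{1\leq k'\leq|K|}|N'_{k'}|$, because selecting this configuration exposes the greatest number of systems simultaneously.

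The additional ingredient, relative to Situation 1, is the contribution of the honeypots. Inspecting Equation \ref{eq:utilityloss2}, I would note that any honeypot residing in the attacked configuration contributes a negative term to $U_d$, i.e., a utility \emph{gain} for the defender. Hence, among all deployments consistent with the fixed subset sizes $|N'_1|,\dots,|N'_{|K|}|$, the one that maximizes the loss is the deployment in which the attacked configuration $l$ holds only genuine systems, and these are the $|N'_l|$ systems of highest utility, so that no honeypot gain is subtracted. Because the exponential mechanism is probabilistic and the genuine systems are ranked by increasing utility before deployment, such a placement is realizable, exactly as argued for Corollary \ref{cor:DUtility1}.

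Fixing this worst-case deployment, if the attacker selects $l$ then the full utility of every system in $N'_l$ is lost, summing to $\sum_{1\leq i\leq|N'_l|}u_i$, with no offsetting honeypot gain. Since no other outcome in Situation 2 can exceed this quantity, it is the required upper bound, and it coincides with the bound of Corollary \ref{cor:DUtility1} precisely because the honeypots never raise the loss.

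I expect the main obstacle to lie in the honeypot bookkeeping: one must verify that the worst-case deployment genuinely places no honeypot in the attacked configuration while still respecting the prescribed subset sizes, and handle the boundary regime $|N'_l|>|N|$, where configuration $l$ cannot be filled by genuine systems alone. In that regime the bound should be read as the utility sum of the $|N'_l|$ largest available utilities, and I would argue that the compulsory honeypots in $l$ only strengthen the inequality rather than break it, so the stated expression remains a valid upper bound.
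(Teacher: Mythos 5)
Your proposal is correct and takes essentially the same route as the paper's proof: the worst case in Situation 2 coincides with that of Situation 1 because all genuine systems remain in the network and honeypots can only reduce the defender's loss, so the bound $\sum_{1\leq i\leq |N'_l|}u_i$ carries over. Your treatment is more careful than the paper's terse argument --- you explicitly justify via Equation \ref{eq:utilityloss2} that honeypots contribute negatively and you handle the boundary regime $|N'_l|>|N|$, both of which the paper glosses over --- but the underlying idea is identical.
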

\begin{proof}
Although Situation 2 is different from Situation 1,
the worst case in Situation 2 is the same as that in Situation 1.
This is because in Situation 2, all genuine systems are still in the network.
In the worst case, the attacker attacks all of the systems with the highest utility.
The result, thus, is the same as that in Situation 1.
\end{proof}

\begin{thm}\label{thm:DUtility3}
In Situation 3 ($|N'|<|N|$), the defender's expected utility loss is
\\$\sum_{1\leq i\leq |N'|}(u_i\cdot\sum_{1\leq k'\leq |K|}p_{k'}\mathcal{P}_{k'})-\sum_{1\leq j\leq |N|-|N'|}(u_j-l_j)$,
where $p_{k'}=\frac{exp(\frac{\frac{\epsilon}{|N|} U_{k'}}{2\Delta U_d})}{\sum_{1\leq k\leq |K|}exp(\frac{\frac{\epsilon}{|N|} U_{k}}{2\Delta U_d})}$,
if the defender uses Algorithm \ref{alg:deployment3} to deploy configurations
and the attacker uses Equation \ref{eq:selection} to select the target.
\end{thm}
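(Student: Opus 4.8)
The plan is to reuse the two-part decomposition that drove Theorems \ref{thm:DUtility1} and \ref{thm:DUtility2}, splitting the defender's total expected loss into (a) the contribution of the $|N'|$ systems that remain online and are obfuscated in Lines 10--18 of Algorithm \ref{alg:deployment3}, and (b) the net contribution of the $|N|-|N'|$ systems taken offline in Lines 7--9. For part (a) I would argue exactly as in Theorem \ref{thm:DUtility1}: each surviving system $i$ is obfuscated to appear as configuration $k'$ by the exponential mechanism with probability $p_{k'}$, and the attacker then targets $k'$ with probability $\mathcal{P}_{k'}$ via Equation \ref{eq:selection}; since system $i$ is attacked only when these two independent events coincide, its expected loss is $u_i\cdot\sum_{1\leq k'\leq |K|}p_{k'}\mathcal{P}_{k'}$, and summing over the $|N'|$ online systems yields the first term $\sum_{1\leq i\leq |N'|}\bigl(u_i\cdot\sum_{1\leq k'\leq |K|}p_{k'}\mathcal{P}_{k'}\bigr)$.

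A point worth making explicit at this stage is the privacy-budget bookkeeping that makes $p_{k'}$ come out with $\frac{\epsilon}{|N|}$ rather than $\frac{\epsilon}{|N'|}$ in its exponent. In Situation 3 the budget is proportionally reduced to $\tfrac{|N'|}{|N|}\cdot\epsilon$, and Line 11 spreads it over the $|N'|$ obfuscation steps, so the per-step allocation is $\tfrac{(|N'|/|N|)\epsilon}{|N'|}=\tfrac{\epsilon}{|N|}$. Substituting this per-step budget into the exponential weight gives precisely $p_{k'}=\frac{\exp(\frac{(\epsilon/|N|)U_{k'}}{2\Delta U_d})}{\sum_{1\leq k\leq |K|}\exp(\frac{(\epsilon/|N|)U_{k}}{2\Delta U_d})}$, matching the statement. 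I would include this short computation so the reader sees why the $|N'|$ in the denominator cancels.

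For part (b) I would read the offline systems' contribution off Equation \ref{eq:utilityloss3}: setting $J_j=0$ for an offline system leaves only the term $(J_j-1)^2\cdot l_k=l_k$, i.e.\ an offline system costs the defender its offline penalty $l_j$ but can no longer be attacked, so relative to a baseline in which it would instead have lost its utility $u_j$ it reduces the loss by $u_j-l_j$. Summing over the $|N|-|N'|$ systems removed in Lines 7--9 gives the correction $-\sum_{1\leq j\leq |N|-|N'|}(u_j-l_j)$, and combining with part (a) produces the claimed expression. The main obstacle, and the step I would treat most carefully, is justifying this second term: the greedy rule in Line 8 selects the offline candidates by $\arg\max_i[p(k)u_i-l_k]$, whereas the stated saving is written as $u_j-l_j$ without the attack-probability factor, so I would need either to argue a worst-case (full-attack) baseline for the removed systems or to make precise the sense in which $u_j-l_j$ is the relevant per-system reduction, and to confirm that the greedily chosen systems are exactly those indexed by $j$ in the sum. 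Reconciling that accounting with the online term's expectation is where the bookkeeping is most error-prone.
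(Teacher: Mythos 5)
Your proposal is correct and follows essentially the same route as the paper's proof: the paper likewise argues by comparison with Theorem \ref{thm:DUtility1}, attributing the first sum to the $|N'|$ systems that remain online and the extra term $-\sum_{1\leq j\leq |N|-|N'|}(u_j-l_j)$ to the fact that an offline system costs $l_j$ but can no longer lose its utility $u_j$. If anything you are more careful than the paper, whose proof is a two-sentence comparison that neither carries out the privacy-budget bookkeeping explaining the $\frac{\epsilon}{|N|}$ exponent nor addresses the tension you rightly flag between the greedy offline-selection rule of Line 8 and the unweighted per-system saving $u_j-l_j$.
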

\begin{proof}
By comparing the results of Theorems \ref{thm:DUtility1} and \ref{thm:DUtility3},
we can see that there is an extra part in Theorem \ref{thm:DUtility3}, $\sum_{1\leq j\leq |N|-|N'|}(u_j-l_j)$.
This part means that $|N|-|N'|$ systems are taken offline.
Thus, the utility of these systems will not be lost.
\end{proof}

\begin{cor}\label{cor:DUtility3}
In Situation 3, the upper bound of the defender's utility loss is $\sum_{1\leq i\leq |N'_l|}u_i$,
where $|N'_l|=max_{1\leq k'\leq |K|}|N'_{k'}|$.\\
$\sum_{1\leq i\leq |N'_l|}u_i$ represents the utility sum of the $|N'_l|$ systems
that have the highest utility among the remaining $|N'|$ systems.
\end{cor}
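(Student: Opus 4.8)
The plan is to mirror the worst-case arguments already used for Corollaries \ref{cor:DUtility1} and \ref{cor:DUtility2}, adapting them to the fact that $|N|-|N'|$ systems have been taken offline before any obfuscation occurs. First I would recall the structure of Algorithm \ref{alg:deployment3}: in Lines 7--9 the defender removes the $|N|-|N'|$ systems that maximize $p(k)\cdot u_i-l_k$, and only afterwards, in Lines 10--18, does she obfuscate the surviving $|N'|$ systems, processing them in order of increasing utility. Consequently every bound on the utility loss must be expressed over the remaining $|N'|$ systems rather than over the full set $N$, which is exactly the qualifier appearing in the statement.

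Next I would identify the worst case for the defender. Since the exponential mechanism in Line 11 is probabilistic, it is possible that the $|N'_l|$ surviving systems with the highest utility are all obfuscated into the single configuration $l$ that receives the largest number of slots, namely $|N'_l|=\max_{1\leq k'\leq |K|}|N'_{k'}|$. If the attacker then selects $l$ as the target via Equation \ref{eq:selection}, the defender loses the utility of every system placed in $N'_l$, which totals $\sum_{1\leq i\leq |N'_l|}u_i$. I would note that the resulting expression is formally identical to that of Situations 1 and 2, the only change being that the maximizing set is now drawn from the survivors.

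Finally I would argue that this is genuinely an upper bound. The attacker attacks exactly one configuration per round, and no configuration can contain more than $|N'_l|$ systems by the definition of $|N'_l|$; among all ways of filling those slots from the remaining systems, assigning the $|N'_l|$ highest-utility survivors maximizes the attainable loss. Hence no deployment-and-attack combination in Situation 3 can exceed $\sum_{1\leq i\leq |N'_l|}u_i$, so this quantity is the upper bound, completing the argument in the same spirit as Corollary \ref{cor:DUtility1}.

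I do not anticipate a genuinely hard step here, since the analysis is a direct specialization of the Situation 1 argument. The one point requiring care, and the place I would treat as the main obstacle, is the scope of the phrase ``highest utility'': because the $|N|-|N'|$ offline systems are removed first according to the rule in Line 8, the maximizing set must be taken among the surviving $|N'|$ systems. I would verify that this offline selection rule does not preclude the worst-case placement of the survivors into configuration $l$, so that the bound remains attainable and is therefore tight.
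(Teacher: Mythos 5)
Your proposal is correct and follows essentially the same route as the paper: the paper's proof likewise observes that the $|N|-|N'|$ offline systems cannot be attacked and then invokes the Situation~1 worst case restricted to the surviving $|N'|$ systems, i.e., the attacker hits the configuration holding the $|N'_l|$ highest-utility survivors. Your write-up is more detailed (in particular, spelling out attainability of the worst case and the slot-counting argument), but the underlying idea is identical to the paper's one-paragraph specialization of Corollary~\ref{cor:DUtility1}.
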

\begin{proof}
In Situation 3, $|N|-|N'|$ systems are taken offline.
These systems, therefore, will not be attacked.
Hence, in the worst case, the $|N'_l|$ systems with the highest utility among the remaining $|N'|$ systems are attacked.
\end{proof}

\subsubsection{Analysis of the complexity of the algorithms}
The proposed approach includes four algorithms.
The analysis of their complexity is given as follows.

\begin{thm}\label{thm:Agl1Complexity}
The computational complexity of Algorithm \ref{alg:DP} is $O(|N|\cdot|K|)$,
where $|N|$ is the number of systems and $|K|$ is the number of configurations.
\end{thm}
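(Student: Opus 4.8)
The plan is to bound the running time line by line and then take the dominant term. Algorithm~\ref{alg:DP} consists of three parts: the partition of the system set (Line 2), the computation of the subset sizes (Line 3), and the noise-injection loop (Lines 4--5). I would argue that only the partition is asymptotically significant and that it contributes exactly the claimed $O(|N|\cdot|K|)$.

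First I would handle the partition step. To form $N_1,\ldots,N_{|K|}$ the algorithm must scan all $|N|$ systems, and for each system determine which of the $|K|$ configurations it carries before placing it into the corresponding subset. Assuming that matching a single system against the configuration set costs $O(|K|)$ in the worst case (comparing it against each of the $|K|$ configuration labels), the total cost of partitioning is $|N|$ systems times $O(|K|)$ per system, i.e.\ $O(|N|\cdot|K|)$. This is the heart of the bound and the main point that needs justification.

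Next I would dispatch the remaining two parts. Computing the $|K|$ subset sizes is $O(|K|)$, either as a single pass over the already-formed subsets or via counters maintained during the partition. The loop in Lines 4--5 executes $|K|$ times, and each iteration performs constant work: sampling one Laplace variate from $Lap(\frac{\Delta S\cdot|K|}{\epsilon})$, adding it to $|N_k|$, and applying the ceiling. Hence Lines 4--5 cost $O(|K|)$ in total.

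Finally I would sum the three contributions, $O(|N|\cdot|K|)+O(|K|)+O(|K|)$, and observe that since $|N|\geq 1$ the first term dominates, giving an overall complexity of $O(|N|\cdot|K|)$. The only genuine subtlety is the cost model for the partition step: if one instead assumes that each system already stores its configuration index and can be bucketed in $O(1)$, the partition would be $O(|N|)$ and the total would collapse to $O(|N|+|K|)$. The stated bound therefore rests on the worst-case assumption that identifying a system's configuration requires scanning all $|K|$ configurations, which is the point I would make explicit.
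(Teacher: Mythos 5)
Your proof is correct and follows essentially the same route as the paper's: both attribute the dominant $O(|N|\cdot|K|)$ cost to the partition step (modeled as a nested scan of $|N|$ systems against $|K|$ configurations) and $O(|K|)$ to the noise-injection loop, then sum. Your closing remark about the cost model is a fair observation, but note that even under $O(1)$ bucketing the claimed bound still holds, since $O(|N|+|K|)\subseteq O(|N|\cdot|K|)$.
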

\begin{proof}
In Line 3 of Algorithm \ref{alg:DP}, $|N|$ systems are partitioned into $|K|$ categories.
This partition implicitly involves a nested loop,
where the number of iterations is $|N|\cdot|K|$.
In Lines 4-5 of Algorithm \ref{alg:DP}, the number of iterations in this loop is $|K|$.
Thus, the overall number of iterations is $|N|\cdot|K|+|K|=(|N|+1)\cdot|K|$.
The complexity of Algorithm \ref{alg:DP} is $O(|N|\cdot|K|)$.
\end{proof}

\begin{thm}\label{thm:Agl2Complexity}
The computational complexity of Algorithm \ref{alg:deployment1} is $O(|N|^2)$,
where $|N|$ is the number of systems.
\end{thm}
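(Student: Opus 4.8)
The plan is to decompose Algorithm \ref{alg:deployment1} into its sorting phase (Line 5) and its deployment loop (Lines 6--14), bound each separately, and add the two, exactly as was done for Algorithm \ref{alg:DP} in Theorem \ref{thm:Agl1Complexity}. First I would bound the sorting: ranking $u_1,\ldots,u_{|N|}$ in increasing order with a standard comparison-based sort costs $O(|N|\log|N|)$.

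Next I would analyze the deployment loop. The outer loop indexed by $i$ runs exactly $|N|$ times, once per system. Within each pass the dominant operation is the exponential-mechanism selection in Line 7, which must evaluate and normalize the weight $\exp(\frac{\frac{\epsilon}{|N|} U_{k'}}{2\Delta U_d})$ across all $|K|$ configurations before sampling, costing $O(|K|)$; the budget test in Line 8 and the capacity test in Line 10 are each $O(1)$. Since the number of distinct configurations cannot exceed the number of systems, we have $|K|\leq|N|$, so each selection is $O(|N|)$ and the loop contributes $O(|N|\cdot|K|)=O(|N|^2)$.

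The main obstacle will be the \textbf{goto Line 7} branch, which re-selects a configuration whenever the sampled $k'$ is already saturated ($|N'_{k'}|=n'_{k'}$); because of this a single value of $i$ can trigger several selections, so charging one $O(|K|)$ selection per outer iteration is not immediate. To close this I would observe that each \emph{successful} placement fills exactly one of the $\sum_{k} n'_k=|N'|=|N|$ available slots and therefore occurs at most $|N|$ times in total, while a budget-induced \textbf{continue} in Line 9 advances $i$ without further sampling. The only potentially unbounded behaviour is repeated sampling of saturated configurations, which I would bound by noting that at least one slot remains available until the loop terminates, so these re-selections are absorbed into the same $O(|K|)$-per-selection accounting and do not raise the $O(|N|\cdot|K|)$ order.

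Finally I would combine the phases: the sort contributes $O(|N|\log|N|)$ and the loop contributes $O(|N|^2)$, and since $|N|\log|N|=O(|N|^2)$, the total complexity of Algorithm \ref{alg:deployment1} is $O(|N|^2)$, as claimed.
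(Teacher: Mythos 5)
Your proposal is correct in its conclusion but reaches it by a genuinely different accounting than the paper. The paper charges the quadratic cost to the ranking step: it counts Line 5 as a pairwise-comparison sort taking $\frac{|N|(|N|-1)}{2}$ iterations (bubble-sort style), and then counts Lines 7--14 as exactly $|N|$ unit-cost iterations, so the sort dominates and yields $O(|N|^2)$. You do the reverse: your sort is $O(|N|\log|N|)$, and the quadratic term comes from the loop, because you charge each exponential-mechanism draw its true cost of $O(|K|)$ (evaluating and normalizing all $|K|$ weights) and then invoke $|K|\le|N|$. Your accounting is the more faithful one to an actual implementation, and it exposes the dependence on $|K|$ that the paper's unit-cost convention hides. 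The price is the extra hypothesis $|K|\le|N|$, which the paper's setup does not strictly guarantee: $K$ is a given set of configurations, not the set of configurations realized in $N$, so the partition $N_1,\dots,N_{|K|}$ may have empty cells. Without that hypothesis your loop bound is $O(|N|\cdot|K|)$, which is arguably the more informative statement and is the very form the paper uses for Algorithm \ref{alg:DP} in Theorem \ref{thm:Agl1Complexity}.

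The one step you should firm up is the \textbf{goto} branch. Saying the re-selections are ``absorbed into the same $O(|K|)$-per-selection accounting'' asserts the bound rather than proving it: the exponential mechanism keeps assigning positive probability to saturated configurations, so the number of rejected draws within one outer iteration is not deterministically bounded --- the loop in Lines 7--14 is a rejection sampler whose worst-case running time is unbounded, and only its expected running time is finite. Two clean repairs are available. (i) Observe that the weights in Line 7 do not change between re-selections within an outer iteration, so the goto loop is equivalent to a single draw from the distribution renormalized over the unsaturated configurations, implementable in one $O(|K|)$ pass; this restores a deterministic $O(|N|\cdot|K|)$ total. (ii) Alternatively, bound the expected number of rejections: the exponent in Line 7 lies in $[0,\epsilon/2]$ because $U_{k'}\le|N|\,\Delta U_d$ and the per-step budget is $\epsilon/|N|$, so every configuration is drawn with probability at least $e^{-\epsilon/2}/|K|$; since at least one configuration remains unsaturated while the loop runs, the expected number of draws per placement is at most $e^{\epsilon/2}|K|$, giving a finite expected-time bound. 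Note that the paper's own proof sidesteps this issue entirely by declaring the loop to be exactly $|N|$ iterations, so on this point your argument, once patched, is strictly more rigorous than the original.
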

\begin{proof}
In Line 5 of Algorithm \ref{alg:deployment1}, $|N|$ systems are ranked in an increasing order based on their utilities.
The number of iterations in this ranking is $\frac{|N|\cdot(|N|-1)}{2}$.
Moreover, in Lines 7 to 14, the number of iterations in this loop is $|N|$.
Thus, the overall number of iterations is $\frac{|N|\cdot(|N|-1)}{2}+|N|=\frac{|N|\cdot(|N|+1)}{2}$.
The complexity of Algorithm \ref{alg:deployment1}, therefore, is $O(|N|^2)$.
\end{proof}

\begin{thm}\label{thm:Agl3Complexity}
The computational complexity of Algorithm \ref{alg:deployment2} is $O(|N|^2)$,
where $|N|$ is the number of systems.
\end{thm}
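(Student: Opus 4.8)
The plan is to reuse the iteration-counting strategy from the proofs of Theorems~\ref{thm:Agl1Complexity} and~\ref{thm:Agl2Complexity}: isolate the nontrivial blocks of Algorithm~\ref{alg:deployment2}, bound the number of iterations of each, and show that one block dominates. First I would identify three blocks: the utility ranking in Line~6, the honeypot-creation loop in Lines~7--14, and the obfuscation loop in Lines~15--23. As argued for Algorithm~\ref{alg:deployment1}, ranking $|N|$ systems in increasing order of utility costs $\frac{|N|\cdot(|N|-1)}{2}$ comparisons, which is $O(|N|^2)$ and is the candidate dominant term.

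Next I would bound the two loops, each of which performs only constant work per pass (a single exponential-mechanism draw together with the budget and capacity tests). The obfuscation loop iterates over the $|N|$ genuine systems, contributing $|N|$ iterations exactly as in Algorithm~\ref{alg:deployment1}, while the honeypot loop iterates once per honeypot, contributing $|N'|-|N|$ iterations. Summing the three blocks gives $\frac{|N|\cdot(|N|-1)}{2}+(|N'|-|N|)+|N|$, and I would conclude $O(|N|^2)$ on the grounds that the quadratic ranking term dominates both linear-scale loops.

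The main obstacle is justifying that the honeypot loop does not break the quadratic bound. Unlike Algorithm~\ref{alg:deployment1}, whose loops are all indexed by $|N|$, here the loop length $|N'|-|N|$ is the noisy count produced by Algorithm~\ref{alg:DP} and is not a priori bounded by $|N|$. To close this gap I would argue that, since the Laplace perturbation on each of the $|K|$ configurations has bounded expected magnitude, the expected value of $|N'|-|N|$ is at most of order $|N|^2$ and hence is absorbed by the $O(|N|^2)$ ranking cost. A secondary subtlety is the \textbf{goto} in the honeypot loop that re-enters the selection step once the budget is exceeded: I would note that each retry performs only constant additional work and that the process halts when the budget is exhausted, so the retries add at most a constant (or in the worst case an $O(|K|)$) factor per slot and do not alter the asymptotic count. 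The ranking step therefore governs the running time, yielding $O(|N|^2)$.
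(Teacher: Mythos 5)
Your proposal is correct and follows essentially the same route as the paper's proof: the same three-block decomposition (utility ranking, honeypot loop in Lines 7--14, obfuscation loop in Lines 15--23), the same iteration counts ($\frac{|N|\cdot(|N|-1)}{2}$, $|N'|-|N|$, and $|N|$), and the same conclusion that the quadratic ranking term dominates. The only difference is that where the paper simply asserts that $|N'|$ and $|N|$ are ``in the same scale,'' you justify the bound on $|N'|-|N|$ via the expected magnitude of the Laplace noise (and you also account for the \textbf{goto} retries, which the paper's proof silently ignores) --- a refinement of, rather than a departure from, the paper's argument.
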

\begin{proof}
In Line 6 of Algorithm \ref{alg:deployment2}, $|N|$ systems are ranked in an increasing order based on their utilities.
The number of iterations in this ranking is $\frac{|N|\cdot(|N|-1)}{2}$.
Moreover, in Lines 7 to 14, the number of iterations in this loop is $|N'|-|N|$,
where $|N'|$ is the number of systems after obfuscation.
Also, in Lines 15 to 23, the number of iterations in this loop is $|N|$.
Thus, the overall number of iterations is $\frac{|N|\cdot(|N|-1)}{2}+|N'|$.
Since $|N'|$ and $|N|$ are in the same scale,
the complexity of Algorithm \ref{alg:deployment1} is $O(|N|^2)$.
\end{proof}

\begin{thm}\label{thm:Agl4Complexity}
The computational complexity of Algorithm \ref{alg:deployment3} is $O(|N|^2)$,
where $|N|$ is the number of systems.
\end{thm}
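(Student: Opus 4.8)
The plan is to follow the same accounting scheme used for Theorems \ref{thm:Agl2Complexity} and \ref{thm:Agl3Complexity}: identify the dominant contribution of each major block of Algorithm \ref{alg:deployment3} and show that their sum is $O(|N|^2)$. First I would isolate the ranking step in Line 6, where the utilities $u_1,\dots,u_{|N|}$ are sorted in increasing order. Reusing the comparison-sort bound from the earlier proofs, this step contributes $\frac{|N|\cdot(|N|-1)}{2}$ iterations, which by itself already produces an $O(|N|^2)$ term.

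Next I would bound the two loops. The first loop (Lines 7--9) executes $|N|-|N'|$ times, taking one system offline per iteration, and the second loop (Lines 10--18) executes $|N'|$ times, performing a single exponential-mechanism selection together with a constant-time budget and capacity check. If each loop body is treated as constant work, the two loops together account for $(|N|-|N'|)+|N'|=|N|$ iterations, so that the total iteration count becomes $\frac{|N|\cdot(|N|-1)}{2}+|N|=\frac{|N|\cdot(|N|+1)}{2}$, matching the form obtained in Theorem \ref{thm:Agl2Complexity}. This yields the desired $O(|N|^2)$ bound.

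The one step that needs slightly more care, and the main obstacle, is the operation $argMax_i[p(k)\cdot u_i-l_k]$ in Line 8. Unlike the constant-time bodies of the other loops, a naive implementation of this maximization scans the remaining systems and therefore costs $O(|N|)$ per invocation, making the offline loop $O((|N|-|N'|)\cdot|N|)$ rather than linear. I would resolve this by observing that, since $|N|-|N'|<|N|$, even this worst case is $O(|N|^2)$ and is thus absorbed by the sorting term. Consequently the overall complexity remains $O(|N|^2)$ regardless of how the $argMax$ is realized, which completes the argument.
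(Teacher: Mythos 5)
Your proposal is correct and follows essentially the same decomposition as the paper's proof: the ranking in Line 6 contributes $\frac{|N|\cdot(|N|-1)}{2}$ iterations, and the two loops together contribute $(|N|-|N'|)+|N'|=|N|$, giving $\frac{|N|\cdot(|N|+1)}{2}=O(|N|^2)$. Your extra observation about the per-invocation cost of the $argMax$ in Line 8 addresses a point the paper's proof silently glosses over (it counts each loop body as a single iteration), so your argument is slightly more careful while arriving at the same bound.
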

\begin{proof}
In Line 6 of Algorithm \ref{alg:deployment3}, $|N|$ systems are ranked in an increasing order based on their utilities.
The number of iterations in this ranking is $\frac{|N|\cdot(|N|-1)}{2}$.
Moreover, in Lines 7 to 9, the number of iterations in this loop is $|N|-|N'|$,
where $|N'|$ is the number of systems after obfuscation.
Also, in Lines 10 to 18, the number of iterations in this loop is $|N'|$.
Thus, the overall number of iterations is $\frac{|N|\cdot(|N|-1)}{2}+|N|=\frac{|N|\cdot(|N|+1)}{2}$.
The complexity of Algorithm \ref{alg:deployment1}, therefore, is $O(|N|^2)$.
\end{proof}

\subsection{Analysis of the attacker's strategy}\label{sub:attacker analysis}
We first analyze the attacker's optimal strategy
and then compute the lower bound of his expected utility gain.

\textbf{Remark 2} (the attacker's expected utility gain).
As the cyber deception game is a zero-sum game in Situations 1 ($|N'|=|N|$) and 2 ($|N'|>|N|$),
the defender's expected utility loss is the attacker's expected utility gain.
In Situation 3 ($|N'|<|N|$), since the attacker does not attack the offline systems,
his expected utility gain is based only on the utility of the remaining $|N'|$ systems,
which is $\sum_{1\leq i\leq |N'|}(u_i\cdot\sum_{1\leq k'\leq |K|}p_{k'}\mathcal{P}_{k'})$.

\begin{thm}\label{thm:AStrategy}
The attacker's optimal strategy is the solution to the following problem:
given $U^{(1)}_a$, ..., $U^{(|K|)}_a$, find a probability distribution, $\mathcal{P}_1$, ..., $\mathcal{P}_{|K|}$,
which can maximize $\sum_{1\leq k'\leq |K|}\mathcal{P}_{k'}\cdot U^{(k')}_a$.
\end{thm}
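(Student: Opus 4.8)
The plan is to argue directly from the definition of the attacker's objective, showing that once the per-configuration gains $U^{(1)}_a,\ldots,U^{(|K|)}_a$ are fixed (these are the quantities computed in Equation \ref{eq:utilitygain} from the Bayesian posteriors $q(k|k')$), the attacker's entire decision in this step collapses to choosing a mixed strategy over configurations that maximizes a single linear functional. First I would make explicit what the attacker's strategy is: his only free choice is the probability distribution $\boldsymbol{\mathcal{P}}=\{\mathcal{P}_1,\ldots,\mathcal{P}_{|K|}\}$ with which he selects an observed configuration $k'$ as the target, subject to $\mathcal{P}_{k'}\geq 0$ and $\sum_{1\leq k'\leq |K|}\mathcal{P}_{k'}=1$. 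The attacker is a rational, utility-maximizing player, so his optimal strategy is by definition the distribution that maximizes his expected utility gain.

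Next I would compute that expected utility gain as a function of $\boldsymbol{\mathcal{P}}$. Conditioned on committing to configuration $k'$, an attacked system has real configuration $k$ with posterior probability $q(k|k')$ and yields payoff $u_k$; hence the conditional expected gain of targeting $k'$ is exactly $\sum_{1\leq k\leq |K|}q(k|k')\,u_k=U^{(k')}_a$, which is precisely Equation \ref{eq:utilitygain}. Applying the law of total expectation over the attacker's random choice of target then gives a total expected gain of $\sum_{1\leq k'\leq |K|}\mathcal{P}_{k'}\cdot U^{(k')}_a$, exactly the objective in the statement. Since maximizing the attacker's expected gain is therefore identical to maximizing this sum over the probability simplex, the attacker's optimal strategy is indeed the solution to the stated maximization problem.

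I expect the only real content here (the obstacle is mild) to be justifying the decomposition cleanly: one must verify that the conditional payoff given target $k'$ equals $U^{(k')}_a$ independently of the selection rule $\boldsymbol{\mathcal{P}}$, so that the total expectation factorizes into the weighted average above rather than some coupled expression; the posteriors $q(k|k')$ are fixed numbers determined by the prior $q(k)$ and the accumulated observations, not random objects depending on $\boldsymbol{\mathcal{P}}$, which is what makes the factorization valid. Once this is in place the result follows immediately from linearity of expectation. Finally, I would remark that because the objective $\sum_{k'}\mathcal{P}_{k'}U^{(k')}_a$ is linear in $\boldsymbol{\mathcal{P}}$ over the simplex, its unconstrained optimum is attained at a vertex, i.e.\ by deterministically attacking $\arg\max_{k'}U^{(k')}_a$; this explains why the $\epsilon$-greedy rule of Equation \ref{eq:selection}, which places the dominant mass $(1-e)+\frac{e}{|K|}$ on the maximizing configuration, is the natural near-optimal realization the attacker adopts in practice, and motivates the discussion of \textbf{Remark 2}.
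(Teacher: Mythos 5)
Your proposal is correct and takes essentially the same route as the paper: both arguments amount to observing that the attacker's expected utility gain is linear in his selection distribution $\boldsymbol{\mathcal{P}}$ with fixed, computable coefficients $U^{(k')}_a$ (obtained from the Bayesian posteriors via Equation \ref{eq:utilitygain}), so that rational utility maximization is by definition the stated linear program over the simplex. The only cosmetic difference is that you derive the objective attacker-side via the law of total expectation, while the paper reaches the same expression by citing the utility-gain formula of \textbf{Remark 2} and emphasizing that each $U^{(k')}_a$ is computable by the attacker; your closing observation about the vertex optimum is precisely the content of the paper's \textbf{Remark 3}.
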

\begin{proof}
According to the discussion in \textbf{Remark 2},
the major component of the attacker's expected utility gain is $\sum_{1\leq i\leq |N'|}(u_i\cdot\sum_{1\leq k'\leq |K|}p_{k'}\mathcal{P}_{k'})$.
In this component, the only uncertain part is $\mathcal{P}(k')$.
The calculation of $\mathcal{P}(k')$ is based on Equations \ref{eq:Bayes} and \ref{eq:utilitygain}.

In Equation \ref{eq:Bayes}, $q(k|k')=\frac{q(k)\cdot q(k'|k)}{q(k')}$, $q(k)$ and $q(k')$ are known to the attacker,
while $q(k'|k)$ can be calculated by the attacker based on the defender's strategy.
Hence, Equation \ref{eq:utilitygain}, $U^{(k')}_a=\sum_{1\leq k\leq |K|}[q(k|k')\cdot u_k]$, can also be computed by the attacker.
Since the attacker's aim is to maximize his expected utility gain,
he needs to identify the appropriate probability distribution, $\mathcal{P}_{1}$, ..., $\mathcal{P}_{|K|}$, for use as his attack strategy
in order to maximize $\sum_{1\leq k'\leq |K|}\mathcal{P}_{k'}\cdot U^{(k')}_a$.
\end{proof}

\textbf{Remark 3} (The attacker's risk).
According to Theorem \ref{thm:AStrategy}, the attacker's optimal strategy should be to attack configuration $k'$ with probability $1$,
where $U^{(k')}_a$ is the maximum among $U^{(1)}_a$, ..., $U^{(|K|)}_a$.
However, such a deterministic strategy is highly predictable to the defender.
The attacker has to use a mixed strategy
that is as close as possible to the deterministic strategy.
The $epsilon$-greedy strategy is a good solution,
as it is similar to the deterministic strategy while also exhibits a reasonable degree of randomness.
In the $epsilon$-greedy strategy, the maximum $U^{(k')}_a$ is given a high probability,
while the remainder shares the remaining probability.

\begin{thm}\label{thm:AUtiligy}
Let $U^{(1)}_a\geq U^{(2)}_a\geq... U^{(|K|)}_a$,
the lower bound of the attacker's utility gain is $\frac{1}{|K|}\sum_{1\leq k'\leq |K|}U^{k'}_a$.
\end{thm}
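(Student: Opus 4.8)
The plan is to establish the lower bound by analyzing the worst case of the $epsilon$-greedy strategy defined in Equation \ref{eq:selection}. Recall that the attacker's expected utility gain under an arbitrary attack distribution $\mathcal{P}_1,...,\mathcal{P}_{|K|}$ is $\sum_{1\leq k'\leq |K|}\mathcal{P}_{k'}\cdot U^{(k')}_a$, as identified in Theorem \ref{thm:AStrategy}. First I would substitute the specific $epsilon$-greedy probabilities into this sum. Since $U^{(1)}_a$ is assumed to be the largest value, the configuration achieving the maximum is configuration $1$, so under Equation \ref{eq:selection} we have $\mathcal{P}_1=(1-e)+\frac{e}{|K|}$ and $\mathcal{P}_{k'}=\frac{e}{|K|}$ for all $k'\geq 2$. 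Plugging these in gives the attacker's gain as
\begin{equation}\label{eq:egreedy-gain}
\left[(1-e)+\frac{e}{|K|}\right]U^{(1)}_a+\frac{e}{|K|}\sum_{2\leq k'\leq |K|}U^{(k')}_a=(1-e)U^{(1)}_a+\frac{e}{|K|}\sum_{1\leq k'\leq |K|}U^{(k')}_a.
\end{equation}

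Next I would identify the regime that minimizes this quantity to obtain a guaranteed lower bound. The expression in Equation \ref{eq:egreedy-gain} is a convex combination that weights the top value $U^{(1)}_a$ heavily when $e$ is small. To extract the stated bound $\frac{1}{|K|}\sum_{1\leq k'\leq |K|}U^{(k')}_a$, the natural move is to consider the extreme of the exploration parameter: as $e\to 1$, the $(1-e)U^{(1)}_a$ term vanishes and the gain reduces exactly to $\frac{1}{|K|}\sum_{1\leq k'\leq |K|}U^{(k')}_a$, which is the uniform-strategy average. Because $U^{(1)}_a$ is the maximum, we have $U^{(1)}_a\geq\frac{1}{|K|}\sum_{1\leq k'\leq |K|}U^{(k')}_a$, so the coefficient $(1-e)$ multiplies a nonnegative surplus; hence for every $e\in[0,1]$ the gain in Equation \ref{eq:egreedy-gain} is at least the uniform average. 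This monotonicity in $e$ is the crux: the worst case for the attacker is full exploration, and even then he is guaranteed the average utility across all configurations.

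Therefore the lower bound $\frac{1}{|K|}\sum_{1\leq k'\leq |K|}U^{(k')}_a$ follows, since the $epsilon$-greedy gain never drops below the value it attains in the maximally random limit. I expect the main obstacle to be justifying rigorously that this uniform average is genuinely a lower bound over the operative range of $e$ rather than merely a boundary value; concretely, one must argue that $(1-e)\bigl(U^{(1)}_a-\frac{1}{|K|}\sum_{1\leq k'\leq |K|}U^{(k')}_a\bigr)\geq 0$, which rests entirely on $U^{(1)}_a$ being the ordered maximum as assumed in the statement. A secondary subtlety is connecting this attacker-side computation back to \textbf{Remark 2}: in Situations 1 and 2 the zero-sum property makes the attacker's gain equal to the defender's loss analyzed in Theorems \ref{thm:DUtility1} and \ref{thm:DUtility2}, so I would note that the $U^{(k')}_a$ values here are exactly the per-configuration expected utilities already characterized, ensuring the bound is expressed in consistent quantities.
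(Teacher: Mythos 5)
Your proposal is correct and follows essentially the same route as the paper's own proof: substitute the $epsilon$-greedy probabilities to obtain $(1-e)U^{(1)}_a+\frac{e}{|K|}\sum_{1\leq k'\leq |K|}U^{(k')}_a$, then observe that this expression is minimized at $e=1$, yielding the uniform average. Your explicit justification of the monotonicity via the nonnegative surplus $(1-e)\bigl(U^{(1)}_a-\frac{1}{|K|}\sum_{1\leq k'\leq |K|}U^{(k')}_a\bigr)$ is a slightly more rigorous rendering of the paper's one-line monotonicity claim, but it is the same argument.
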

\begin{proof}
By using the $epsilon$-greedy strategy,
the attacker's expected utility gain is \\
$U^{(1)}_a[(1-e)+\frac{e}{|K|}]+\frac{e}{|K|}\sum_{2\leq k'\leq |K|}U^{(k')}_a$\\
$=(1-e)U^{(1)}_a+\frac{e}{|K|}\sum_{1\leq k'\leq |K|}U^{(k')}_a$.

This expected utility gain decreases monotonically with the increase of the value of $e$.
Thus, when $e=1$, the expected utility reaches the minimum: $\frac{1}{|K|}\sum_{1\leq k'\leq |K|}U^{k'}_a$.
\end{proof}

\subsection{Analysis of the equilibria}
We provide a characterization of equilibria
and leave the deep investigation of equilibria in this highly dynamic security game as one of our future studies.

In our approach, the number of systems may change at each round
due to the introduction of Laplace noise (recall Algorithm \ref{alg:DP}).
Therefore, the available strategies to the defender also change at each round,
where each strategy is interpreted as an assignment of systems to configurations, i.e., obfuscation.
By comparison, the available strategies to the attacker is fixed in the game,
because 1) the number of configurations is fixed;
and 2) the attacker selects only one configuration to attack at each round,
i.e., attacking the systems associated with a configuration.

Formally, let $S$ be the set of all potentially attainable strategies of the defender,
and $S^t\subset S$ be the set of the defender's strategies at round $t$.
Correspondingly, let $A^t\in\mathbb{R}^{m_t\times n}$ be the payoff matrix at round $t$,
where $m_t=|S^t|$ is the number of the defender's strategies at round $t$
and $n$ is the number of the attacker's strategies.
Let $\Delta^t_d$ be the set of probability distributions over the defender's available strategies at round $t$
and $\Delta_a$ be the set of probability distributions over the attacker's available strategies.
Then, at round $t$, if the defender uses a probability distribution $\mathbf{x}\in\Delta^t_d$
and the attacker uses a probability distribution $\mathbf{y}\in\Delta_a$,
the defender's utility loss is $\mathbf{x}A^t\mathbf{y}$.
The defender wishes to minimize the quantity of $\mathbf{x}A^t\mathbf{y}$,
while the attacker wants to maximize it.
This is a \emph{minimax} problem.
The solution of this problem is an equilibrium of the game \cite{Gilpin08}.
However, the defender's strategy set $S^t$ at each round $t$ is randomly generated using the differentially private Laplace mechanism
and $S^t$ is unknown to the attacker.
Hence, the attacker is difficult to precisely compute the solution
which can force the defender not to unilaterally change her strategy.
This finding has also been demonstrated in our experimental results
which will be given in detail in Section \ref{sec:experiment}.

\section{Experiment and Analysis}\label{sec:experiment}
\subsection{Experimental setup}
In the experiments, three defender approaches are evaluated,
which are our approach, referred to as \emph{DP-based},
a deterministic greedy approach \cite{Schlenker18}, referred to as \emph{Greedy},
and a mixed greedy approach, referred to as \emph{Greedy-Mixed}.
The \emph{Greedy} approach obfuscates systems using a greedy and deterministic strategy to minimize the defender's expected utility loss.
The \emph{Greedy-Mixed} approach obfuscates systems using a greedy and mixed strategy,
where a high utility system is obfuscated to a low utility system with a certain probability.

Two metrics are used to evaluate the three approaches: the attacker's utility gain and the defender's cost.
For the first metric, the attacker's utility gain, its amount is the same as the amount of the defender's utility loss.
We use the metric attacker’s utility gain by following reference \cite{Schlenker18}.
Certainly, using the defender’s utility loss as a metric will reach the same results.
For the second metric, the defender's cost, it is the defender’s deployment cost
used to obfuscate configurations of systems.

Against the three defender approaches, two attacker approaches are adopted,
which are the Bayesian inference approach and the deep reinforcement learning (DRL) approach.
The Bayesian inference approach has been described in Section \ref{sub:Bayesian}.
The DRL approach involves a deep neural network
which consists of an input layer, two hidden layers and an output layer.
The input is the state of the attacker,
which is defined as the selected configurations in the last eight rounds
and the corresponding utility received by attacking the systems in these configurations.
The output is the selected configuration
that the attacker will attack in this round.
Moreover, each of the hidden layer has ten neurons.
The hyper-parameters used in DRL are listed as follows:
learning rate $\alpha=0.1$, discount factor $\gamma=0.9$,
$epsilon$-greedy selection probability $epsilon=0.9$,
$batch\_size=32$ and $memory\_size=2000$.

These two techniques, Bayesian inference and deep reinforcement learning,
are powerful enough and prevalent in cybersecurity \cite{Xie10,Nguyen19b,Chivukula20}.
By comparison, deep reinforcement learning is more powerful than Bayesian inference,
but it is more complex than Bayesian inference and it needs a long training period.
Therefore, we take both of them into our evaluation.

The experiments are conducted in four scenarios.

\noindent\textbf{Scenario 1}: The numbers of systems and configurations vary,
  but the average number of systems associated with each configuration is fixed.
  This scenario is used to evaluate the scalability of the three approaches.
  It simulates the real-world situation that
  different organizations have different scales of networks.

\noindent\textbf{Scenario 2}: The number of systems is fixed,
  but the number of configurations varies.
  Thus, the average number of systems associated with each configuration also varies.
 This scenario is used to evaluate the adaptivity of the three approaches.
  It simulates the real-world situation that
  an organization updates the configurations of systems from time to time.

\noindent\textbf{Scenario 3}: The number of both systems and configurations is fixed,
  but the value of the defender's deployment budget $B_d$ varies.
  This scenario is used to evaluate the impact of deployment budget value on the performance of the three approaches.
  It simulates the real-world situation that
  different organizations have different deployment budgets
  or an organization adjusts its deployment budget sometimes.

\noindent\textbf{Scenario 4}: The number of both the systems and configurations is fixed,
  but the value of the privacy budget $\epsilon$ varies.
  This scenario is used to evaluate the impact of the privacy budget value on the performance of our approach.
  It simulates the real-world situation that
  different organizations have different requirements of privacy level
  and thus requires different privacy budgets.


In a given experimental setting, as arbitrarily changing the number of systems may incur security issues,
in each round we use the differential privacy mechanism (Algorithm \ref{alg:DP})
to compute the number of systems in each configuration.
Therefore, the number of systems is strategically determined and their privacy can be guaranteed.

The value of $\Delta S$ is set to $1$.
The utility of a configuration, $u_k$, is uniformly drawn from $[1,20]$.
The defender's obfuscation cost, $c(k',k)$, is set to $0.1\cdot u'_k$.
The defender's honeypot deployment cost, $h_k$, is set to $0.2\cdot u_k$.
The defender's utility loss, $l_k$, for taking a system with configuration $k$ offline is set to $0.15\cdot u_k$.
These settings are experimentally set up to yield the best results.
Here, ``best'' means the most representative.
Increasing these costs will increase both the cost and utility loss of the defender.
On one hand, these costs are used by the defender to obfuscate configurations of systems and deploy honeypots.
Thus, increasing these costs will increase the defender’s cost.
On the other hand, increasing these costs will result in the defender using up her deployment budget too early.
The defender then cannot perform obfuscation or deployment.
Thus, her utility loss will increase.
On the contrary, decreasing these costs will decrease both the cost and utility loss of the defender.
However, excessively decreasing these costs will render the experimental results meaningless.
Therefore, we set representative parameter values to
balance the defender’s performance and the meanings of the experimental results.
The experimental results are obtained by averaging $1000$ rounds of the game.
\subsection{Experimental results}
\subsubsection{Scenario 1}
\begin{figure}[ht]
\vspace{-6mm}
\centering
	\begin{minipage}{0.45\textwidth}
   \subfigure[\scriptsize{The attacker's utility in different scales of networks}]{
    \includegraphics[width=0.45\textwidth, height=3cm]{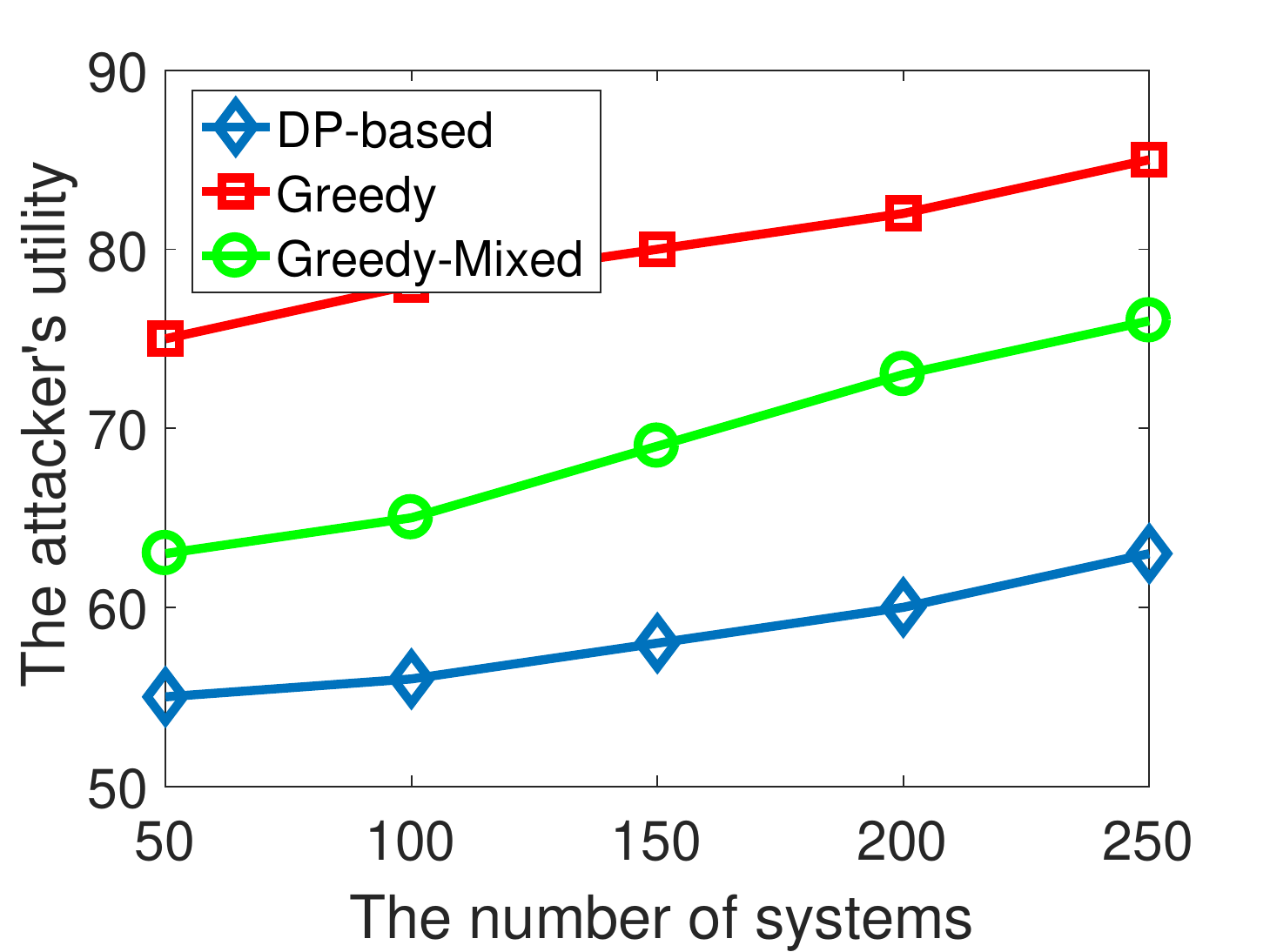}
			\label{fig:S1Utility}}
    \subfigure[\scriptsize{The defender's cost in different scales of networks}]{
    \includegraphics[width=0.45\textwidth, height=3cm]{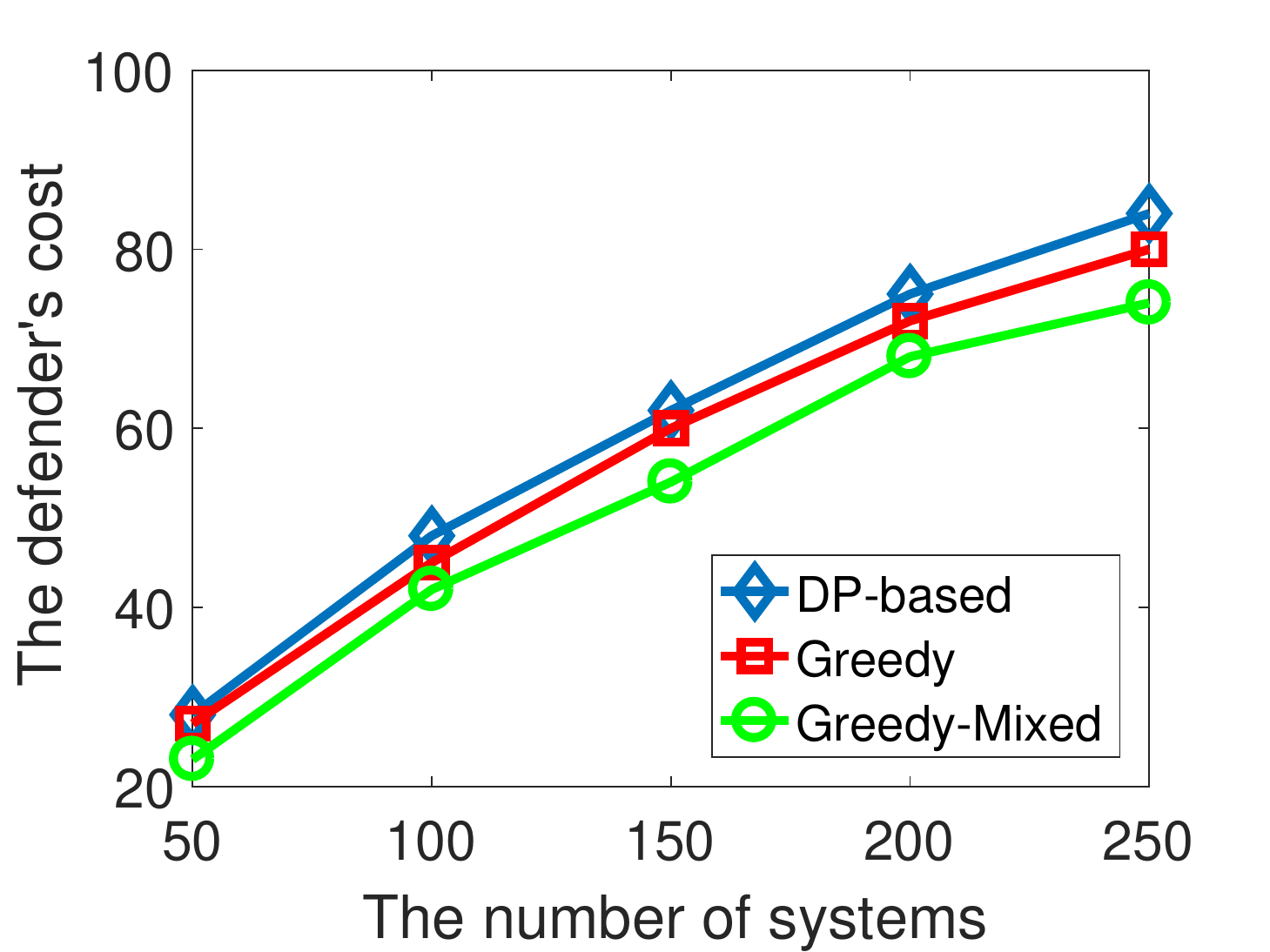}
			\label{fig:S1Cost}}\\[2ex]
    \end{minipage}
    \vspace{-5mm}
	\caption{The three approaches' performance in Scenario 1}
\vspace{-2mm}
	\label{fig:S1}
\end{figure}

Fig. \ref{fig:S1} demonstrates the performance of the three approaches in Scenario 1.
The number of systems varies from $50$ to $250$,
and the number of configurations varies from $5$ to $25$, accordingly.
The privacy budget $\epsilon$ is fixed at $0.3$.
The cost budget $B_d$ is fixed at $1000$ for each round.

As the number of systems increases, across the three approaches,
the attacker's utility gain increases slightly and linearly
while the defender's cost increases sub-linearly.
Along with the increase in the number of systems, the number of configurations also increases.
More configurations gives the attacker more choices.
By using Bayesian inference, the attacker can choose a configuration with high utility, to attack.
Thus, the attacker's utility gain increases linearly.
As the number of systems increases, according to Algorithms \ref{alg:deployment1}, \ref{alg:deployment2} and \ref{alg:deployment3},
the defender deals with more systems.
Thus, the defender's cost inevitably increases provided that the budget $B_d$ is large enough.

Comparing our \emph{DP-based} approach to the \emph{Greedy} and \emph{Greedy-Mixed} approaches,
the attacker in the \emph{DP-based} approach achieves about $30\%$ and $12\%$ less utility than in the \emph{Greedy} and \emph{Greedy-Mixed} approaches, respectively.
Moreover, the defender in the \emph{DP-based} approach incurs about $3\%$ and $5\%$ more cost than in the \emph{Greedy} and \emph{Greedy-Mixed} approaches, respectively.
In the \emph{DP-based} approach, the defender can not only obfuscate systems,
but can also deploy honeypots to attract the attacker.
The cost of deploying a honeypot exceeds that of obfuscating a system.
However, a honeypot results in a negative utility to the attacker,
while obfuscating a system can only reduce the attacker's utility gain.
By comparison, the defender in the \emph{Greedy} and \emph{Greedy-Mixed} approaches only obfuscates systems.
The difference between defender's costs among the three approaches is negligible.
This demonstrates that in our \emph{DP-based} approach,
the defender uses almost the same cost as the other two approaches
to achieve much better results, i.e., lowering the attacker’s utility gain.

\begin{figure}[ht]
\vspace{-6mm}
\centering
	\begin{minipage}{0.51\textwidth}
    \subfigure[\scriptsize{The attacker's utility as game progresses in \emph{DP-based}}]{
    \includegraphics[width=0.45\textwidth, height=3cm]{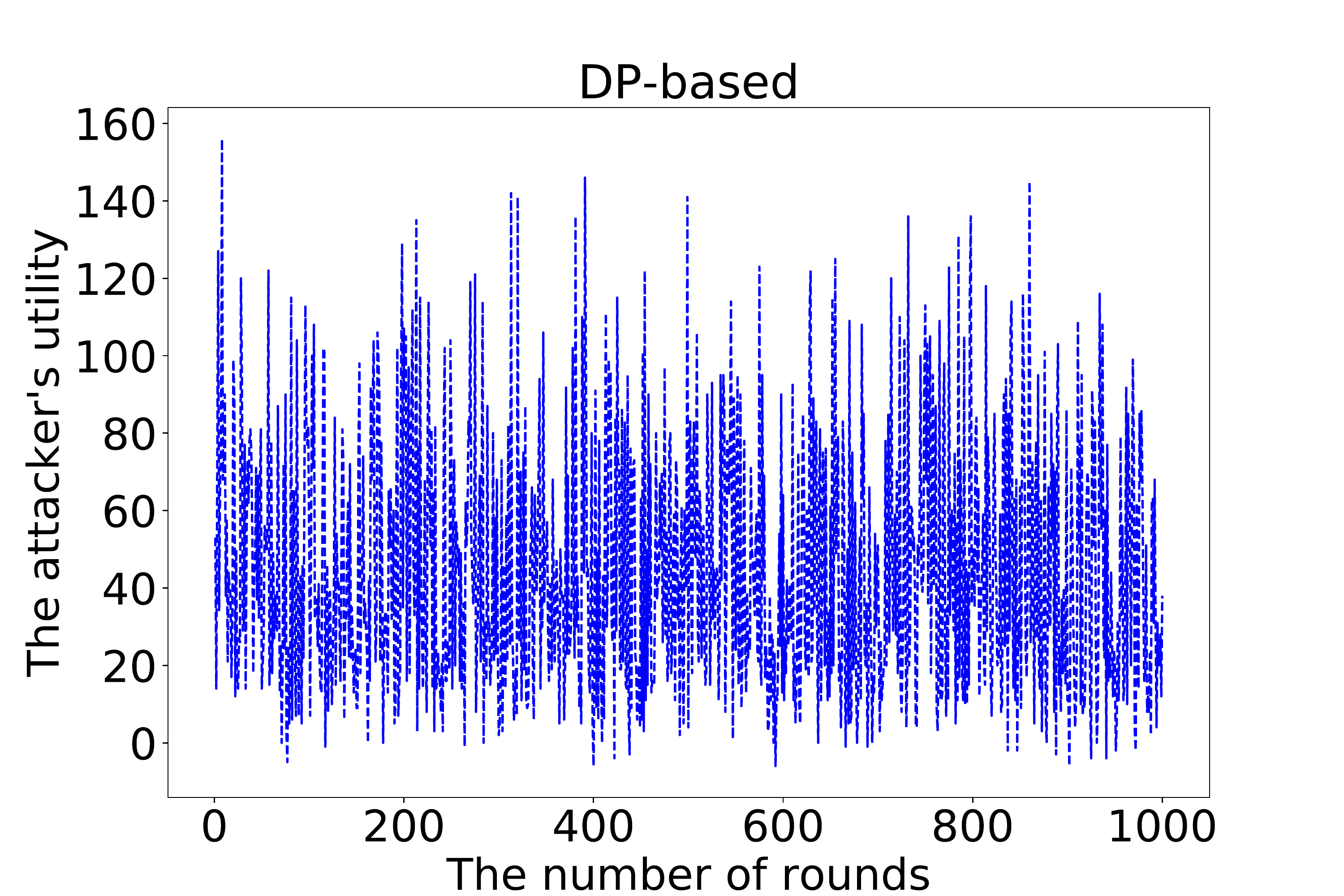}
			\label{fig:S1DPUtility}}
	    \subfigure[\scriptsize{The defender's cost as game progresses in \emph{DP-based}}]{
    \includegraphics[width=0.45\textwidth, height=3cm]{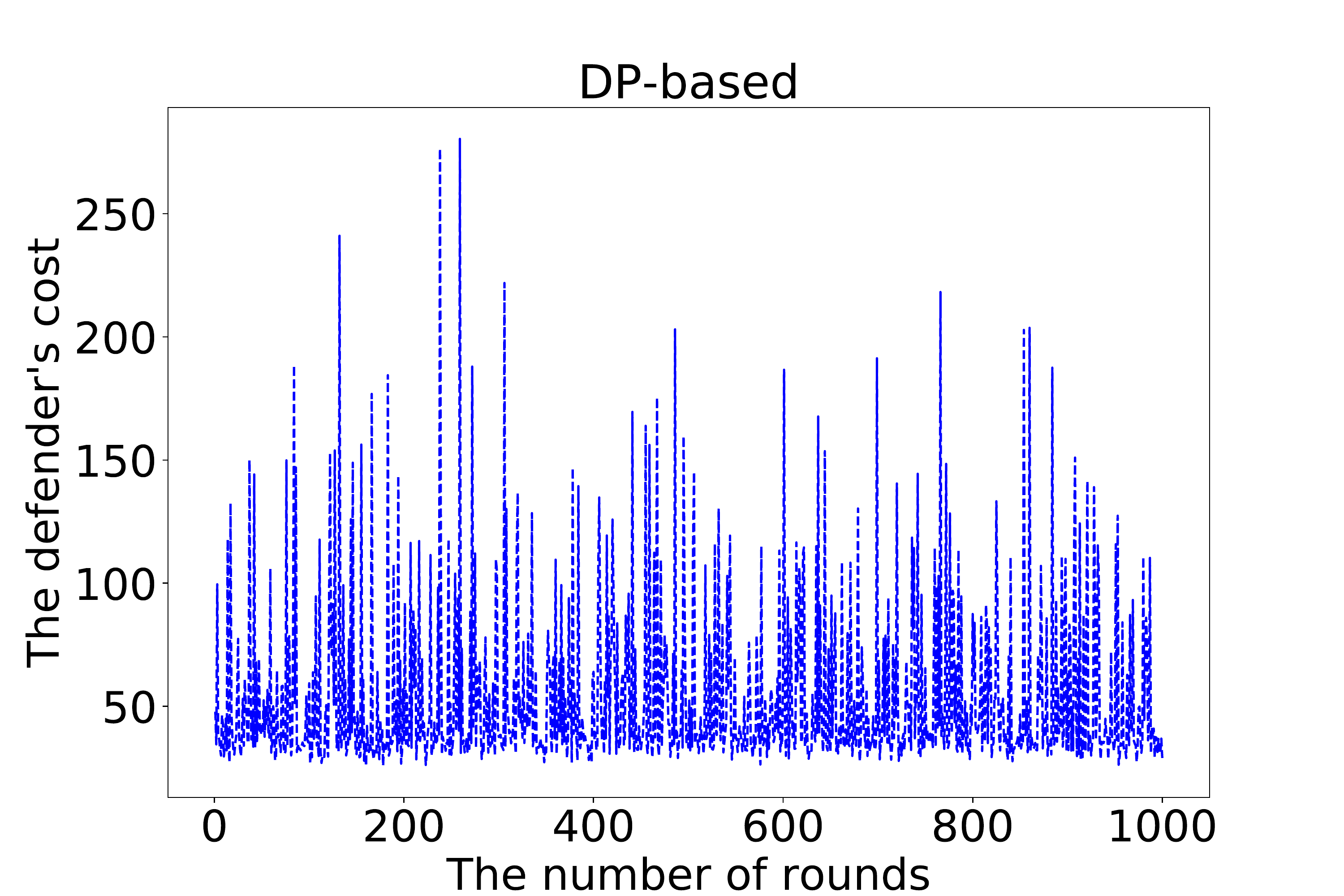}
			\label{fig:S1DPCost}}\\[2ex]
			
        \subfigure[\scriptsize{The attacker's utility as game progresses in \emph{Greedy}}]{
    \includegraphics[width=0.45\textwidth, height=3cm]{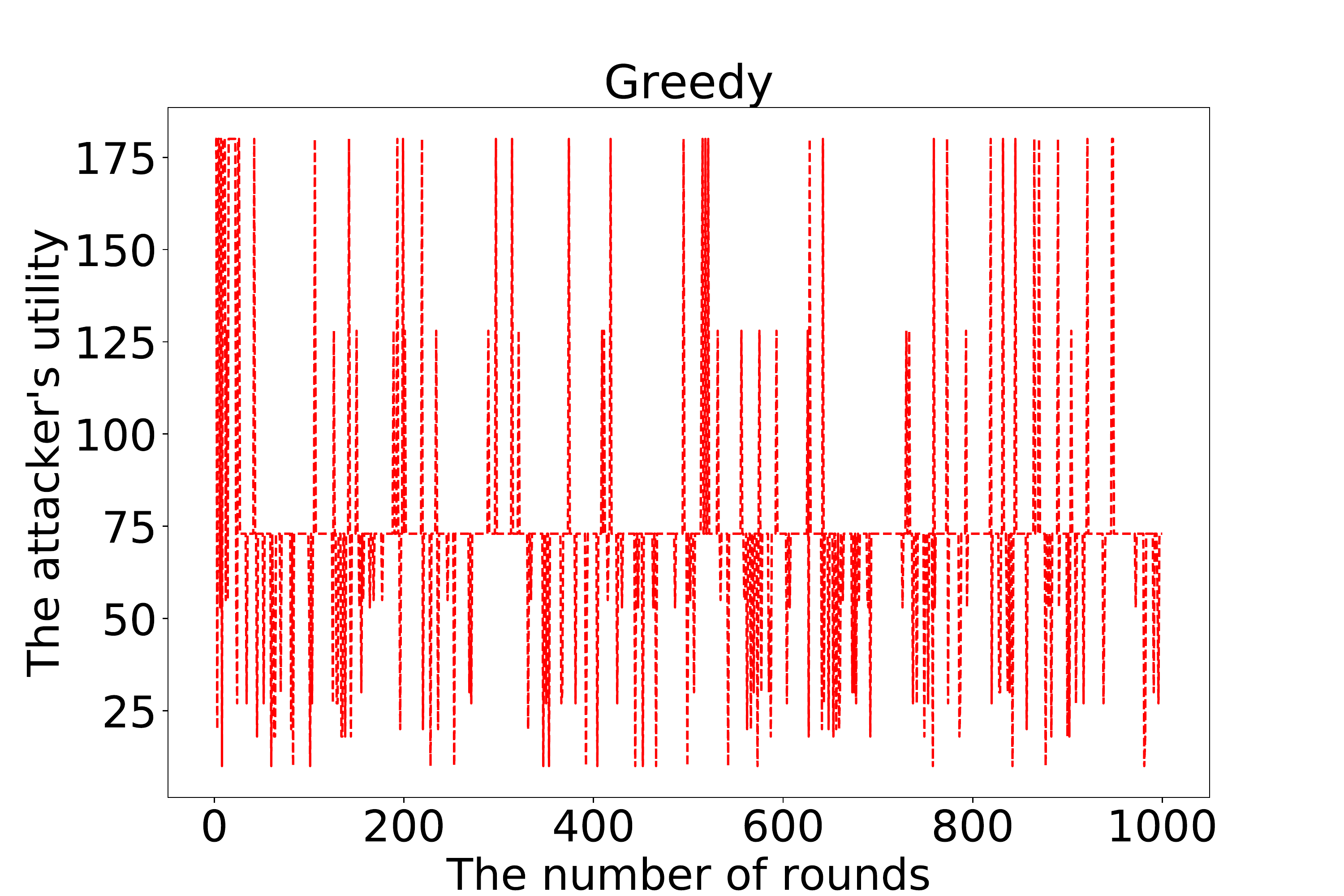}
			\label{fig:S1GreedyUtility}}
		    \subfigure[\scriptsize{The defender's cost as game progresses in \emph{Greedy}}]{
    \includegraphics[width=0.45\textwidth, height=3cm]{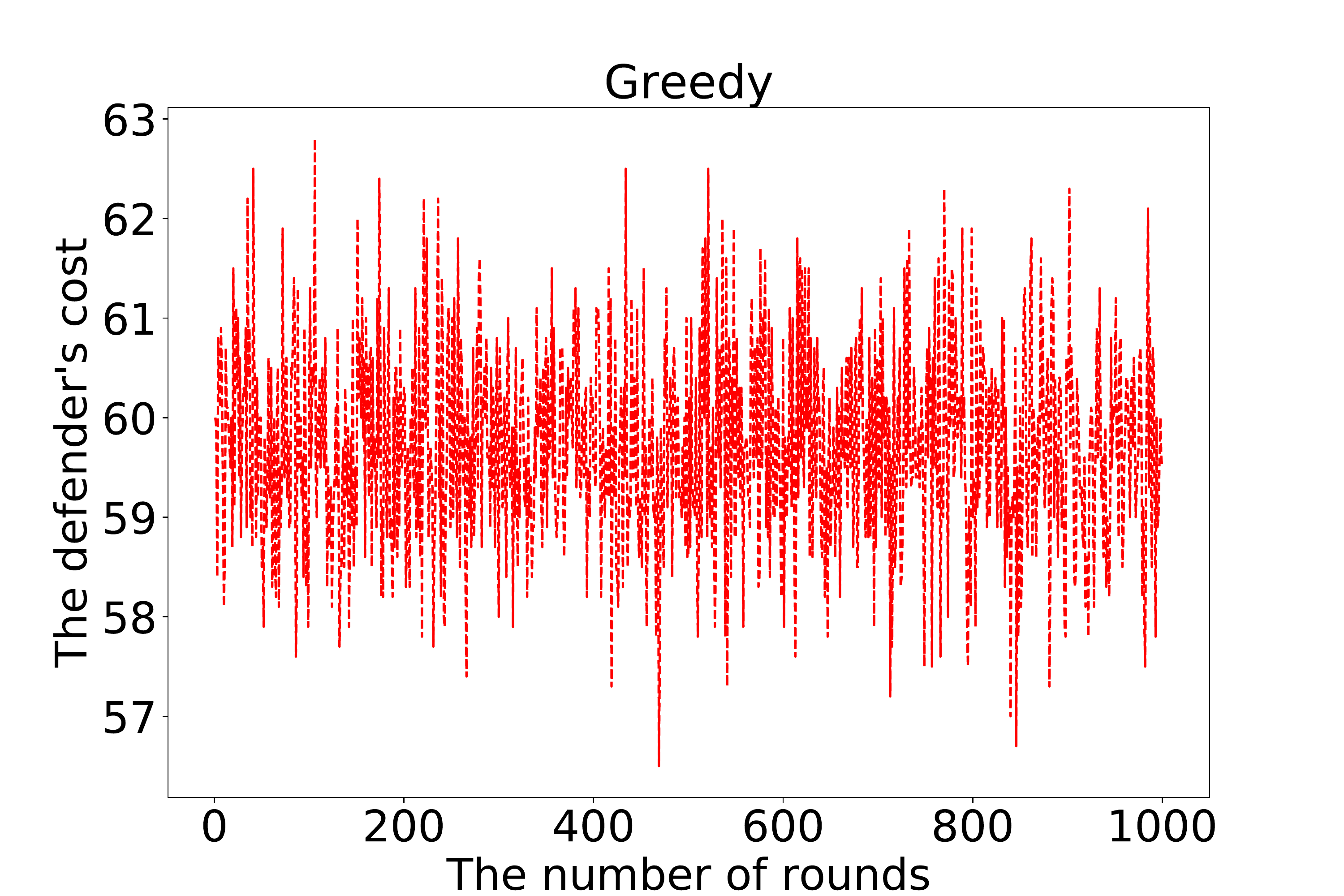}
			\label{fig:S1GreedyCost}}\\[2ex]
			
    \subfigure[\scriptsize{The attacker's utility as game progresses in \emph{Greedy-Mixed}}]{
    \includegraphics[width=0.45\textwidth, height=3cm]{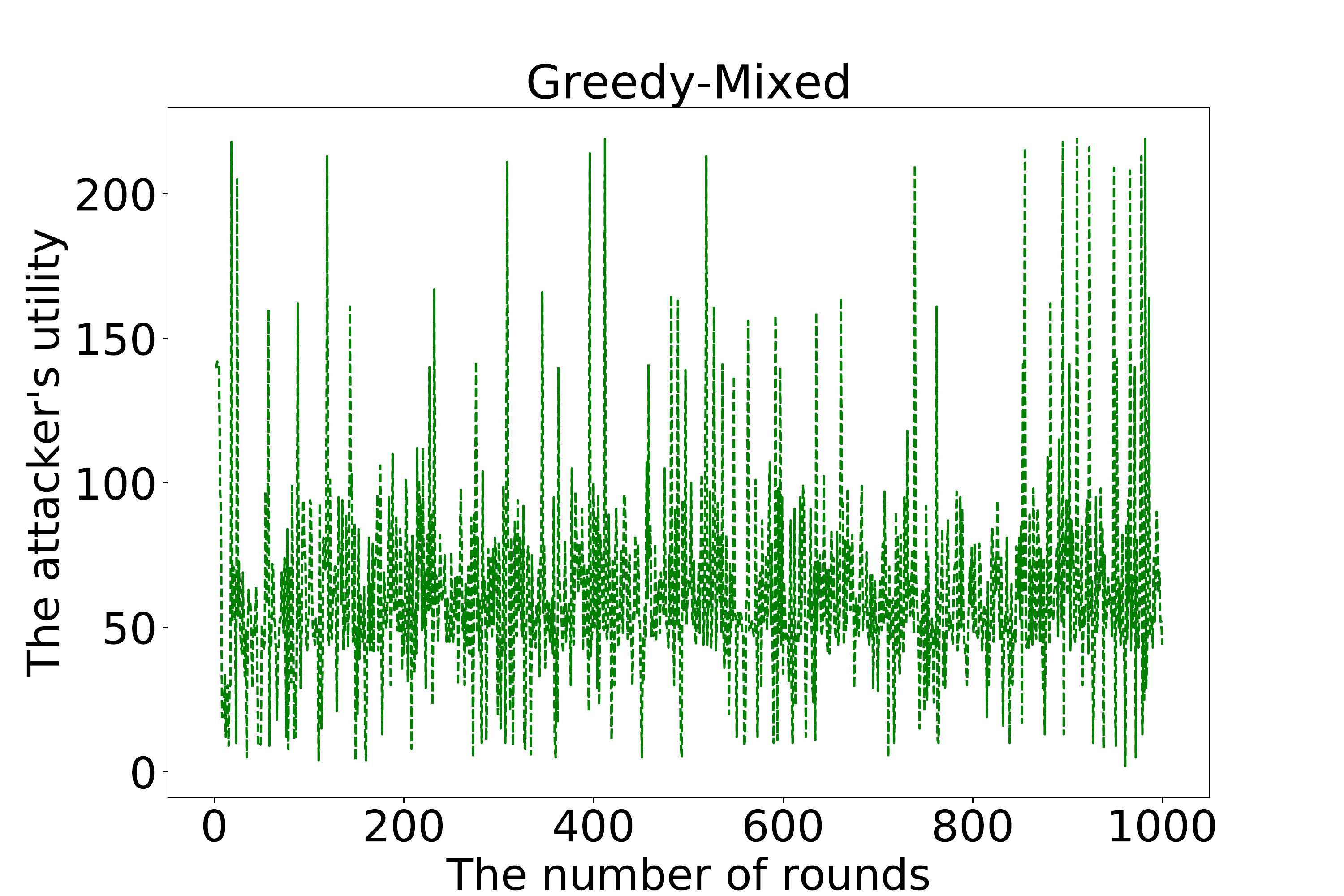}
			\label{fig:S1GreedyMixedUtility}}
    \subfigure[\scriptsize{The defender's cost as game progresses in \emph{Greedy-Mixed}}]{
    \includegraphics[width=0.45\textwidth, height=3cm]{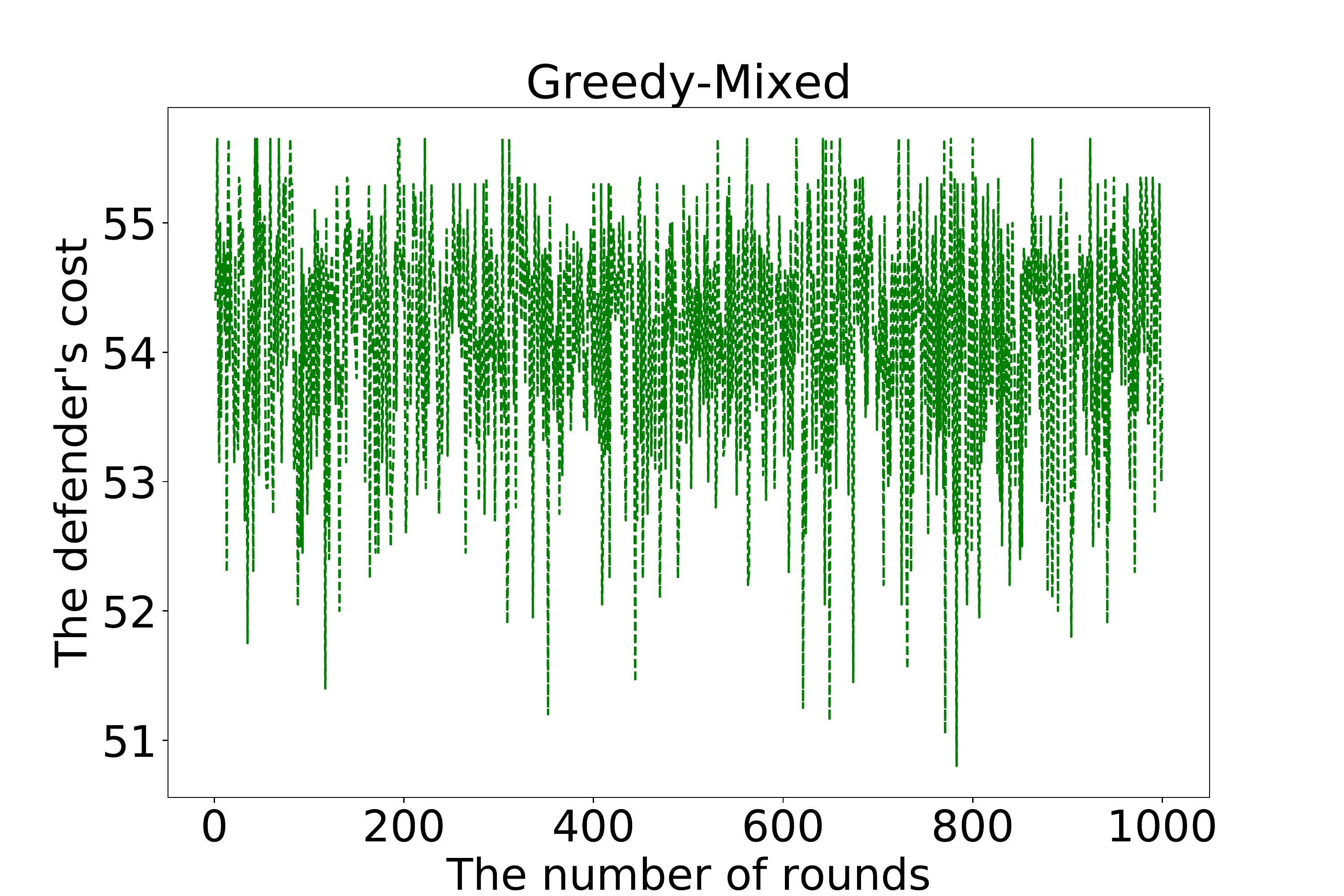}
			\label{fig:S1GreedyMixedCost}}\\
    \end{minipage}
    \vspace{-4mm}
	\caption{The three approaches' performance as game progress in Scenario 1}
\vspace{-2mm}
	\label{fig:S1Round}
\end{figure}

Fig. \ref{fig:S1Round} shows the variation in the attacker's utility gain
and the defender's cost in the three approaches as the game progresses in Scenario 1.
The number of systems is fixed at $100$,
and the number of configurations is fixed at $10$.

In Fig. \ref{fig:S1GreedyUtility}, which depicts the \emph{Greedy} approach, there are a number of plateaus.
These plateaus indicate the steadiness of the attacker's utility gain,
which implies that the attacker has predicted the defender's strategy.
The attacker can thus adopt an optimal strategy to maximize his utility gain.
By comparison, in Fig. \ref{fig:S1DPUtility} and Fig. \ref{fig:S1GreedyMixedUtility},
which depicts the \emph{DP-based} and \emph{Greedy-Mixed} approaches, respectively, there is no plateau.
This means that the attacker cannot predict the defender's strategy
and adopts only random strategies.
This finding also demonstrates that equilibria may not exist between the defender and attacker
if the defender uses our \emph{DP-based} approach,
as the attacker’s utility fluctuates all the time with no equilibrium.
However, when the defender uses the \emph{Greedy} approach
which does not change the available strategies of the defender,
the attacker can predict the defender’s strategy and may reach an equilibrium.
Particularly, by comparing Figs. \ref{fig:S1DPUtility}, \ref{fig:S1GreedyUtility} and \ref{fig:S1GreedyMixedUtility},
we can see that the shape of Fig. \ref{fig:S1GreedyMixedUtility} is more similar to Fig. \ref{fig:S1GreedyUtility} than Fig. \ref{fig:S1DPUtility}.
This implies that although the defender in the \emph{Greedy-Mixed} approach uses a mixed strategy,
the attacker can still predict the defender's strategy to some extent.

By comparing Figs. \ref{fig:S1DPCost}, \ref{fig:S1GreedyCost} and \ref{fig:S1GreedyMixedCost},
we can see that the variation of the defender's cost in the \emph{Greedy} and \emph{Greedy-Mixed} approaches
is more stable than the \emph{DP-based} approach.
This is due to the fact that in the \emph{DP-based} approach,
the defender can deploy honeypots which incurs extra cost.
However, as shown in Fig. \ref{fig:S1Cost},
the defender's average cost in the \emph{DP-based} approach still stays at a relatively low level
in comparison with the \emph{Greedy} and \emph{Greedy-Mixed} approaches.


\begin{figure}[ht]
\vspace{-4mm}
\centering
	\begin{minipage}{0.51\textwidth}
	\subfigure[\scriptsize{The attacker's utility with DRL in different scales of networks}]{
	\includegraphics[width=0.45\textwidth, height=3cm]{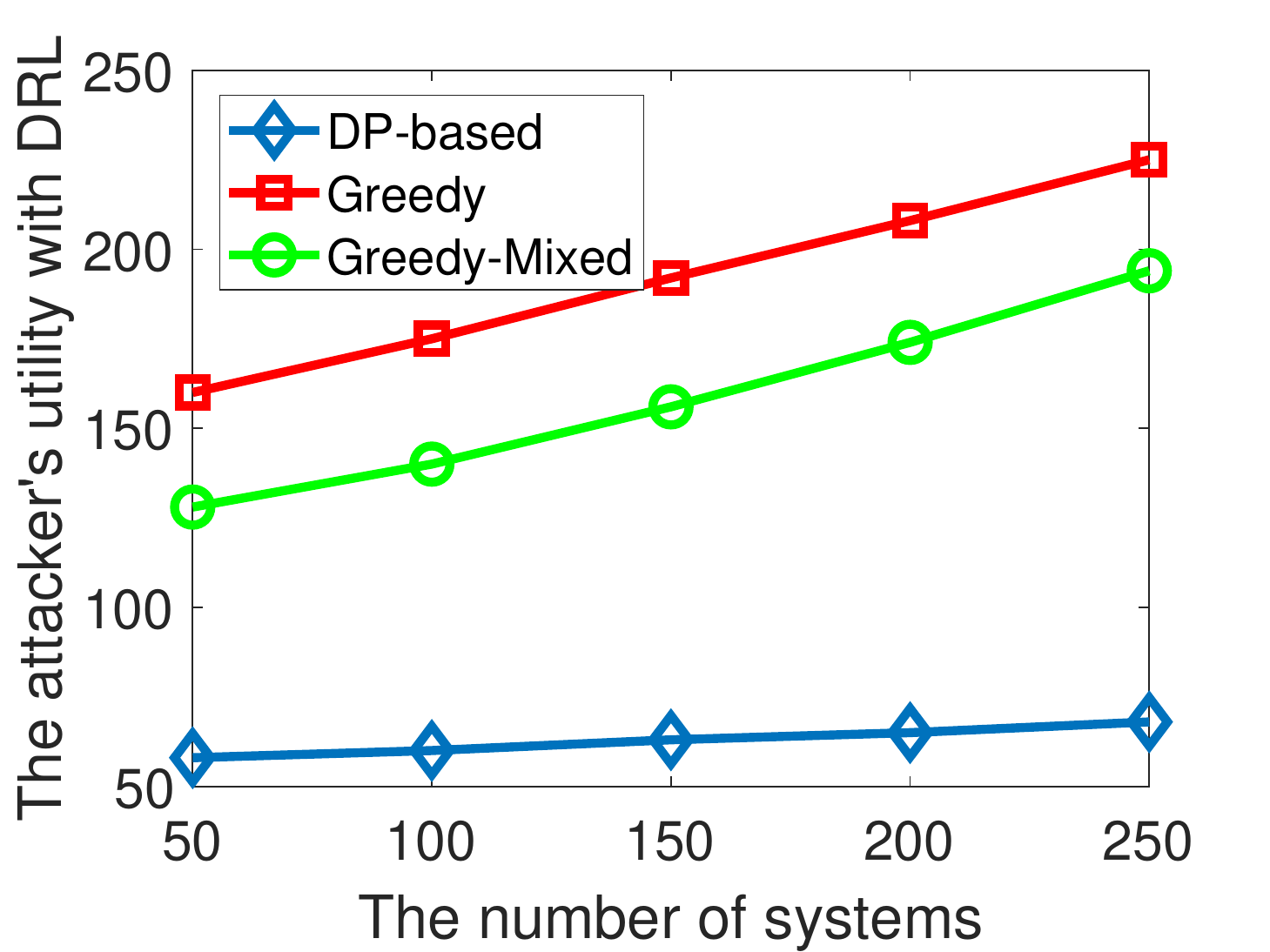}
	\label{fig:S1DRL}}
    \subfigure[\scriptsize{The attacker's utility with DRL as game progresses in \emph{DP-based}}]{
    \includegraphics[width=0.45\textwidth, height=3cm]{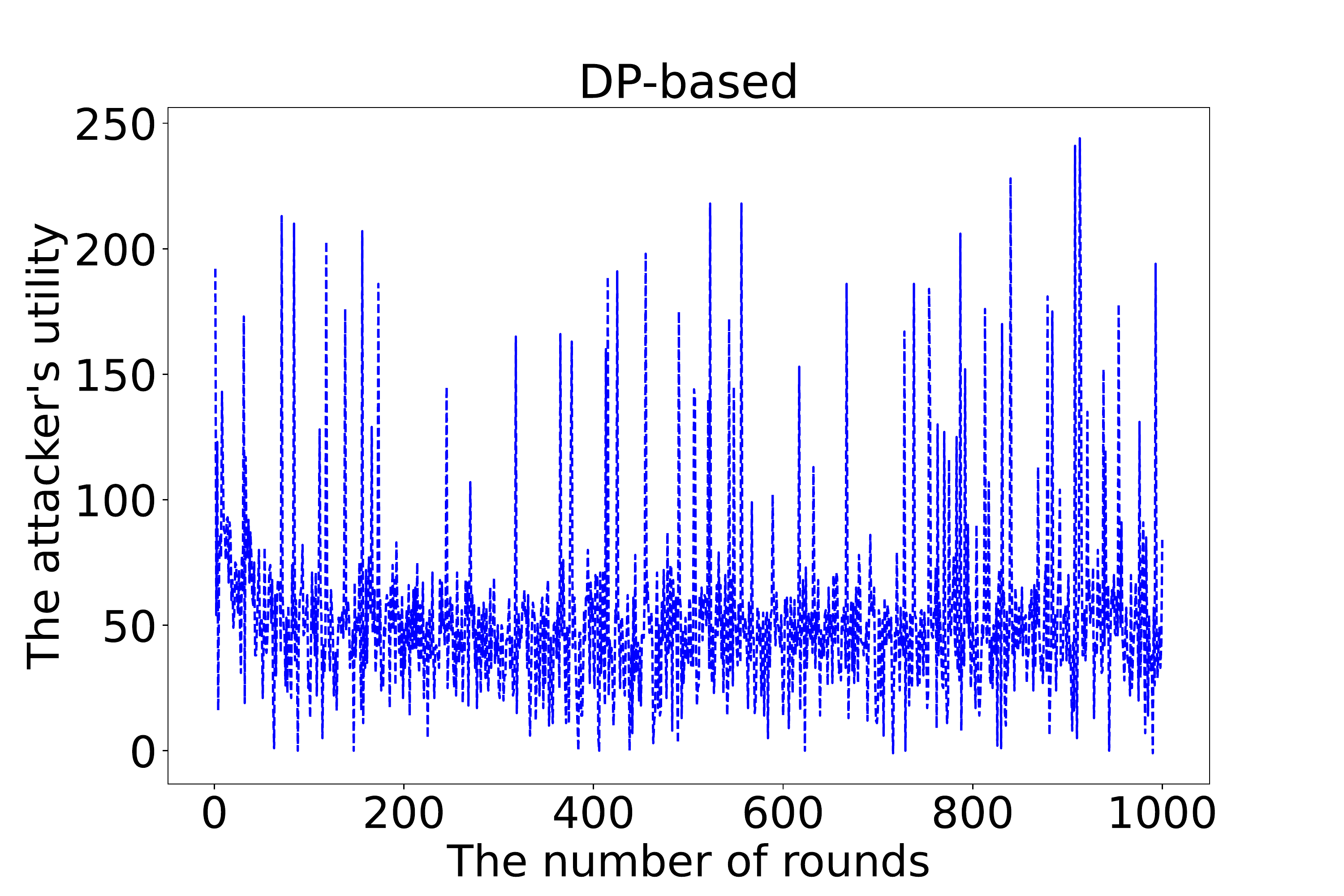}
			\label{fig:S1DPUtilityDRL}}\\[2ex]
        \subfigure[\scriptsize{The attacker's utility with DRL as game progresses in \emph{Greedy}}]{
    \includegraphics[width=0.45\textwidth, height=3cm]{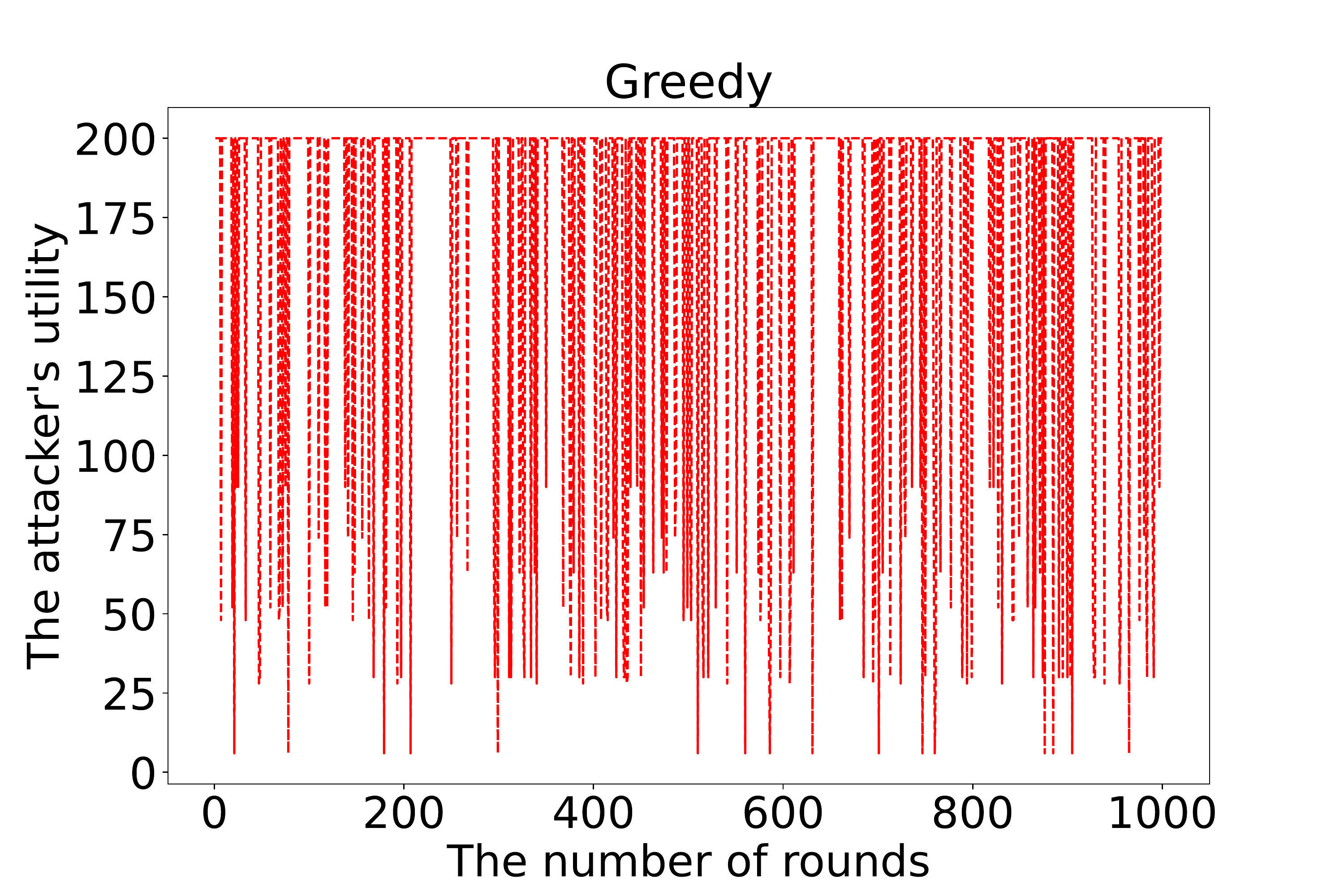}
			\label{fig:S1GreedyUtilityDRL}}
    \subfigure[\scriptsize{The attacker's utility with DRL as game progresses in \emph{Greedy-Mixed}}]{
    \includegraphics[width=0.45\textwidth, height=3cm]{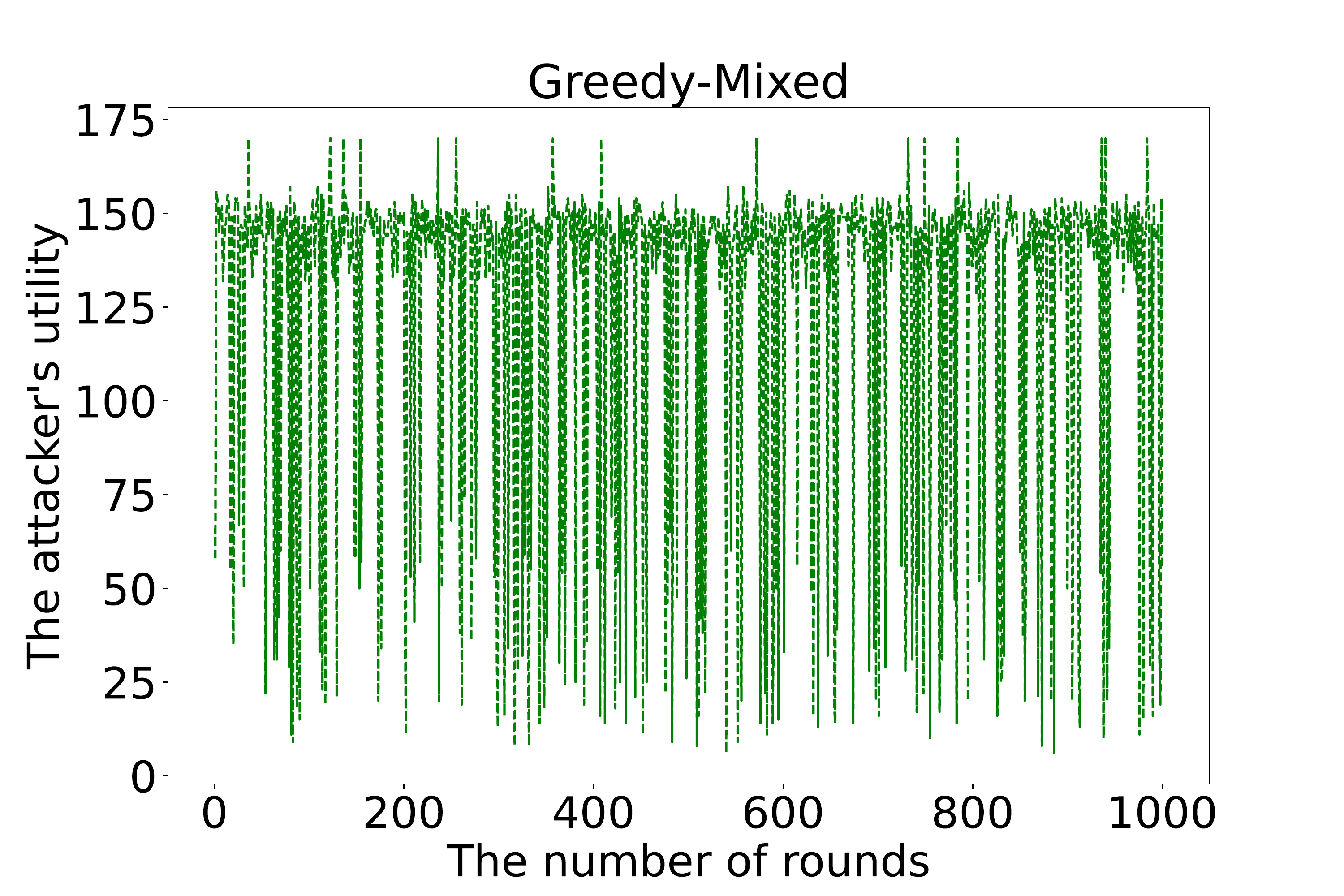}
			\label{fig:S1GreedyMixedUtilityDRL}}
    \end{minipage}
    \vspace{-4mm}
	\caption{The three approaches against the DRL attacker in Scenario 1}
\vspace{-2mm}
	\label{fig:S1RoundDRL}
\end{figure}

Fig. \ref{fig:S1RoundDRL} demonstrates the performance of the three defender approaches against the DRL attacker.
Compared to the Bayesian inference attacker (Figs. \ref{fig:S1} and \ref{fig:S1Round}), the DRL attacker can obtain more utility,
when the defender adopts either the \emph{Greedy} or the \emph{Greedy-Mixed} approach.
This is because the \emph{Greedy} approach is deterministic
and thus the \emph{Greedy} defender's strategies are easy to be learned by the DRL attacker.
Although the \emph{Greedy-Mixed} approach introduces randomization to some extent,
the major component of the approach is still greedy.
Therefore, it is still not difficult for a powerful DRL attacker to learn
the \emph{Greedy-Mixed} defender's strategies.
However, when the defender employs our \emph{DP-based} approach,
the two types of attackers obtain almost the same utility.
To explain, our \emph{DP-based} approach introduces differentially private random noise into the configurations.
As analyzed in Section V-A, with the protection of differential privacy,
it is hard for an attacker to deduce the \emph{DP-based} defender's strategies
irrespective of the attacker's reasoning power.

Against the two types of attackers,
the defender uses almost the same cost.
This is because in the \emph{Greedy} and \emph{Greedy-Mixed} approaches,
defender's strategies are independent of the attacker's strategies.
Thus, the defender's cost is independent of the attacker's types.
In our \emph{DP-based} approach, the defender's strategies do take the attacker's strategies into consideration.
However, due to the use of differential privacy mechanisms,
the utility loss of the defender against the two attacker approaches is almost the same as shown in Figs. \ref{fig:S1Utility} and \ref{fig:S1DRL},
given that the defender's utility loss is identical to the attacker's utility gain.
Hence, according to Equations \ref{eq:utilityloss1}, \ref{eq:utilityloss2}, \ref{eq:utilityloss3}
and Algorithms \ref{alg:deployment1}, \ref{alg:deployment2}, \ref{alg:deployment3},
as the utility loss of the defender stays steady,
the defender's strategies are not affected much.
Thus, the defender's cost remains almost the same.
For simplicity, the figures regarding the defender's cost spent against the DRL attacker are not presented.
Moreover, in the remaining scenarios, the performance variation tendency of the three defender approaches against the DLR attacker
is similar to the tendency shown in Fig. \ref{fig:S1RoundDRL}.
For simplicity and clarity, they are not included in the paper.

\subsubsection{Scenario 2}
\begin{figure}[ht]
\vspace{-4mm}
\centering
	\begin{minipage}{0.45\textwidth}
   \subfigure[\scriptsize{The attacker's utility with different numbers of configurations}]{
    \includegraphics[width=0.45\textwidth, height=3cm]{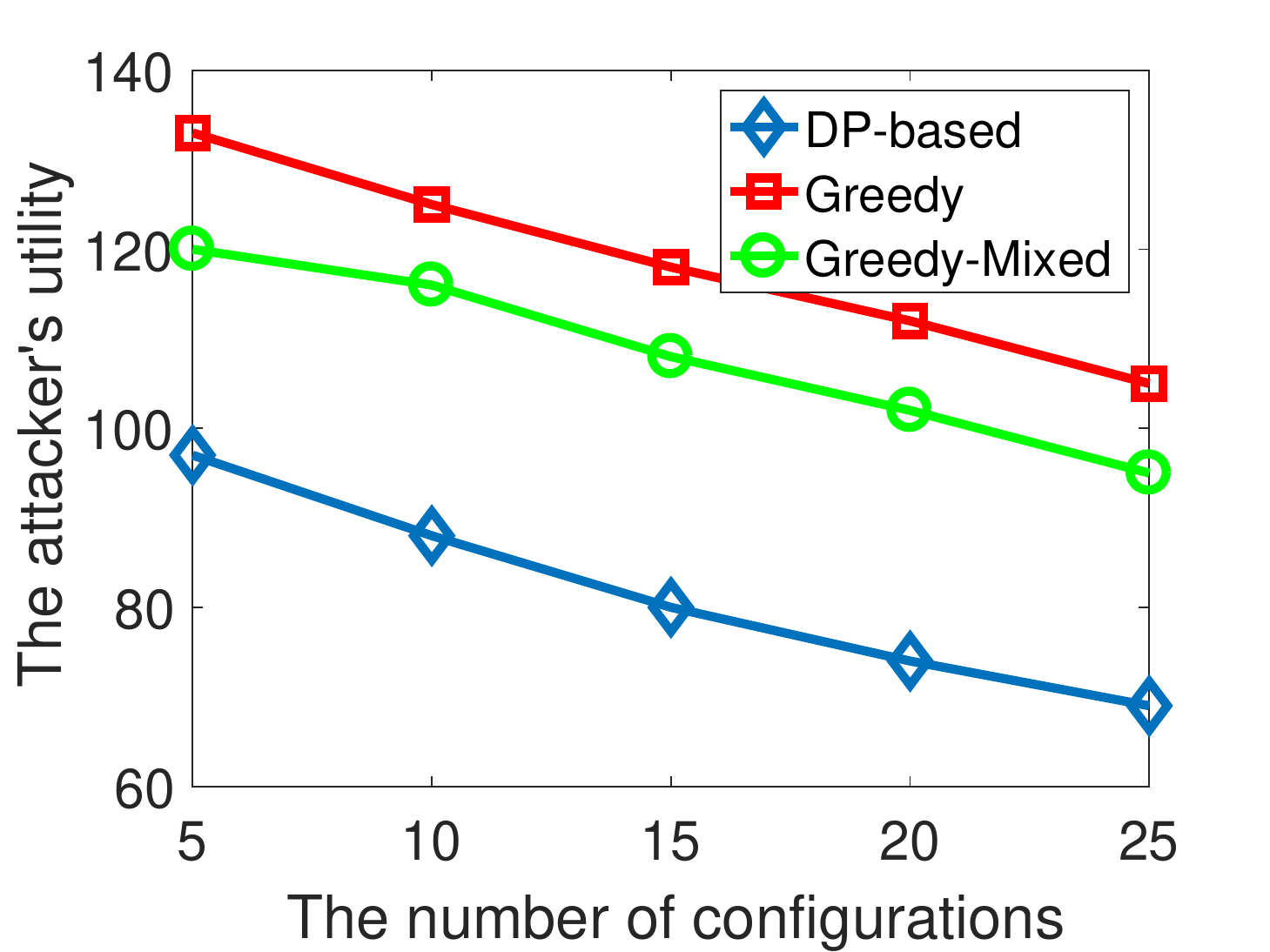}
			\label{fig:S2Utility}}
   \subfigure[\scriptsize{The defender's cost with different numbers of configurations}]{
    \includegraphics[width=0.45\textwidth, height=3cm]{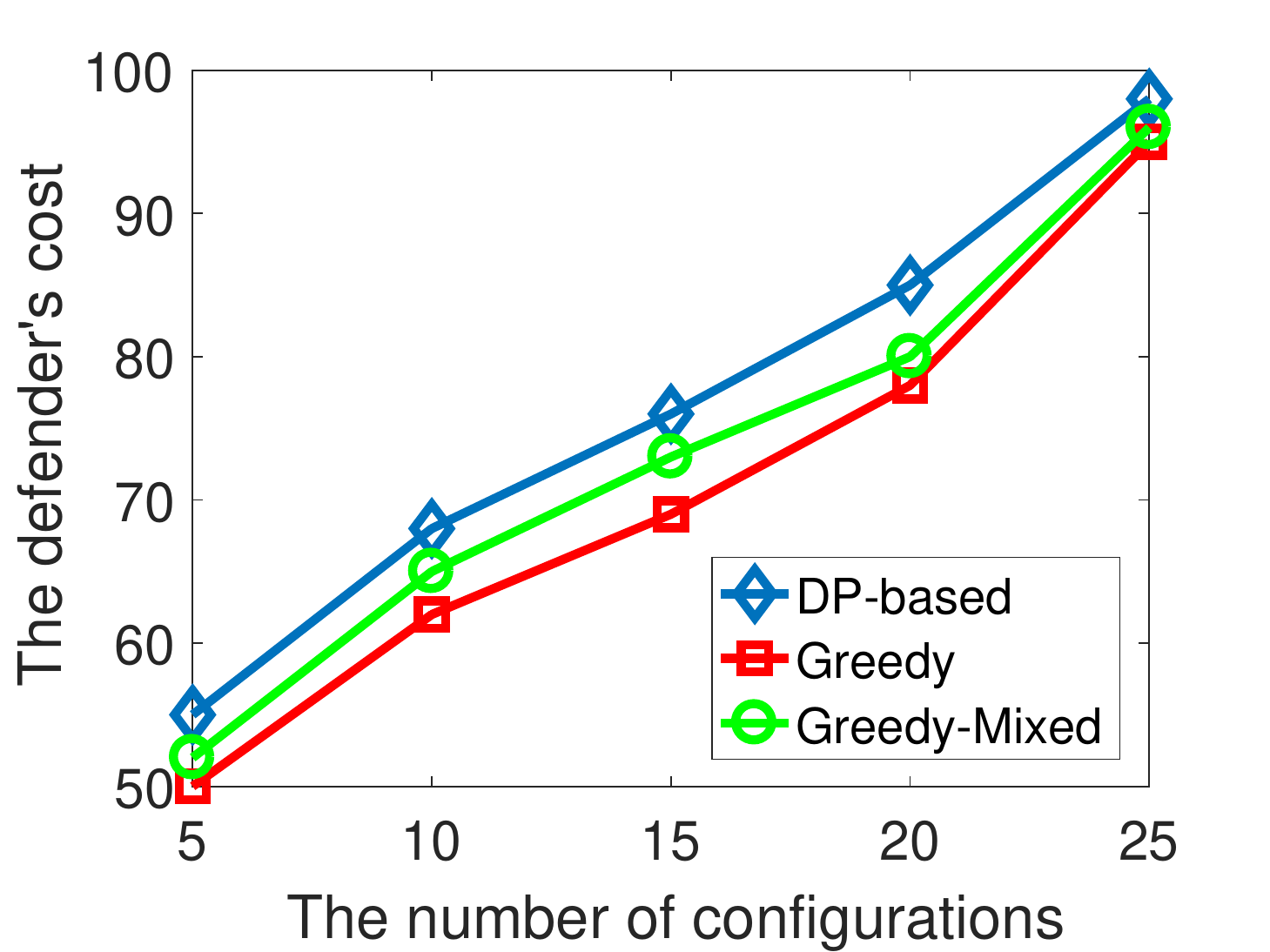}
			\label{fig:S2Cost}}\\[2ex]
    \end{minipage}
    \vspace{-4mm}
	\caption{The three approaches' performance in Scenario 2}
\vspace{-2mm}
	\label{fig:S2}
\end{figure}

Fig. \ref{fig:S2} demonstrates the performance of the three approaches in Scenario 2.
The number of systems is fixed at $150$,
and the number of configurations varies from $5$ to $25$.
The privacy budget $\epsilon$ is fixed at $0.3$.
The cost budget $B_d$ is fixed at $1000$ for each round.

With the increase of the number of configurations,
the attacker's utility gain decreases while the defender's cost increases.
In Scenario 2, since the number of systems is fixed,
as the number of configurations increases, the average number of systems associated with each configuration decreases.
As discussed in Fig. \ref{fig:S1}, the attacker's utility gain is based on the number of systems associated with the attacked configuration.
Thus, the attacker's utility gain decreases.
Moreover, as the number of configurations increases, the defender is more likely to obfuscate a system.
Thus, the defender's cost increases.
For example, in our \emph{DP-based} approach, when there are two configurations: $1$ and $2$,
the defender may obfuscate a system from configuration $1$ to appear as $2$ with probability $0.5$.
However, when there are three configurations: $1$, $2$ and $3$,
the defender may obfuscate a system from configuration $1$ to appear as $2$ or $3$ with the same probability of $0.33$, or $0.66$ altogether.
This example shows that as the number of configurations increases,
the probability that the defender will obfuscate a system will increase.

\begin{figure}[ht]
\vspace{-6mm}
\centering
	\begin{minipage}{0.51\textwidth}
    \subfigure[\scriptsize{The attacker's utility as game progresses in \emph{DP-based}}]{
    \includegraphics[width=0.45\textwidth, height=3cm]{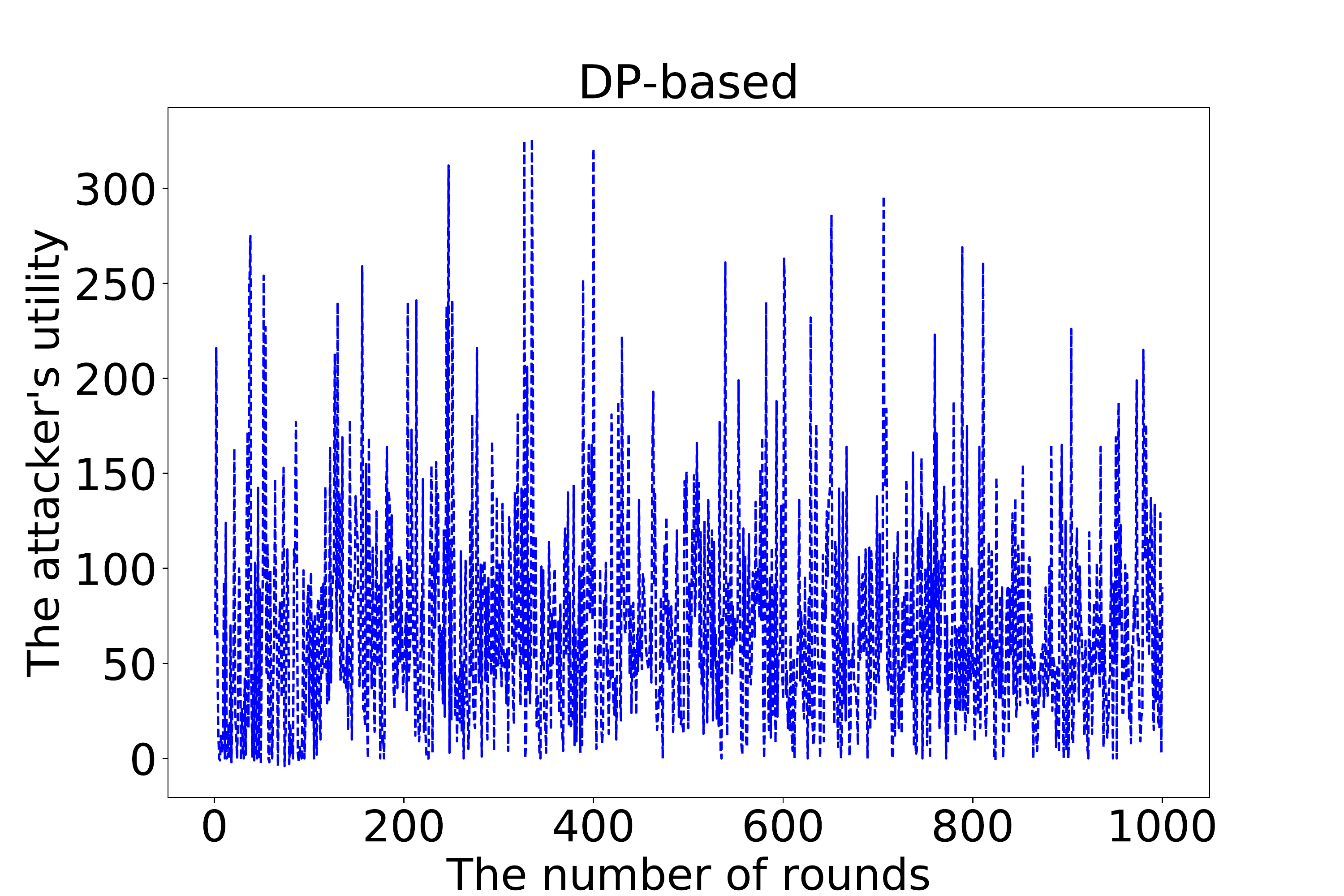}
			\label{fig:S2DPUtility}}
	\subfigure[\scriptsize{The defender's cost as game progresses in \emph{DP-based}}]{
    \includegraphics[width=0.45\textwidth, height=3cm]{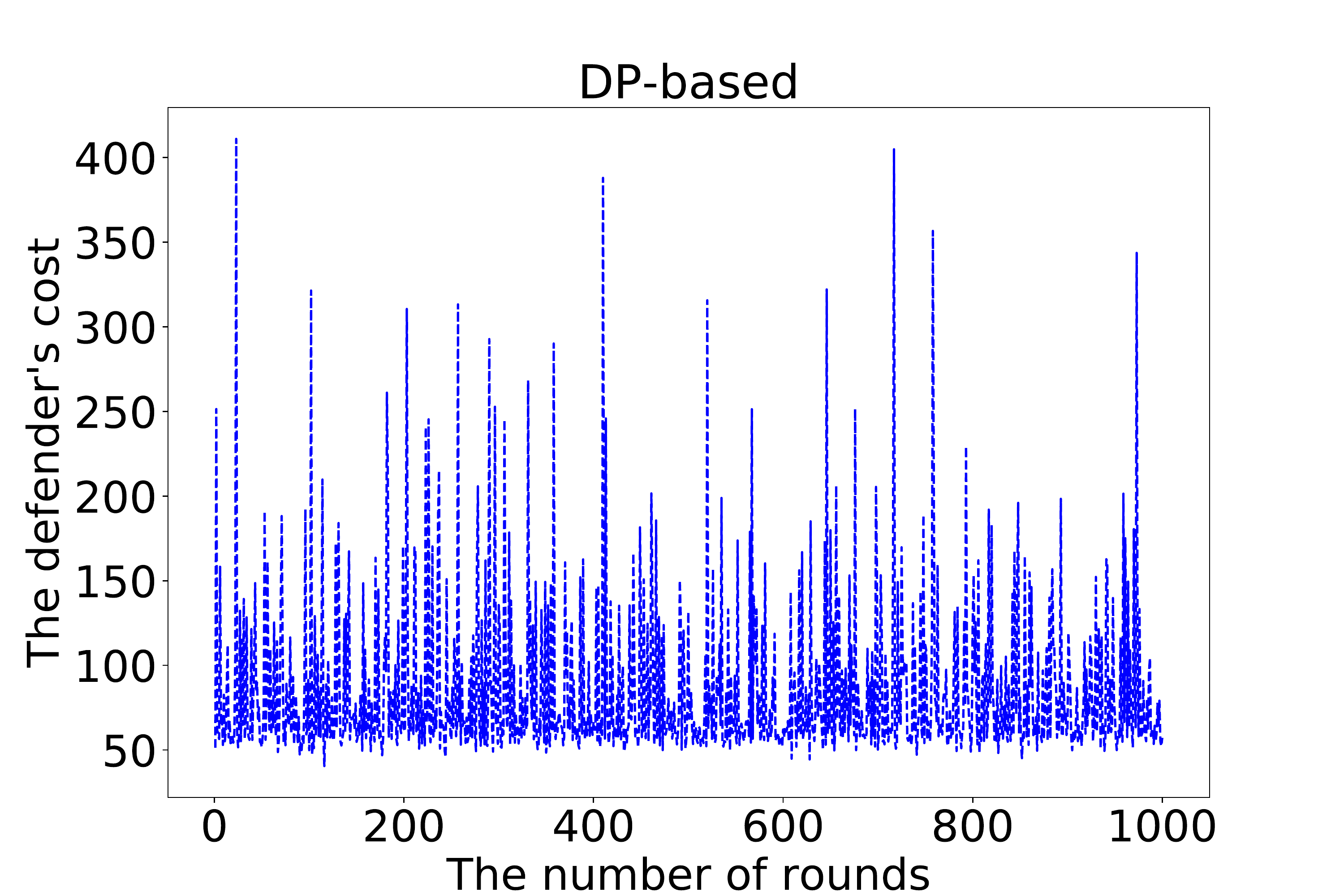}
			\label{fig:S2DPCost}}\\[2ex]
			
    \subfigure[\scriptsize{The attacker's utility as game progresses in \emph{Greedy}}]{
    \includegraphics[width=0.45\textwidth, height=3cm]{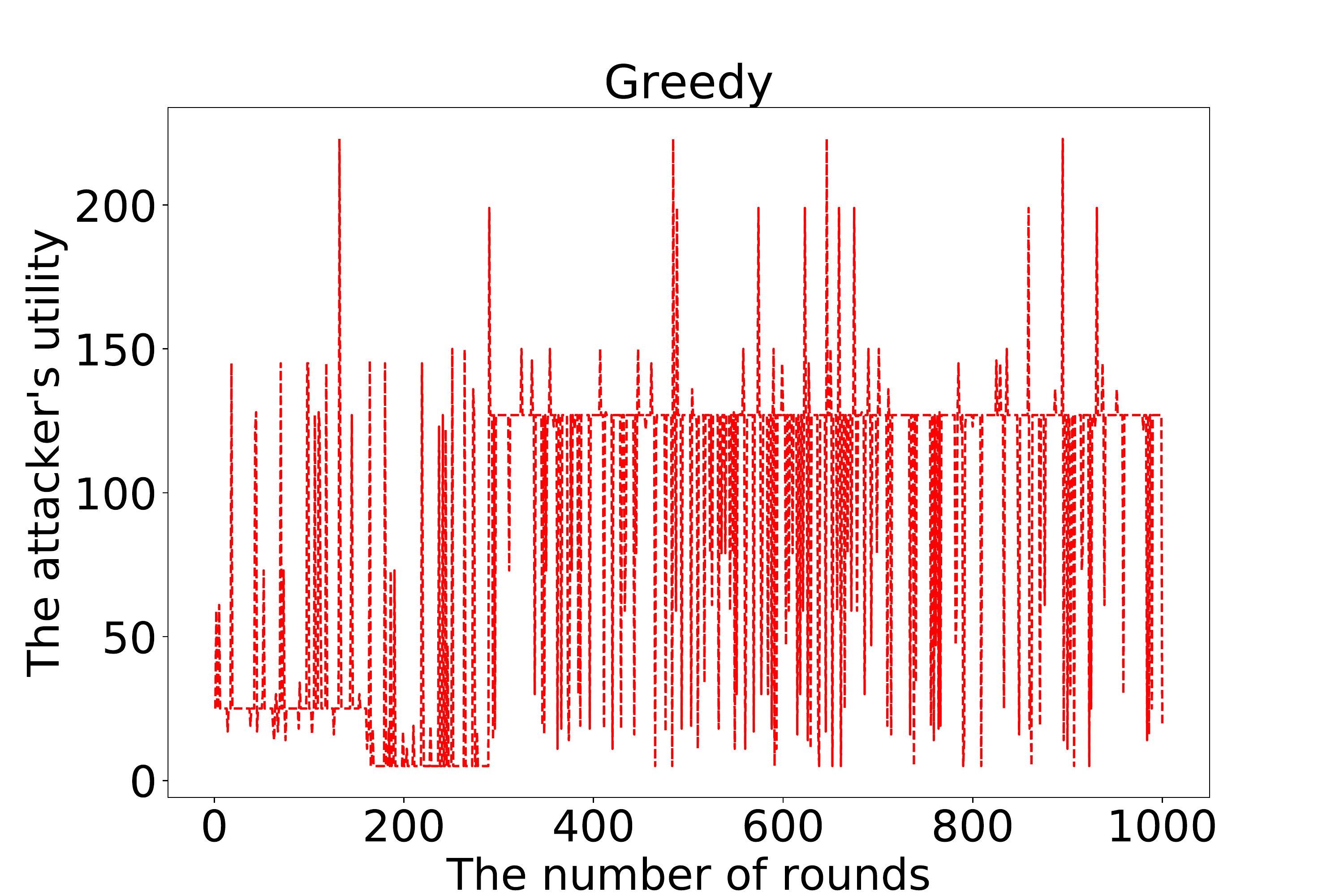}
			\label{fig:S2GreedyUtility}}
	\subfigure[\scriptsize{The defender's cost as game progresses in \emph{Greedy}}]{
    \includegraphics[width=0.45\textwidth, height=3cm]{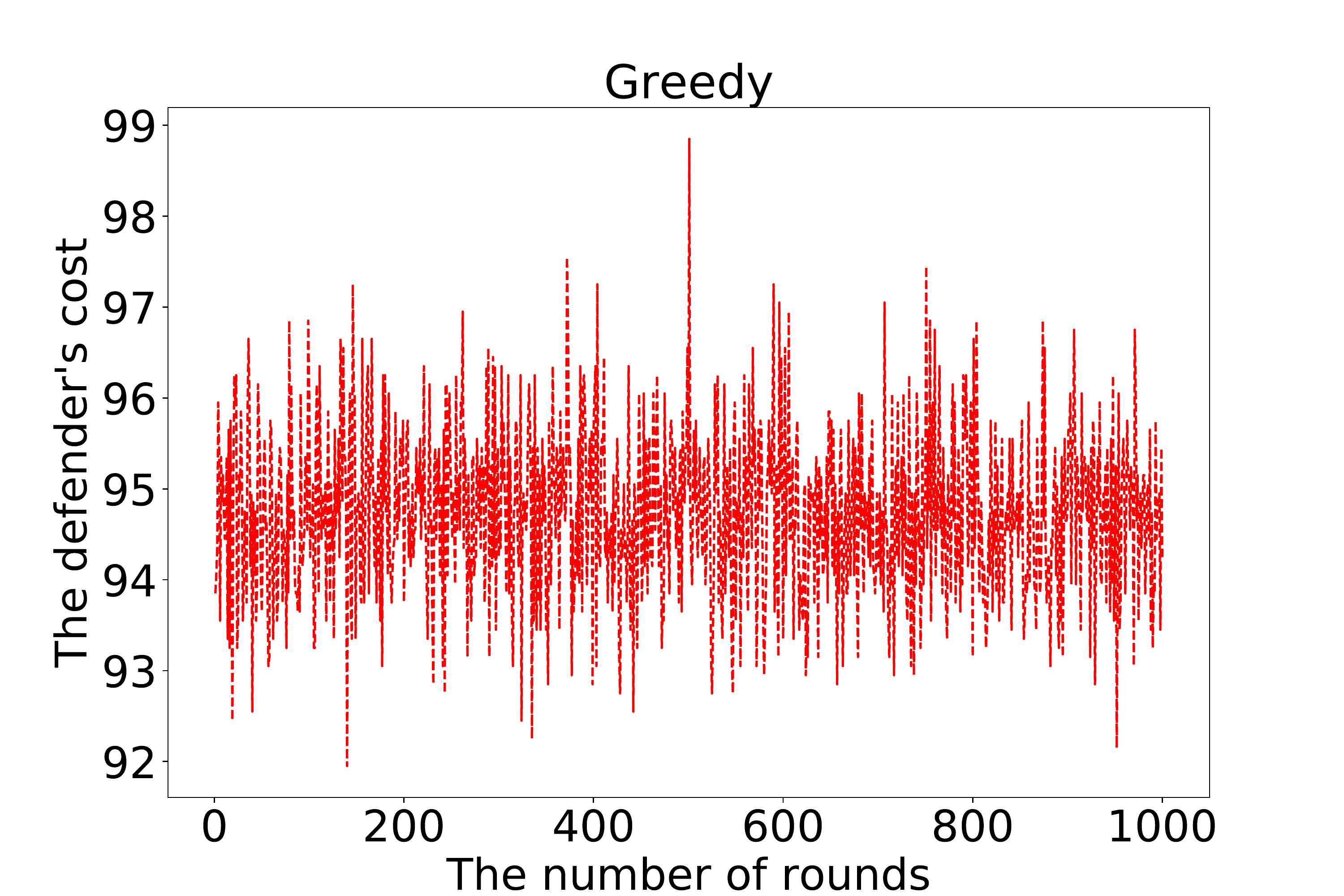}
			\label{fig:S2GreedyCost}}\\[2ex]
			
    \subfigure[\scriptsize{The attacker's utility as game progresses in \emph{Greedy-Mixed}}]{
    \includegraphics[width=0.45\textwidth, height=3cm]{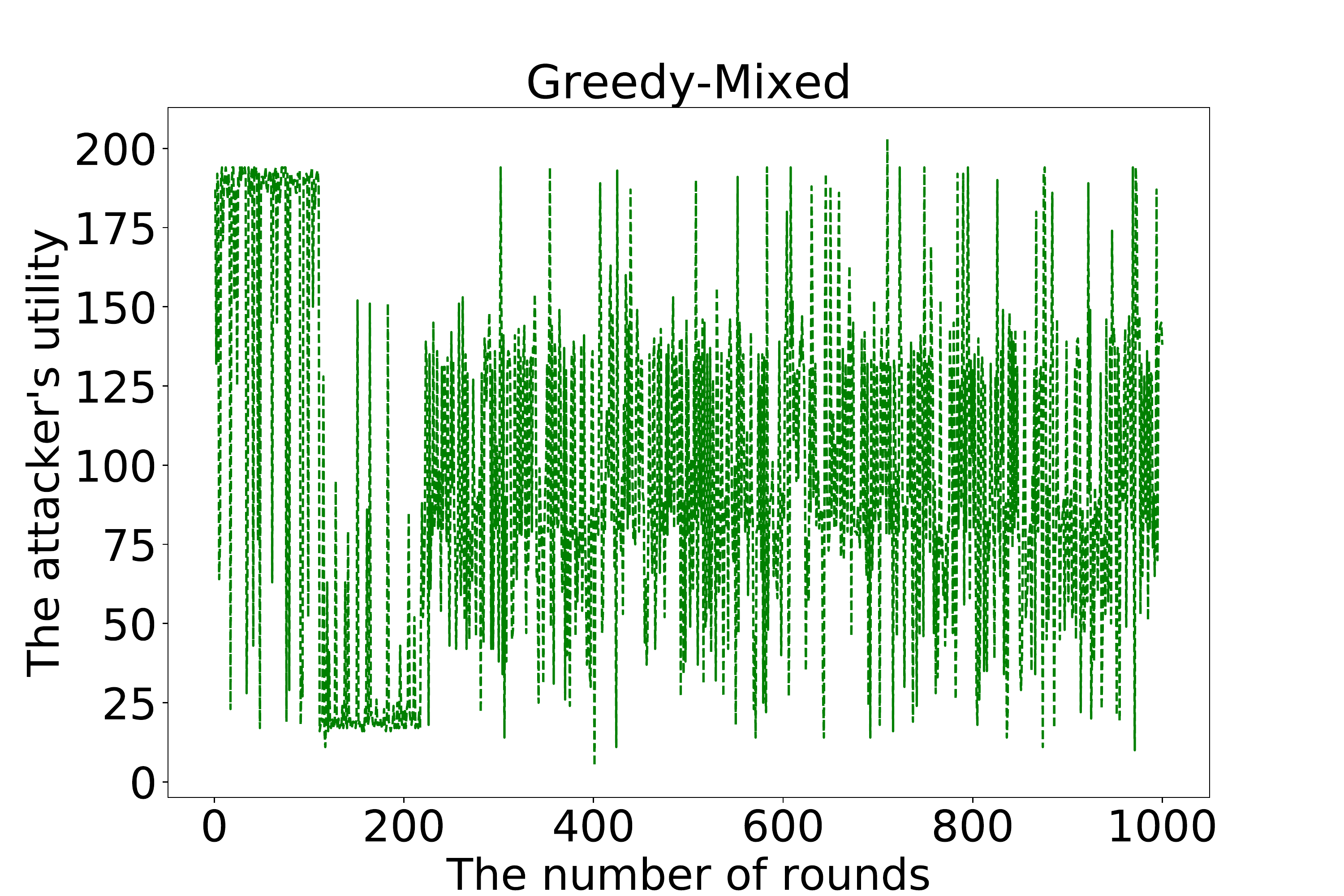}
			\label{fig:S2GreedyMixedUtility}}
    \subfigure[\scriptsize{The defender's cost as game progresses in \emph{Greedy-Mixed}}]{
    \includegraphics[width=0.45\textwidth, height=3cm]{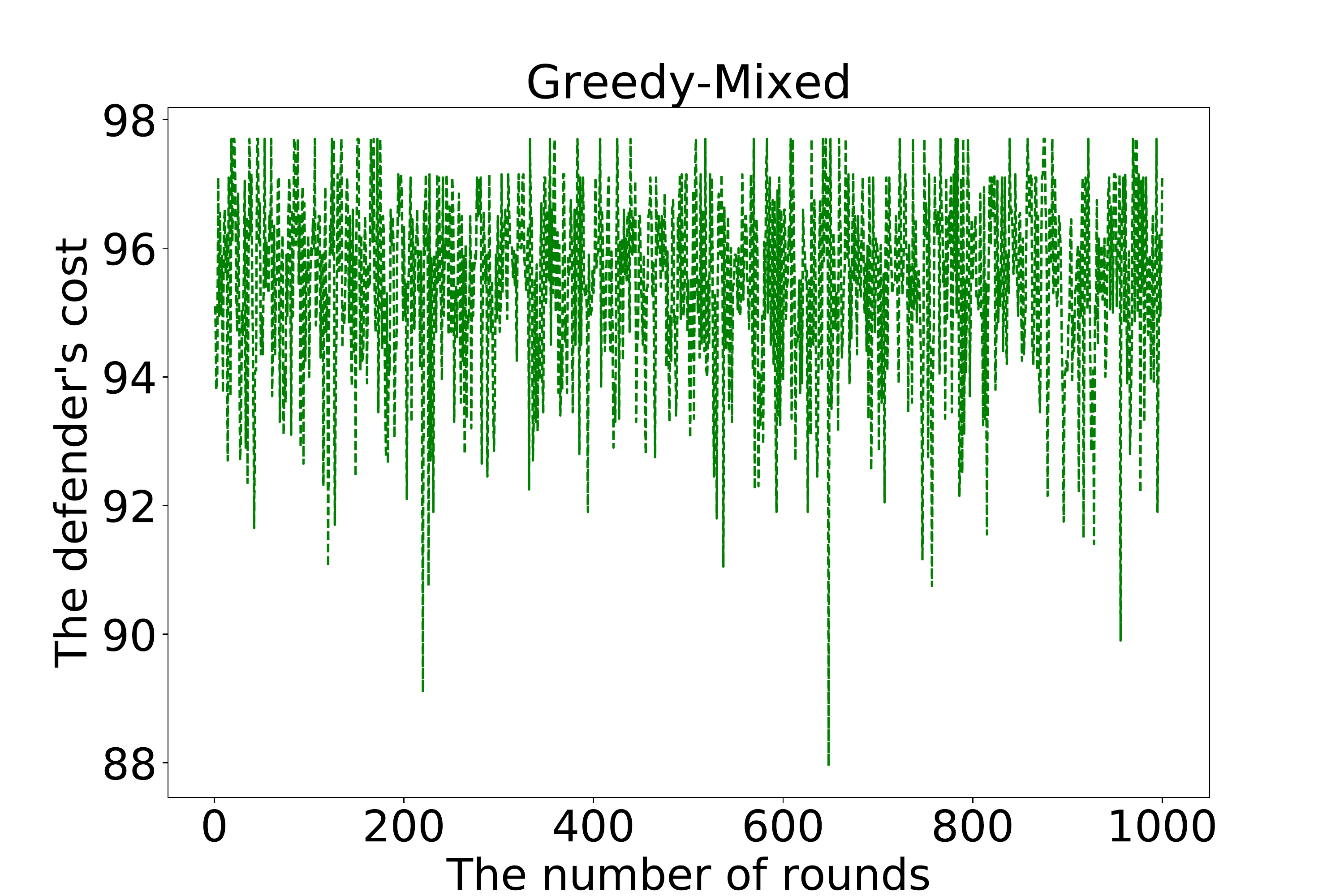}
			\label{fig:S2GreedyMixedCost}}\\
    \end{minipage}
    \vspace{-4mm}
	\caption{The three approaches' performance as game progress in Scenario 2}
\vspace{-2mm}
	\label{fig:S2Round}
\end{figure}
Fig. \ref{fig:S2Round} shows the variation in the attacker's utility gain
and the defender's cost in the three approaches as the game progresses in Scenario 2.
The number of systems is fixed at $150$,
and the number of configurations is fixed at $25$.
Fig. \ref{fig:S2Round} has a similar trend to Fig. \ref{fig:S1Round},
because Scenario 2 has a similar setting to Scenario 1,
where budget $B_d$ is large enough to cover the whole obfuscation process.

\subsubsection{Scenario 3}
\begin{figure}[ht]
\vspace{-4mm}
\centering
	\begin{minipage}{0.45\textwidth}
   \subfigure[\scriptsize{The attacker's utility with different values of budget $B_d$}]{
    \includegraphics[width=0.45\textwidth, height=3cm]{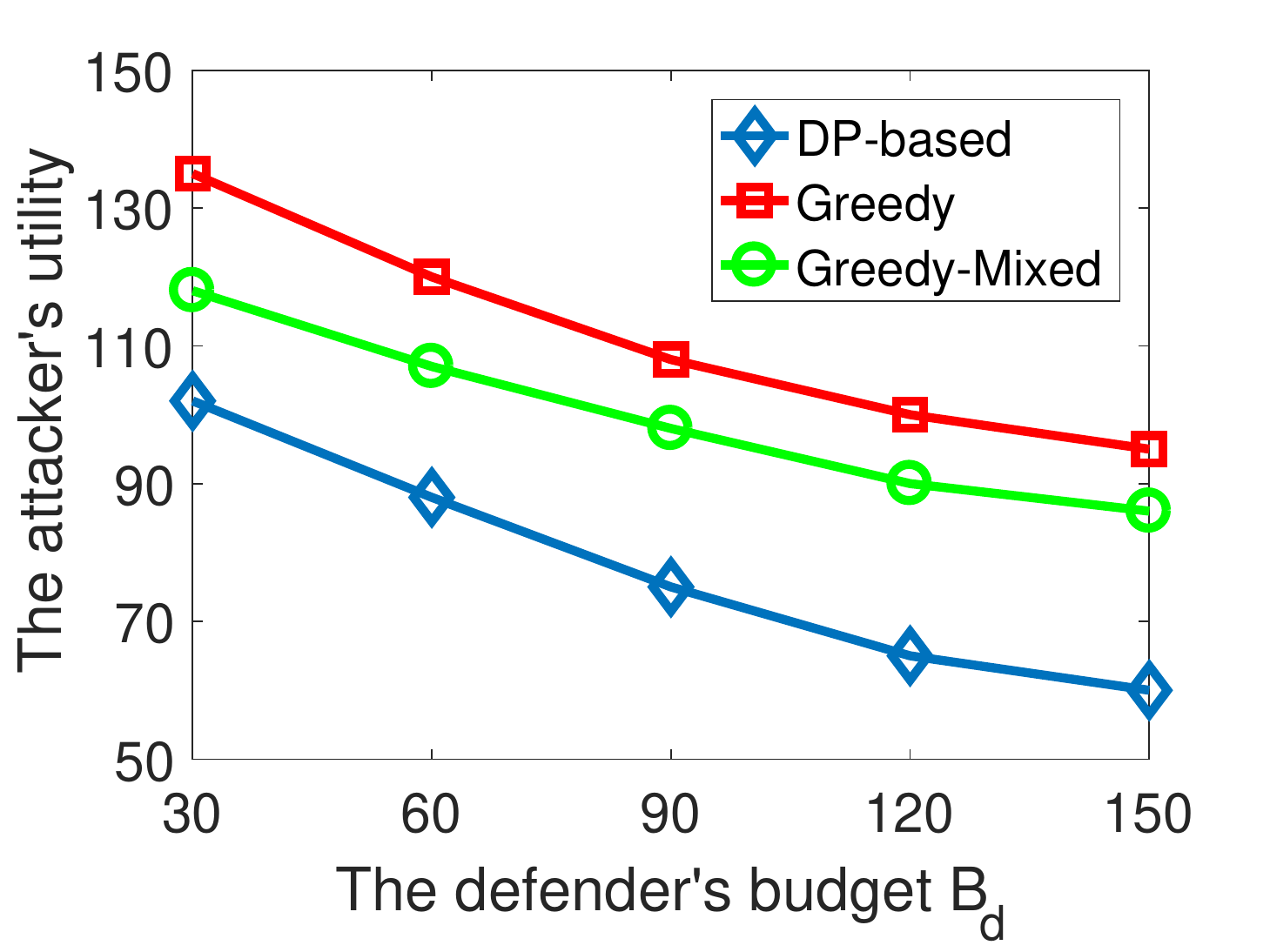}
			\label{fig:S3Utility}}
    \subfigure[\scriptsize{The defender's cost with different values of budget $B_d$}]{
    \includegraphics[width=0.45\textwidth, height=3cm]{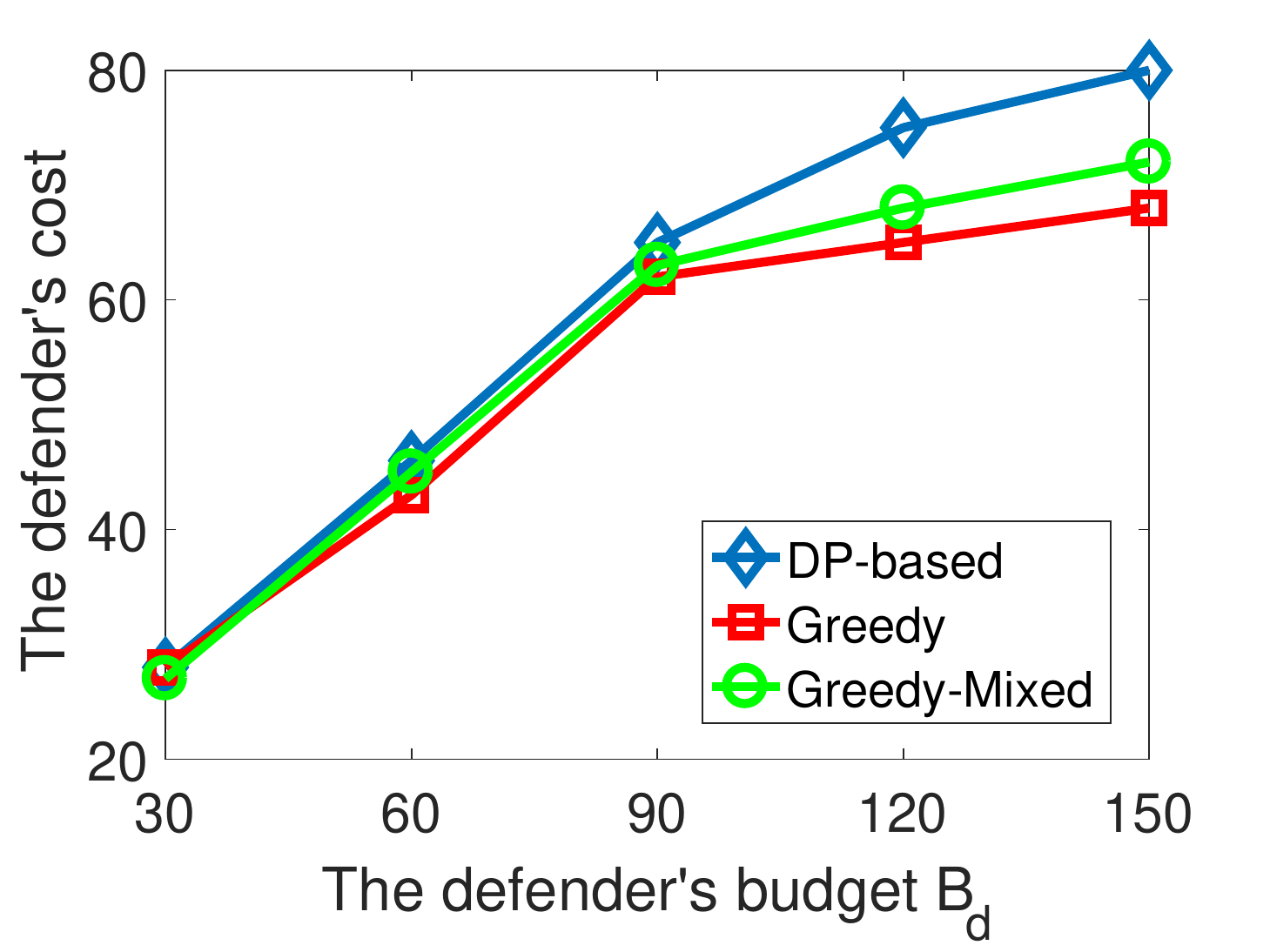}
			\label{fig:S3Cost}}\\[2ex]
    \end{minipage}
    \vspace{-4mm}
	\caption{The three approaches' performance in Scenario 3}
\vspace{-4mm}
	\label{fig:S3}
\end{figure}

Fig. \ref{fig:S3} demonstrates the performance of the three approaches in Scenario 3.
The numbers of systems and configurations are fixed at $100$ and $10$, respectively.
The privacy budget $\epsilon$ is fixed at $0.3$.
The cost budget $B_d$ for each round varies from $30$ to $150$.

When the defender's budget $B_d$ is tight (less than $90$),
the defender's cost can be properly controlled (Fig. \ref{fig:S3Cost}).
However, the attacker will gain high utility (Fig. \ref{fig:S3Utility}).
Due to the budget limitation (less than $90$), the defender obfuscates only a few systems,
meaning that the attacker can easily select high-utility systems.
By contrast, when the defender's budget $B_d$ is large enough (greater than $90$),
she can obfuscate almost all necessary systems,
which increases the difficulty experienced by the attacker when attempting to select high-utility systems.

\begin{figure}[ht]
\vspace{-4mm}
\centering
	\begin{minipage}{0.45\textwidth}
   \subfigure[\scriptsize{The attacker's utility as game progresses in \emph{DP-based}, $B_d=60$}]{
    \includegraphics[width=0.45\textwidth, height=3cm]{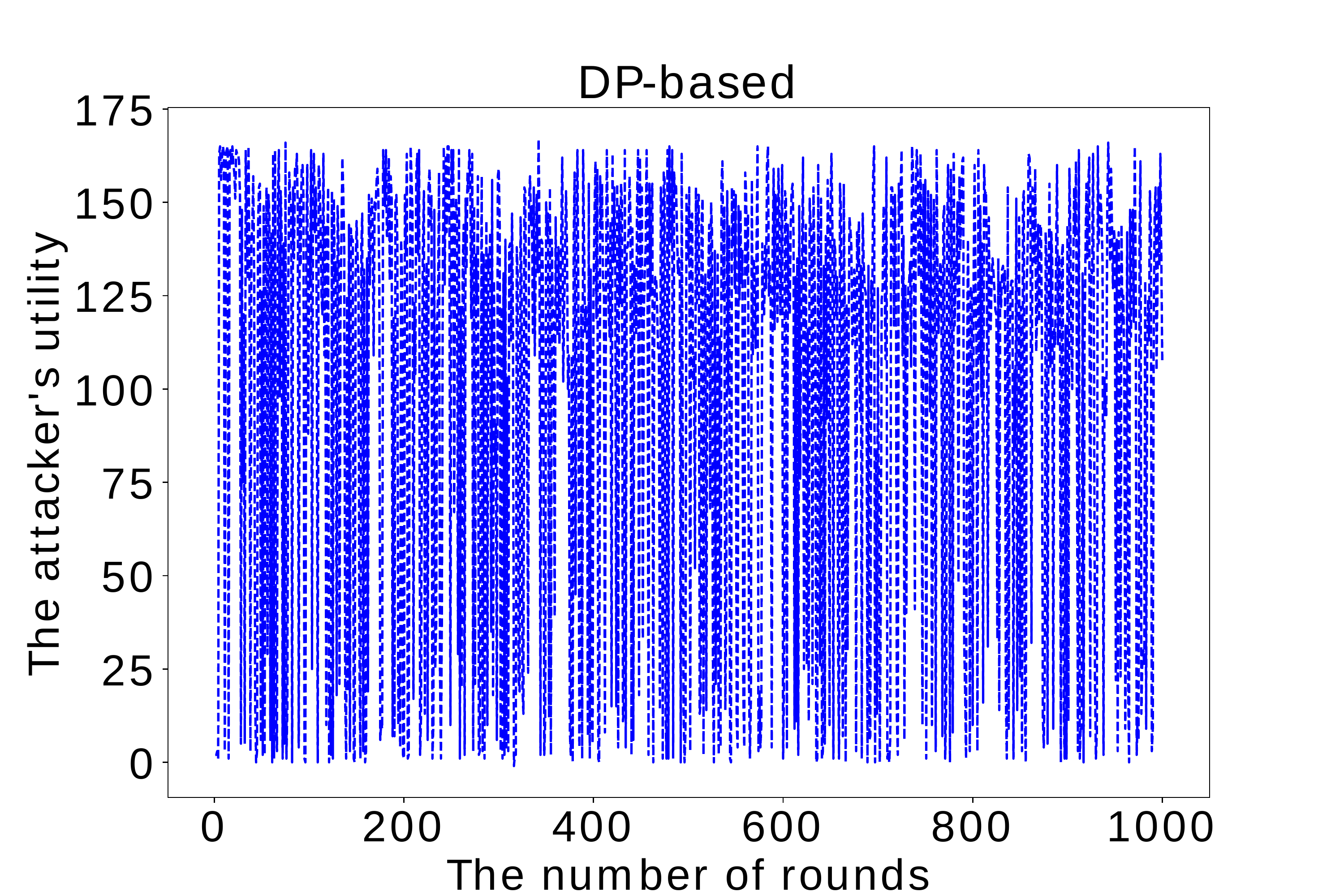}
			\label{fig:S3Utility60}}
    \subfigure[\scriptsize{The attacker's utility as game progresses in \emph{Greedy}, $B_d=150$}]{
    \includegraphics[width=0.45\textwidth, height=3cm]{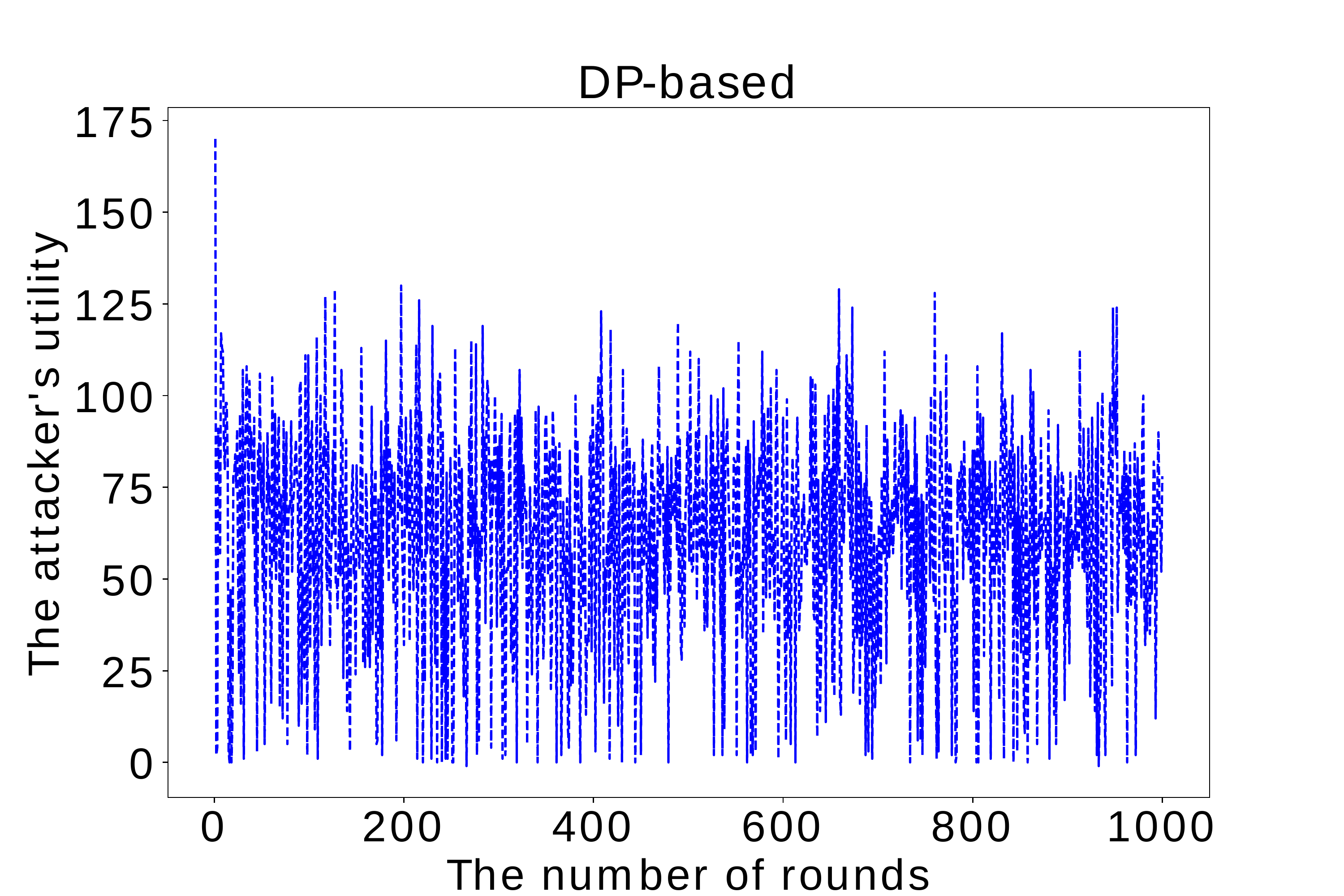}
			\label{fig:S3Utility150}}\\[2ex]

    \subfigure[\scriptsize{The defender's cost as game progresses in \emph{DP-based}, $B_d=60$}]{
    \includegraphics[width=0.45\textwidth, height=3cm]{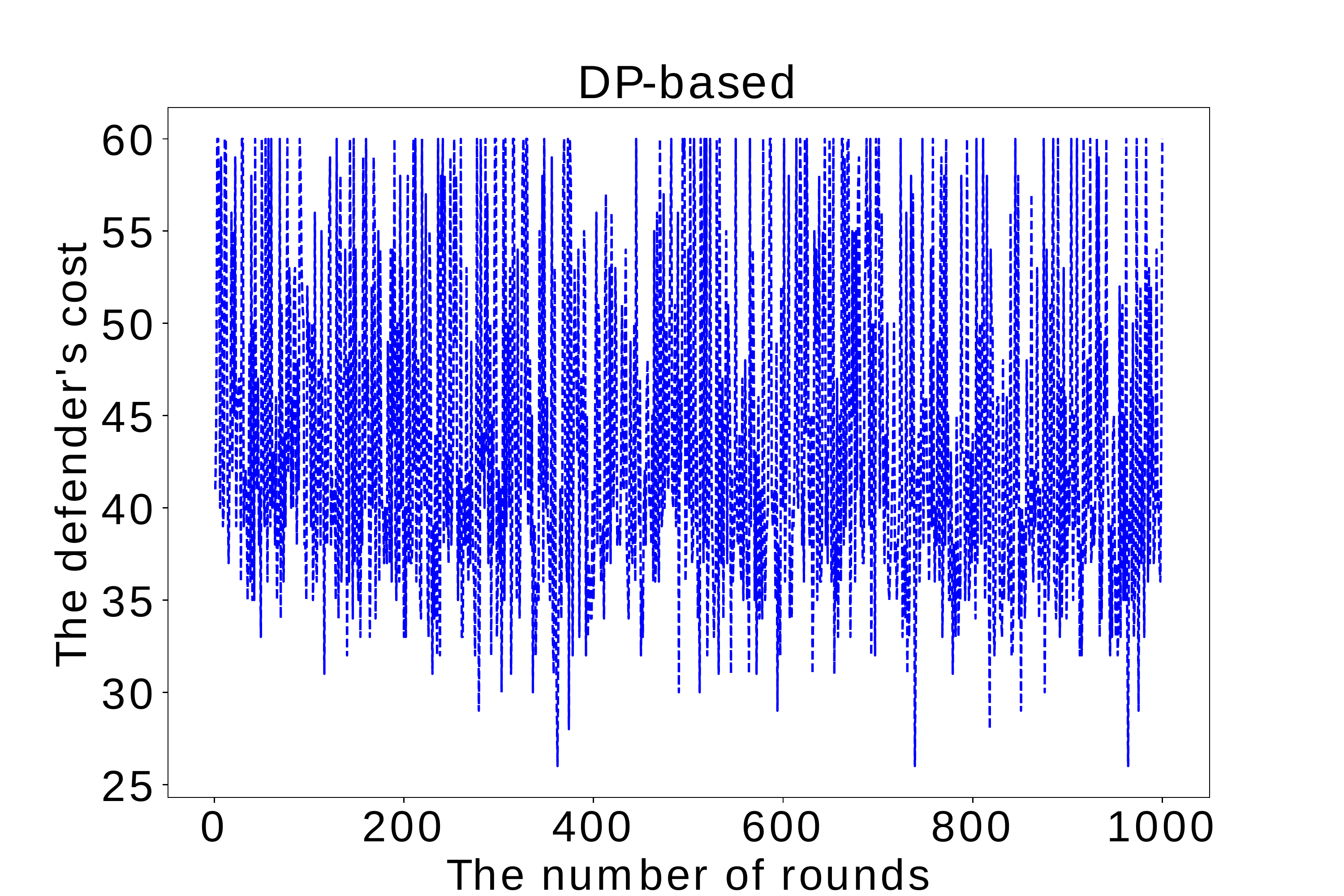}
			\label{fig:S3Cost60}}
    \subfigure[\scriptsize{The defender's cost as game progresses in \emph{Greedy}, $B_d=150$}]{
    \includegraphics[width=0.45\textwidth, height=3cm]{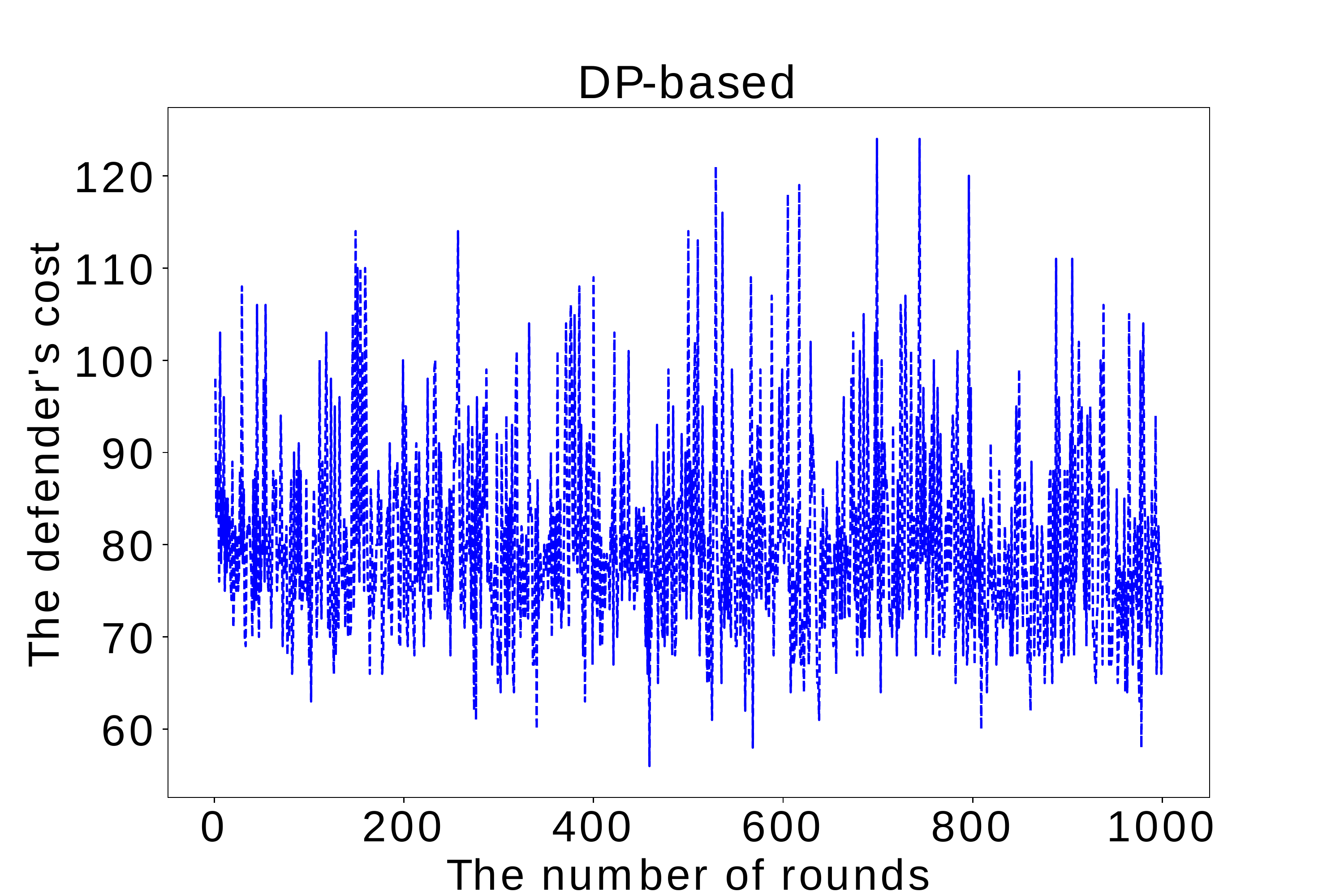}
			\label{fig:S3Cost150}}\\[2ex]
    \end{minipage}
    \vspace{-4mm}
	\caption{Performance of the \emph{DP-based} approach with different cost budgets}
\vspace{-2mm}
	\label{fig:S3Budget}
\end{figure}
Fig. \ref{fig:S3Budget} shows the details of the situations of $B_d=60$ and $B_d=150$ in our \emph{DP-based} approach.
The details in the \emph{Greedy} and \emph{Greedy-Mixed} approaches are similar to the \emph{DP-based} approach,
which thus are not presented.

By comparing Fig. \ref{fig:S3Utility60} and Fig. \ref{fig:S3Utility150},
the attacker's utility significantly reduces from $88$ to $57$ on average,
when increasing the defender's budget from $60$ to $150$.
Moreover, in Fig. \ref{fig:S3Cost60}, the defender's cost is confined to $60$,
while in Fig. \ref{fig:S3Cost150}, there is no confinement on the defender's cost.
This is because budget $B_d=60$ is insufficient for the defender to obfuscate all the necessary systems in most rounds.
Therefore, budget $B_d=60$ becomes a confinement for the defender.
When the budget is increased to $150$,
budget $B_d=150$ is enough to cover the obfuscation of all the necessary systems.
Thus, the confinement disappears.

\subsubsection{Scenario 4}
\begin{figure}[ht]
\vspace{-4mm}
\centering
	\begin{minipage}{0.45\textwidth}
   \subfigure[\scriptsize{The attacker's utility with different values of privacy budget $\epsilon$}]{
    \includegraphics[width=0.45\textwidth, height=3cm]{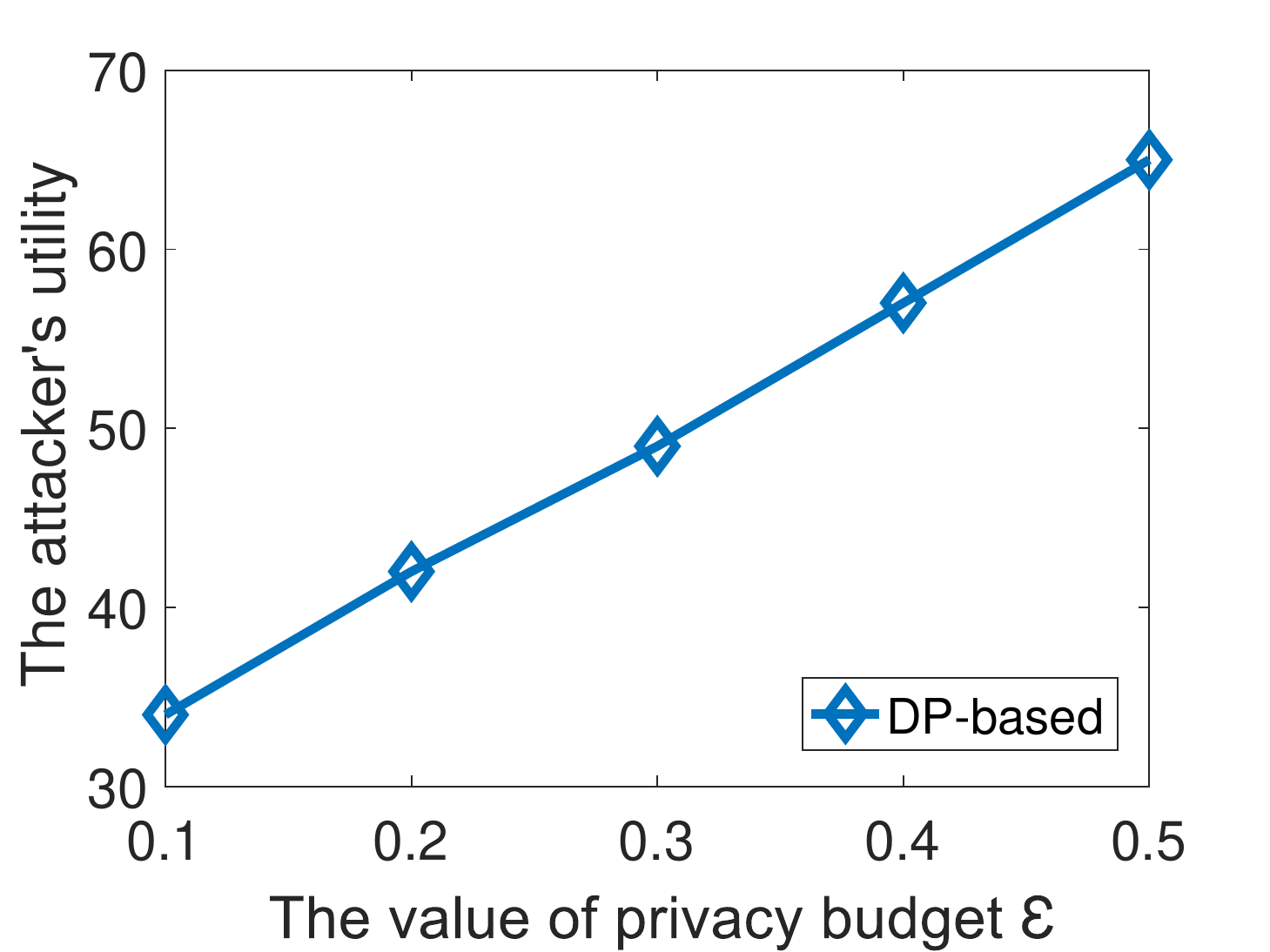}
			\label{fig:S4Utility}}
    \subfigure[\scriptsize{The defender's cost with different values of privacy budget $\epsilon$}]{
    \includegraphics[width=0.45\textwidth, height=3cm]{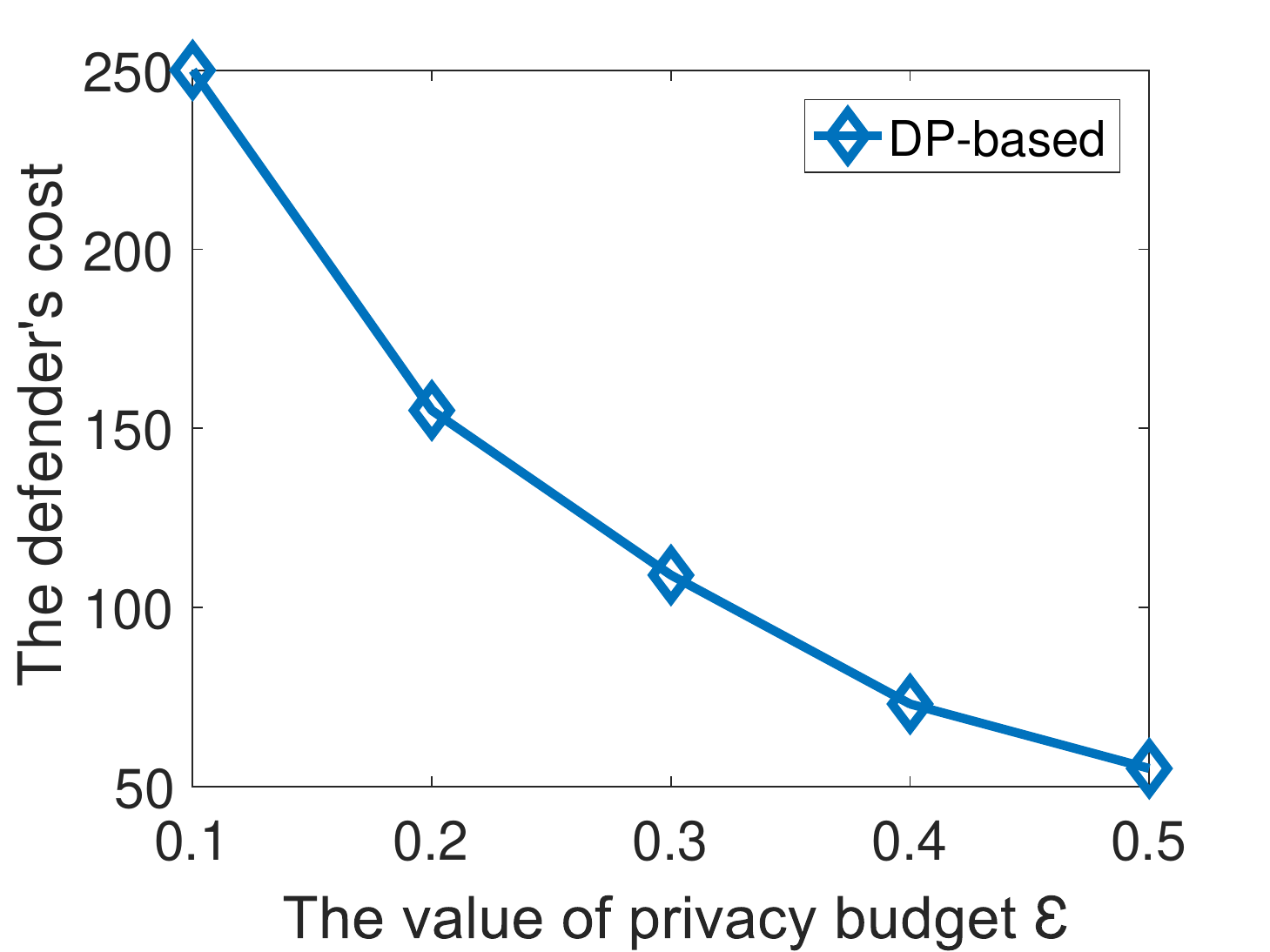}
			\label{fig:S4Cost}}\\[2ex]
    \end{minipage}
    \vspace{-4mm}
	\caption{Our approach's performance in Scenario 4}
\vspace{-2mm}
	\label{fig:S4}
\end{figure}

Fig. \ref{fig:S4} demonstrates the performance of our approach in Scenario 4.
The number of systems is fixed at $150$,
and the number of configurations is fixed at $15$.
The privacy budget $\epsilon$ various from $0.1$ to $0.5$.
The cost budget $B_d$ is fixed at $1000$ for each round.

According to Definition \ref{Def-LA} in Sub-section \ref{sub:DP}, a smaller $\epsilon$ denotes a larger Laplace noise.
In Line 5 of Algorithm \ref{alg:DP}, the number of systems associated with each configuration is adjusted by adding Laplace noise.
Therefore, a larger Laplace noise implies a larger change in the number of systems.
On one hand, if the Laplace noise is positive,
the number of systems increases ($|N'|>|N|$).
The defender thus incurs extra cost to deploy the added systems, i.e., honeypots.
These honeypots will attract the attacker and decrease his utility gain.
On the other hand, if the Laplace noise is negative,
the number of systems decreases ($|N'|<|N|$).
The defender sacrifices some utility to take $|N|-|N'|$ systems offline.
These systems can avoid the attacker,
and the attacker's utility gain reduces.

\begin{figure}[ht]
\vspace{-4mm}
\centering
	\begin{minipage}{0.45\textwidth}
   \subfigure[\scriptsize{The attacker's utility as game progresses}]{
    \includegraphics[width=0.45\textwidth, height=3cm]{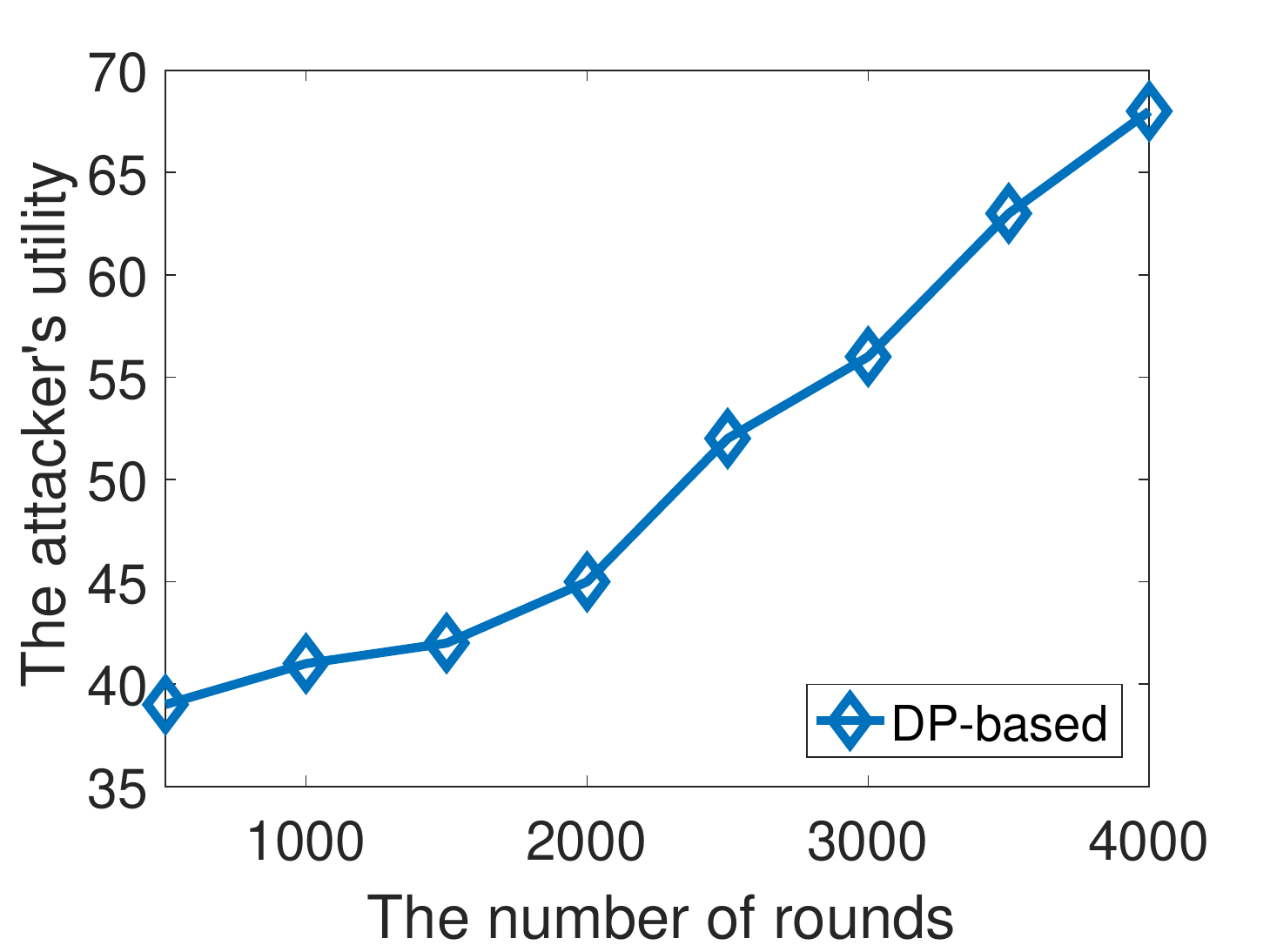}
			\label{fig:S4UtilityRounds}}
    \subfigure[\scriptsize{The defender's cost as game progresses}]{
    \includegraphics[width=0.45\textwidth, height=3cm]{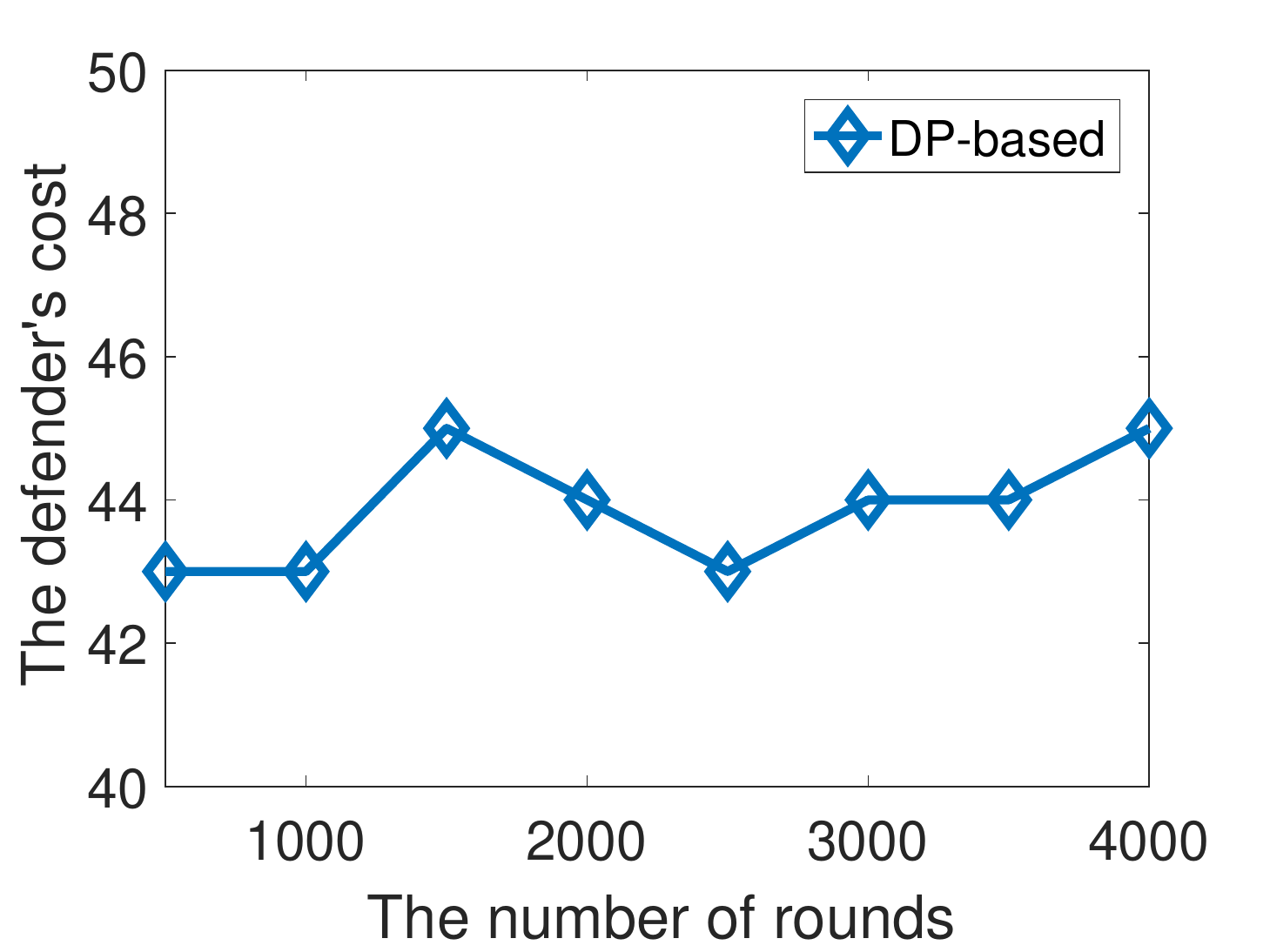}
			\label{fig:S4CostRounds}}\\[2ex]
    \end{minipage}
    \vspace{-4mm}
	\caption{Our approach with different number of game rounds in Scenario 4}
\vspace{-2mm}
	\label{fig:S4Rounds}
\end{figure}
Fig. \ref{fig:S4Rounds} demonstrates the performance of our approach with different number of game rounds,
where the privacy budget $\epsilon$ is fixed at $0.3$.
Based on the discussion in Theorem \ref{thm:bound}, to guarantee a given privacy level,
an upper bound of the number of rounds must be set.
In Fig. \ref{fig:S4UtilityRounds}, with the increasing number of rounds,
the attacker's utility gain increases gradually,
especially after $2000$ rounds.
This implies that when the game is played in a large number of rounds,
the attacker can infer the true configurations of systems. 
This experimental result empirically prove our theoretical result.
In Fig. \ref{fig:S4CostRounds}, the defender's cost is not affected by the number of rounds,
since both the number of systems and configurations is fixed.

\subsection{Summary}
According to the experimental results, due to the adoption of the differential privacy technique,
our \emph{DP-based} approach outperforms the \emph{Greedy} and \emph{Greedy-mixed} approaches in various scenarios
by significantly reducing the attacker's utility gain by about $30\%\sim 40\%$.
In terms of the overhead resulting from adopting the differential privacy technique,
the defender in our \emph{DP-based} approach incurs about $5\%\sim 8\%$ more cost than in the \emph{Greedy} and \emph{Greedy-mixed} approaches.
Moreover, the privacy budget $\epsilon$ can be used to balance the attacker's utility gain and the defender's cost.
A small value of $\epsilon$ means a low utility gain for the attacker, but a high cost to the defender.
The setting of the $\epsilon$ value is left to users.

Furthermore, the running times of the three approaches are almost the same.
Thus, they are not shown graphically in the experiments.
Based on the theoretical analysis in Section \ref{sec:analysis}, the complexity of our \emph{DP-based} approach is $O(|N|^2)$,
where $|N|$ is the number of systems.
In the \emph{Greedy} and \emph{Greedy-mixed} approaches, there is a ranking process to rank the utility of systems.
Since we use bubble sort to implement the \emph{Greedy} and \emph{Greedy-mixed} approaches,
the complexity of both approaches is $O(|N|^2)$
which is the same as our \emph{DP-based} approach.


\vspace{-1mm}
\section{Conclusion and Future Work}\label{sec:conclusion}
This paper proposes a novel differentially private game theoretic approach to cyber deception.
Our approach is the first to adopt the differential privacy technique,
which can efficiently handle any change in the number of systems
and complex strategies potentially adopted by an attacker.
Compared to the benchmark approaches,
our approach is more stable and achieves much better performance
in various scenarios with slightly higher cost.

In the future, as mentioned in Section \ref{sec:analysis},
we will deeply investigate the equilibria in highly dynamic security games.
We also intend to improve our approach by introducing multiple defenders and multiple attackers.
In this paper, we consider only one defender and one attacker.
Future games will be very interesting when multiple defenders and attackers are involved.
Moreover, we will develop a general prototype of our approach to allow the state of the art to progress faster.
Specifically, to develop a prototype, we intend to follow Jajodia et al.’s work \cite{Jajodia17}.
We will first use Cauldron software \cite{Jajodia16} in a real network environment
to scan for vulnerabilities on each node in the network.
Then we will replicate the network into a larger network with the NS2 network simulator \cite{Issariyakul08}.
After that, our approach will be implemented on the simulated network.

\bibliographystyle{IEEEtran}
{\small \bibliography{references}}

\begin{IEEEbiography}[{\includegraphics[scale=0.25]{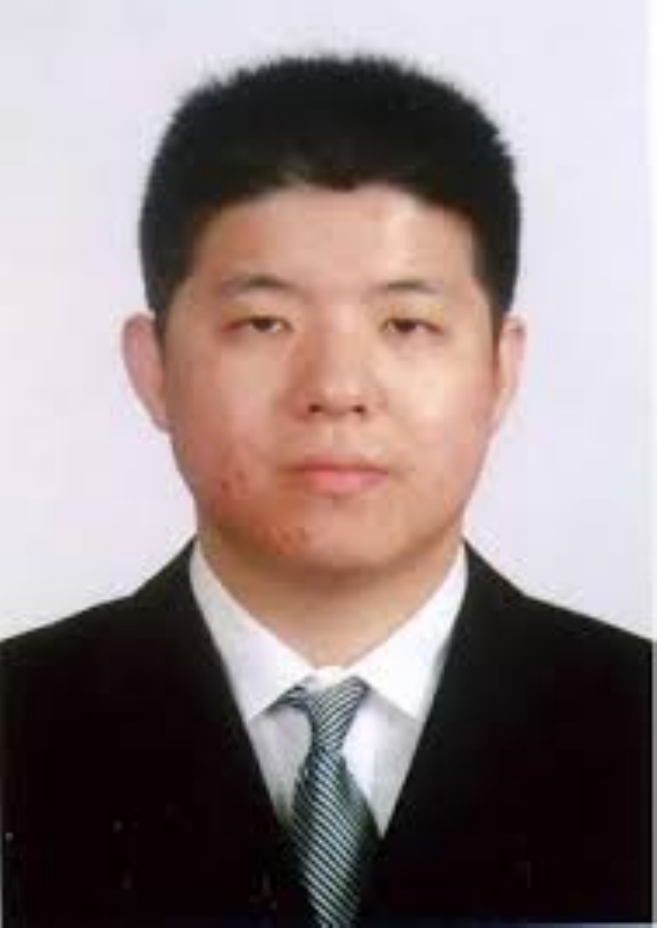}}]{Dayong Ye}
received his MSc and PhD degrees both from University of Wollongong, Australia, in 2009 and 2013, respectively. Now, he is a research fellow of Cyber-security at University of Technology, Sydney, Australia. His research interests focus on differential privacy, privacy preserving, and multi-agent systems.
\end{IEEEbiography}
\vfill
\vspace{-19mm}
\begin{IEEEbiography}[{\includegraphics[scale=0.67]{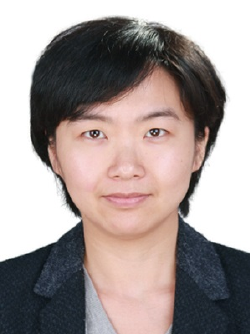}}]{A/Prof. Tianqing Zhu}
received the BEng and MEng degrees from Wuhan University, China, in 2000 and 2004, respectively, and the PhD degree in computer science from Deakin University, Australia, in 2014. Dr Tianqing Zhu is currently an associate professor at the School of Computer Science in University of Technology Sydney, Australia. 
Her research interests include privacy preserving, data mining and network security.
\end{IEEEbiography}
\vfill
\vspace{-19mm}
\begin{IEEEbiography}[{\includegraphics[scale=0.3]{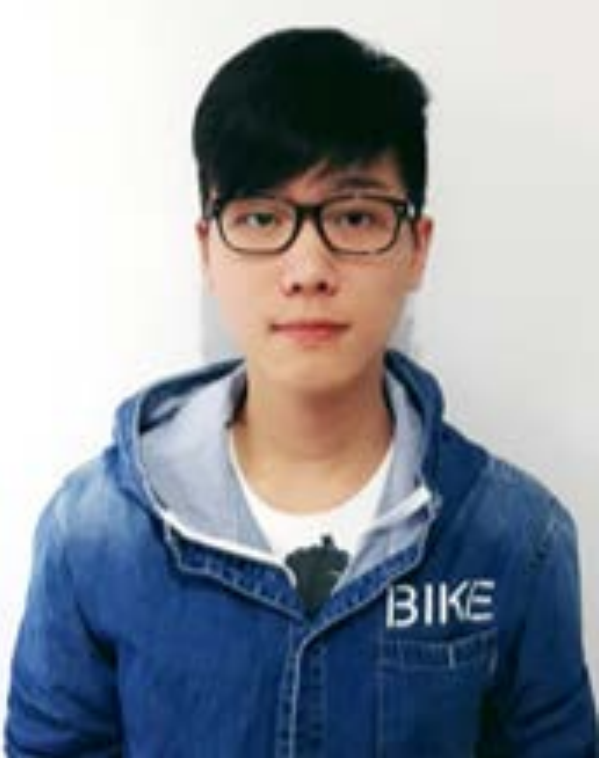}}]{Sheng Shen}
is pursuing his PhD at the School of Computer Science, University of Technology Sydney. He received the Bachelor of Engineering (Honors) degree in Information and Communication Technology from the University of Technology Sydney in 2017, and Master of Information Technology degree in University of Sydney in 2018. His current research interests include data privacy preserving, differential privacy and federated learning.
\end{IEEEbiography}
\vfill
\vspace{-19mm}
\begin{IEEEbiography}[{\includegraphics[scale=0.09]{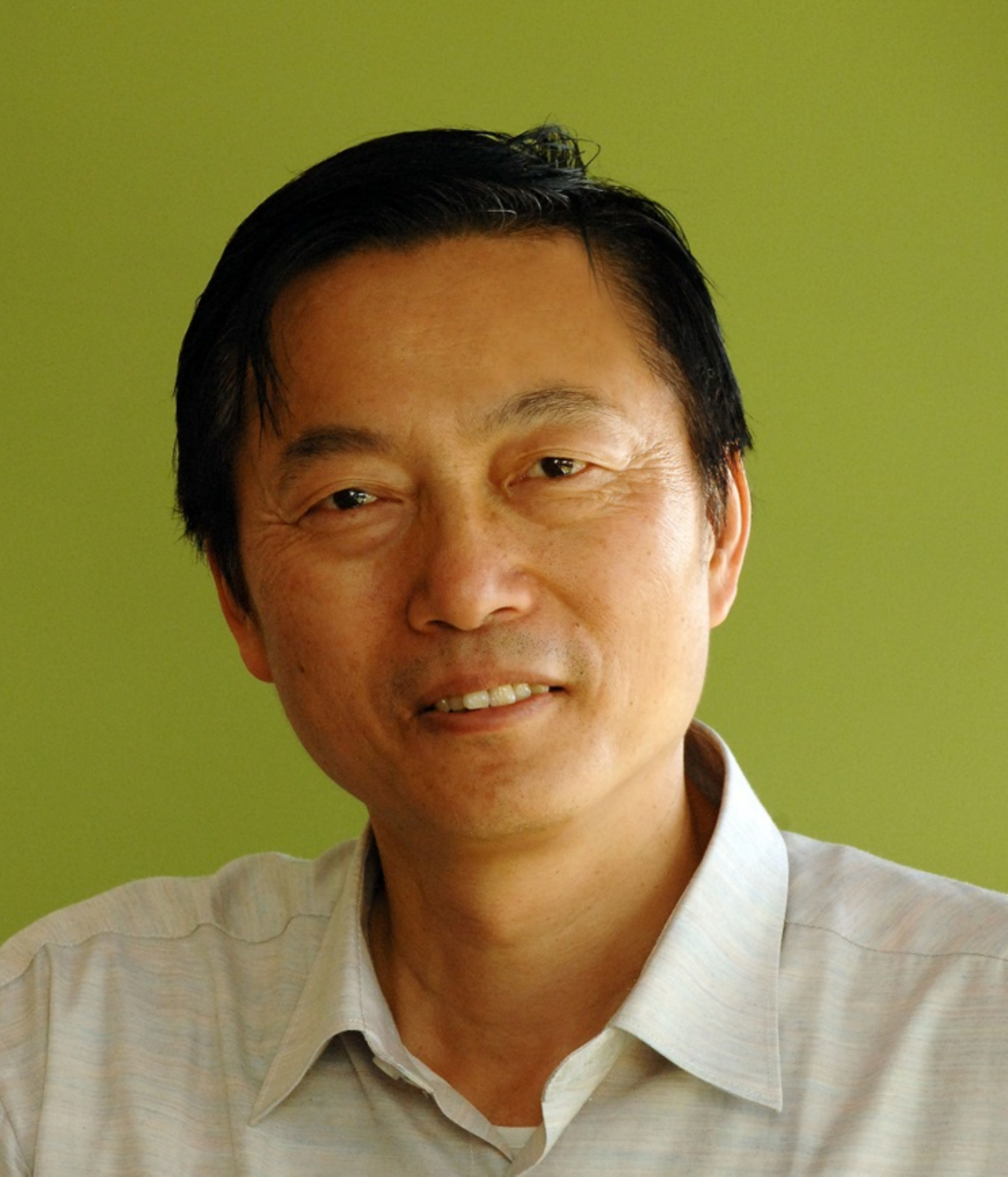}}]{Prof. Wanlei Zhou}
received the BEng and MEng degrees from Harbin Institute of Technology, Harbin, China in 1982 and 1984, respectively, and the PhD degree from The Australian National University, Canberra, Australia, in 1991, all in Computer Science and Engineering. He also received a DSc degree from Deakin University in 2002. He is currently the Head of School of School of Computer Science in University of Technology Sydney, Australia.
His research interests include distributed systems, network security, and privacy preserving.
\end{IEEEbiography}
\vfill

\end{document}